\newtheorem{theorem}{Theorem}[section] 
\newtheorem{lemma}{Lemma}[section]
\def\bfB{\boldsymbol{B}}
\def\ddefloop#1{\ifx\ddefloop#1\else\ddef{#1}\expandafter\ddefloop\fi}
\def\ddef#1{\expandafter\def\csname bb#1\endcsname{\ensuremath{\mathbb{#1}}}}
\def\ddef#1{\expandafter\def\csname c#1\endcsname{\ensuremath{\mathcal{#1}}}}
\def\ddef#1{\expandafter\def\csname v#1\endcsname{\ensuremath{\boldsymbol{#1}}}}
\def\ddef#1{\expandafter\def\csname v#1\endcsname{\ensuremath{\boldsymbol{\csname #1\endcsname}}}}
\newcommand{\quotes}[1]{``#1''}
\newcommand{\lrbrackets}[1]{\left(#1\right)}
\newcommand{\ceil}[1]{\lceil#1\rceil}
\newcommand\lp[1]{\Arrowvert #1 \Arrowvert} 
\newcommand{\nosemic}{\renewcommand{\@endalgocfline}{\relax}}
\newcommand{\dosemic}{\renewcommand{\@endalgocfline}{\algocf@endline}}
\let\oldnl\nl
\newcommand{\nonl}{\renewcommand{\nl}{\let\nl\oldnl}}
\newcommand{\ting}[1]{\textcolor{red}{Ting: #1}}
\newcommand{\yudi}[1]{\textcolor{cyan}{Yudi: #1}}
\newcommand{\rev}[1]{\textcolor{black}{#1}}
\newif\ifisappendix
\begin{document}
	
\title{Preventing Outages under Coordinated Cyber-Physical Attack with Secured PMUs}
\author{Yudi Huang, \textit{Student Member, IEEE}, Ting He, \textit{Senior Member, IEEE}, Nilanjan Ray Chaudhuri, \\\textit{Senior Member, IEEE}, and Thomas La Porta \textit{Fellow, IEEE}
\thanks{The authors are with the School of Electrical Engineering and Computer Science, Pennsylvania State University, University Park, PA 16802, USA
(e-mail: \{yxh5389, tzh58, nuc88, tfl12\}@psu.edu). }
\thanks{A preliminary version of this work was presented at SmartGridComm'21~\cite{yudi21SmartGridComm}.}
\thanks{ This work was supported by the National Science Foundation under award ECCS-1836827. }
}

\maketitle

\begin{abstract}
Due to the potentially severe consequences of coordinated cyber-physical attacks (CCPA), the design of defenses has gained significant attention. A popular approach is to eliminate the existence of attacks by either securing existing sensors or deploying secured PMUs. In this work, we improve this approach by lowering the defense target from \emph{eliminating attacks} to \emph{preventing outages} and reducing the required number of PMUs. To this end, we formulate the problem of \emph{PMU Placement for Outage Prevention (PPOP)} {under DC power flow model} as a tri-level non-linear optimization problem and transform it into a bi-level mixed-integer linear programming (MILP) problem. Then, we propose an alternating optimization framework to solve PPOP by iteratively adding constraints, for which we develop two constraint generation algorithms. In addition, for large-scale grids, we propose a polynomial-time heuristic algorithm to obtain suboptimal solutions. {Next, we extend our solution to achieve the defense goal under AC power flow model.} Finally, we evaluate our algorithm on IEEE 30-bus, 57-bus, 118-bus, and 300-bus systems, which demonstrates the potential of the proposed approach in greatly reducing the required number of PMUs.
\end{abstract}
	

\section{Introduction}\label{sec: intro}

\emph{Coordinated cyber-physical attacks (CCPA)}~\cite{deng2017ccpa} have gained a great deal of attention due to the stealthiness of such attacks and the potential for severe damage on to the smart grid. The power of CCPA is that its physical component damages the grid while its cyber component masks such damage from the control center (CC) to prolong outages and potentially enable cascades. For instance, in the Ukrainian power grid attack \cite{Fairley16Spectrum}, attackers remotely switched off substations (damaging the physical system) while disrupting the control through telephonic floods and KillDisk server wiping (damaging the cyber system). \looseness=-1 



Defenses against CCPA can be broadly categorized into \emph{detection} and \emph{prevention}. Attack detection mechanisms aim at detecting attacks that are otherwise undetectable using traditional bad data detection (BDD) by exploiting knowledge unknown to the attacker \cite{chaojun2015detecting}. However, the knowledge gap between the attacker and the defender may disappear due to more advanced attacks, and relying on detection alone risks severe consequences in case of misses. Therefore, in this work, we focus on preventing attacks using secured sensors. 

We consider a powerful attacker with full knowledge of the pre-attack state of the grid and the locations of secured PMUs.  The attacker launches an optimized CCPA where the physical attack disconnects a limited number of lines and the cyber attack falsifies the breaker status and the measurements from unsecured sensors to mask the physical attack while misleading security constrained economic dispatch (SCED) at the CC. {Such attacks can result in severe cascading failures. For example, under the setting in Section~\ref{sec: evaluations}, CCPA in absence of secured PMUs can cause initial overload-induced tripping at $2$, $1$, and $2$ lines in IEEE 30-bus, 57-bus, and 118-bus systems, respectively. Moreover, the re-distribution of power flows on the initially tripped lines may cause cascading outages. Take IEEE 118-bus system as an example. There is an attack that physically disconnects line $144$ and manipulates the measurements to cause overload-induced tripping at line $109$. These initial outages will trigger a cascade that eventually results in outages at $82$ lines. This observation highlights the importance of defending against such attacks.} \looseness=-1

While attack prevention traditionally aims at eliminating undetectable attacks by deploying secured PMUs to achieve full observability~\cite{sou2019protection}, this approach can require a large number of PMUs. {Little is known about how to achieve a good tradeoff between the efficacy of protection and the cost of PMU placement during the deployment process 
before full observability is achieved. In addition, the operators may be only interested in using secured PMUs to prevent severe consequences, while leaving the defense of less severe attacks to other mechanisms~\cite{dibaji2019systems}. To fill this gap, we lower the goal of PMU placement to \emph{preventing undetectable attacks from causing outages}.}
Specifically, we want to deploy the minimum number of secured PMUs such that the attacker will not be able to cause overload-induced line tripping due to overcurrent protection devices.  
The key novelty of our approach is that
we allow undetectable attacks to exist but prevent them from causing any outages, hence potentially requiring fewer secured PMUs. {For instance, we can prevent overload-induced tripping using $71\%$ fewer secured PMUs compared to the requirement of full observability in IEEE 118-bus system.} \looseness=-1


\subsection{Related Work}\label{subsec:Related Work}

\textbf{Attacks:}
False data injection (FDI) \cite{ozay2013sparse, alexopoulos2020complementarity} is widely adopted to launch cyber attacks in CCPA to bypass the traditional BDD \cite{deng2017ccpa}. A typical form of FDI is load redistribution attack~\cite{yuan2011modeling}, which together with physical attacks \cite{deng2017ccpa, lakshminarayana2021moving, che2018false} that alter grid topology, aims to mislead SCED by injecting false data for economic loss or severe physical consequences such as sequential outages~\cite{che2018false}. {Bi-level optimization is widely adopted for analyzing the impact of CCPA on state deviation \cite{liu2016masking} or line flow changes \cite{li2015bilevel}.}
In this work, we extend them into a stronger attacker that jointly optimizes the location of physical attacks and the attack target. \rev{Besides misleading SCED, similar physical consequences can also be achieved by attacking the commands issued by the control center \cite{lin2016runtime, garcia2017hey}, which is not the focus of this work.}\looseness=-1

\textbf{Defenses:}
\rev{Defending against CCPA requires a systematic mechanism \cite{dibaji2019systems}, which can be decomposed into three modules: \emph{prevention} that postpones the onset of attacks \cite{lakshminarayana2021moving}, \emph{detection} that identifies the attack before it starts affecting the system \cite{deng2015defending, liu2016optimal, tian2019multilevel, kim2011strategic, hao2015sparse, yang2017optimal, sou2019protection, lin2016runtime}, and \emph{resilience} which limits the impact of the attacks that successfully bypass the detection \cite{xiang2017game, tian2019multilevel, wu2016efficient, yao2007trilevel}. Our focus is on an intermediate stage of PMU deployment where not enough PMUs are installed to achieve perfect {detection} of all FDI attacks.} \looseness=-1

\rev{To eliminate the existence of FDI by detection, different strategies have been studied, such as directly protecting meters \cite{deng2015defending, liu2016optimal, bobba2010detecting, tian2019multilevel, kim2011strategic, hao2015sparse} or deploying secured PMUs \cite{yang2017optimal, sou2019protection}.
Due to the connection between observability of the grid and FDI \cite{liu2016optimal}, solutions on achieving full observability through PMUs \cite{chakrabarti2008placement}
can also be leveraged to defend against FDI.}
Unlike the aforementioned works, our work only aims to prevent attacks from causing outages, which can significantly reduce the required number of secured PMUs \rev{while maintaining the system \emph{resilience}}. \looseness=-1

Tri-level optimization is widely used for modeling interactions among the defender, the attacker and the operator in smart grid. To name a few, a tri-level model is proposed in \cite{wu2016efficient} to find the optimal set of lines to protect from physical attacks to minimize load shedding. {In \cite{xiang2017game, yao2007trilevel, tian2019multilevel}, the  measurements to protect were chosen by solving a budget-constrained optimization problem, which was also adopted in \cite{yuan2016robust} for distribution networks. However, existing works are limited in the following aspects. From the formulation perspective, their solution may become sub-optimal if the cost vector in SCED changes due to the dependence of their methods on the KKT conditions of linear programming. Such dependence also limits the extension of their formulation to the AC power flow model. From the computational perspective, the method in \cite{tian2019multilevel} solves a MIP for each possible physical attack and thus is not scalable to multi-line physical attacks. The method in \cite{wu2016efficient} introduces bilinear terms, which leads to a high computational cost. To overcome such limitations, we will develop a formulation for CCPA that can (i) model multi-line physical attacks without bilinear terms, and (ii) be extended to the AC power flow model. Moreover, the PMU placement obtained from our solution can prevent overloading-induced line tripping regardless of the cost vector in SCED. Furthermore, securing PMU measurements instead of (legacy) measurements for individual nodes/lines has the advantage that it aligns with the ongoing trend of deploying PMU-based power grid monitoring systems.} \looseness=-1
%


{\textbf{Power flow models:} Due to the nonlinear and nonconvex nature of AC power flow equations, it is a common practice \cite{liang2015vulnerability} to develop FDI/CCPA or its countermeasure under the DC power flow model and validate the solutions under the AC power flow model. Although much efforts  \cite{jin2018power, chung2018local, chu2021n} have been devoted into directly formulating FDI under the AC model, most of them targeted at causing erroneous state estimation, with very limited results on load redistribution attack aiming at causing outages. 
The works \cite{jin2018power} formulated FDI under the AC model  through convex relaxation, but did not accurately model the impact of FDI on SCED. In \cite{chu2020vulnerability, liang2015vulnerability, bobba2010detecting}, the design of FDI was based on the DC model, although the feasibility of the attack was tested under the AC model. In \cite{chung2018local, chu2021n}, a formulation based on convex relaxation was proposed to model load redistribution attack under the AC model. They adopted DC-based \emph{line outage distribution factors (LODF)} to infer the impact of attacks on SCED, which leads to the use of active power flows as the criterion to determine overloading. This is inaccurate as the true criterion should be the magnitude of current. To the best of our knowledge, it remains an open problem to compute the optimal load redistribution attack under the AC power flow model. Our approach is to circumvent this problem by (i) first finding a PMU placement to prevent load redistribution attack from causing outages under the DC model, (ii) then 
developing a method to test the feasibility of the found PMU placement under the AC model based on a recently developed approximation of AC power flow equations~\cite{yang2018general}, and (iii) finally refining the PMU placement to prevent outages under the AC model. } \looseness=-1


\subsection{Summary of Contributions}
We summarize our contributions as follows:
\begin{enumerate}
    \item Instead of eliminating the existence of FDI, we investigate the optimal secured \emph{PMU Placement for Outage Prevention (PPOP)} problem to defend against CCPA, where we formulate a strong attacker that jointly optimizes physical attack locations and target lines. The proposed approach can potentially require fewer PMUs than approaches that eliminate FDI.\looseness=-1
    \item We propose an alternating optimization algorithm to solve PPOP by generating additional constraints from each infeasible PMU placement. Specifically, we demonstrate how to generate \quotes{No-Good} constraints and \quotes{Attack-Denial} constraints to solve PPOP optimally.
    \item We develop a heuristic algorithm for PPOP to produce a possibly suboptimal solution. The complexity of the proposed heuristic is polynomial in the grid size, which makes it scalable to large networks.
    \item {We develop an algorithm to test whether a given PMU placement can achieve our defense goal under the AC power flow model. In addition, we propose a heuristic to augment the given PMU placement to pass the test.}
    \item We systematically evaluate the proposed solution on IEEE 30-bus, IEEE 57-bus, IEEE 118-bus, and IEEE 300-bus systems. The results demonstrate that the proposed solution can {substantially reduce the number of required PMUs while preventing CCPA from causing outages, even with the AC-based augmentation.}
\end{enumerate}

\textbf{Roadmap: }We formulate the PPOP problem {under the DC model} in Section~\ref{sec: problem_formulation} and present both optimal algorithms and heuristics to solve PPOP in Section~\ref{sec: solve_PPOP_optimally}. {We then show how the DC-based solution can be refined to work under the AC model in Section~\ref{sec:Extension to AC}.} We evaluate the performance of PPOP in Section~\ref{sec: evaluations} and conclude the paper in Section~\ref{sec: conclusion}. Additional contents and proofs are given in the appendices. 
\looseness=-1


\section{Problem formulation}\label{sec: problem_formulation}

\textbf{Notations: } For a matrix $\vA$, we denote by $\va_i$ its $i$-th column and $\vA_k$ its $k$-th row. We slightly abuse the notation $|\cdot|$ in that $|A|$ indicates the cardinality if $A$ is a set and the element-wise absolute value if $A$ is a vector or matrix. Logical expression $\leftrightarrow$ indicates the \quotes{if and only if} logic, while $\rightarrow$ denotes the \quotes{if then} logic. When the operators $\ge,\le, =$ are applied to two vectors, they indicate element-wise operations. Let $\va \in \mathbb{R}^{n_a},\vb\in \mathbb{R}^{n_b}$ be two vectors, then $\va \oplus \vb \in \mathbb{R}^{n_a+n_b}$ indicates the vertical concatenation of $\va$ and $\vb$. Let $\lceil \va\rceil$ denote the element-wise ceiling. If $n_a=n_b=n$, then $\va\odot \vb := (a_i b_i)_{i=1}^{n}$ denotes the Hadamard product, i.e., the element-wise product. We use $\vLambda_{(\cdot)}\in \{0,1\}^{m\times n}$ with one nonzero element in each row to select entries from a vector such that $\vLambda_{(\cdot)} \vx$ is a subvector of $\vx$. \looseness=-1

\subsection{Power Grid Modeling}\label{subsec:Power Grid Modeling}
We model the power grid as a connected undirected graph ${\mathcal{N}} = (V, E)$, where $E$ denotes the set of lines (lines) and $V$ the set of nodes (buses).
{Majority of our results will be based on the DC power flow model, which is an approximation widely adopted for studying security issues in power grids \cite{sou2019protection, yuan2011modeling, deng2017ccpa, lakshminarayana2021moving, che2018false, liu2016masking, li2015bilevel, tian2019multilevel};
extension to the AC power flow model is deferred to Section~\ref{sec:Extension to AC}.} Under this approximation, each line $e = (s,t)$ is characterized by reactance $r_e = r_{st} = r_{ts}$.
The grid topology can be represented by the \emph{admittance matrix} $\bfB:=(B_{uv})_{u,v\in V} \in \mathbb{R}^{|V|\times|V|}$, defined as
\begin{align}
    B_{uv} &=\left\{\begin{array}{ll}
    0 & \mbox{if }u\neq v, (u,v)\not\in E,\\
    -1/r_{uv} &  \mbox{if } u\neq v, (u,v)\in E,\\
    -\sum_{w\in V\setminus \{u\}}B_{uw} &\mbox{if }u=v.
    \end{array}\right.
\end{align}
Besides $\vB$, the grid topology can also be described by \emph{incidence matrix} $\vD \in \{-1,0,1\}^{|V|\times|E|}$, which is defined as follows:
\begin{align}
    D_{ij} &= \left\{\begin{array}{ll}
    1 & \mbox{if line }e_j\mbox{ comes out of node }v_i,\\
    -1 & \mbox{if line }e_j\mbox{ goes into node }v_i,\\
    0 & \mbox{otherwise,}
    \end{array}\right.
\end{align}
where the orientation of each line is assigned arbitrarily. By defining $\vGamma \in \mathbb{R}^{|E| \times |E|}$ as a diagonal matrix with diagonal entries $\Gamma_e = \frac{1}{r_e}$ ($e\in E$), we have $\vB = \vD \vGamma \vD^T$ and $\vf = \vGamma \vD^T \vtheta \in \mathbb{R}^{|E| }$ where $\vf$ denotes the line flows. By defining network states as phase angles $\vtheta := (\theta_u)_{u\in V}$ and active powers as $\vp = (p_u)_{u\in V}$, the relationship between $\vp,\vtheta$ and $\vf$ is given as \looseness=-1
\begin{align}\label{eq:B theta = p}
    \vp =  \vB \vtheta =  \vD \vf,
\end{align}
The CC will periodically conduct state estimation, whose results will be used for SCED to re-plan the power generation \cite{yuan2011modeling, che2018false}. Formally, let $\vz = [\vz_N^T,\vz_L^T]^T \in \mathbb{R}^{m}$ denote the unsecured meter measurements, where $\vz_N\in \mathbb{R}^{m_N}$ denotes the power injection measurements over (a subset of) nodes and $\vz_L\in \mathbb{R}^{m_L}$ denotes the power flow measurements over (a subset of) lines. Let $\vLambda_N$ and $\vLambda_p$ be two row selection matrices such that $\vz_N = \vLambda_N \vz  = \vLambda_p \vp$. 
Similarly, we define row selection matrices $\vLambda_L$ and $\vLambda_f$ such that $\vz_L = \vLambda_L\vz = \vLambda_f\vf$. Then, we have
\begin{align}\label{eq: z_H_theta}
\vz = \vH \vtheta+\vepsilon ~\mbox{ for } \vH := \left[\begin{array}{c}
     \vLambda_p\vB  \\
     \vLambda_f\vGamma\vD^T
\end{array}\right],
\end{align}
where $\vH$ is the measurement matrix based on the meter locations and the reported breaker status, and $\vepsilon$ is the measurement noise. 
{In the rest of the paper, we assume that the measurements satisfy the conditions of \cite[Theorem 5]{krumpholz1980power} such that $\vH$ has full column rank to support unique recovery of $\vtheta$ from \eqref{eq: z_H_theta} (before attack).} If $\bar{\vtheta}$ is the estimated phase angle from $\vz$ and $\vH$, then BDD will raise alarm if $\lp{\vz - \vH \bar{\vtheta}}$ is greater than a predefined threshold. \looseness = -1


Given $\vp_0 := \vB \bar{\vtheta}$, the CC will conduct SCED to calculate new generation to meet the demand with minimal cost. Specifically, let $\vLambda_g\in \{0,1\}^{|V_g|\times |V|}$, $\vLambda_d\in \{0,1\}^{|V_d|\times |V|}$ be row selection matrices for generator/load buses in $\vp$, where $V_d$ and $V_g$ denote the sets of load buses and generator buses, respectively. Denote $\hat{\vtheta}$ as the decision variable where $\vB \hat{\vtheta}$ represents the new power injection after SCED, and $\vphi\in \mathbb{R}^{|V_g|}$ as the cost vector for power generation. Then, SCED can be formulated  as follows \cite{che2018false}: \looseness=-1
\begin{subequations}\label{eq:reformulate_OPF}
\allowdisplaybreaks
\begin{alignat}{2}
\psi_s(\vp_0, \vD)=\arg&\min_{\hat{\vtheta}} \quad  \vphi^T(\vLambda_g\vB\hat{\vtheta}) \\
\mbox{s.t.} \quad
&\vLambda_d\vB\hat{\vtheta} = \vLambda_d\vp_0, \label{eq: cc_load_meet}  \\
&\vGamma \vD^T\hat{\vtheta} \in [-\vf_{\tiny\text{max}},\vf_{\tiny\text{max}}], \label{eq: cc_fmaxp} \\
&\vLambda_g\vB\hat{\vtheta} \in [\vp_{g,min}, \vp_{g,max}], \label{eq: cc_pgp}
\end{alignat}
\end{subequations}
where $\vf_{\tiny\text{max}}\in \mathbb{R}^{|E|}$ indicates the normal line flow limits, $\vp_{g,min}$ and $\vp_{g,max}$ denote lower/upper bounds on generation, and \eqref{eq: cc_load_meet} indicates that demands on all load buses are satisfied.

\subsection{Modeling Coordinated Cyber-Physical Attack (CCPA)} \label{sec: model_ccpa}

In this section, we formulate the attack model according to  a load redistribution attack~\cite{yuan2011modeling} that aims at causing the maximum outages, so that a defense against this attack can prevent outage under any attack under the same constraints. In the sequel, ``ground truth'' means the estimated value based on unmanipulated measurements, which may contain noise. 

For ease of presentation, we summarize the timeline of the entire attack process, as shown in Fig~\ref{fig: timeline_attack}. Specifically,
\begin{itemize}
    \item At $t_0$, the attacker estimates $\vtheta_0$ and $\vp_0:=\tilde{\vB}\vtheta_0$ by eavesdropping on $\vz_0$ and $\tilde{\vH}$.
    \item At $t_1$, CCPA is deployed to change the ground-truth from $\vz_0, \tilde{\vH}, \vtheta_0$ to $\vz_1, \vH$ and $\vtheta_1$, respectively.
    \item At $t_2$, the CC receives falsified information, i.e., $\tilde{\vH}$ and $\tilde{\vz}_2$, which leads to $\tilde{\vtheta}_2$. Then the CC will deploy a new dispatch {of power generation} as $\tilde{\vp}_3:=\tilde{\vB}\tilde{\vtheta}_3$, where $\tilde{\vtheta}_3$ denotes the associated predicted phase angles.
    \item At $t_3$, the new dispatch takes effect and reaches steady state, with the true phase angles $\vtheta_3$ and power flows $\vf_3$.
\end{itemize}
Key notations at different time instances are summarized in Table~\ref{tab: notation_timeline}, where ``---'' means that the information is not available to the CC at the given time instance.

\begin{figure}[ht]
\vspace{-.5em}
\centering
\includegraphics[width=.6\linewidth]{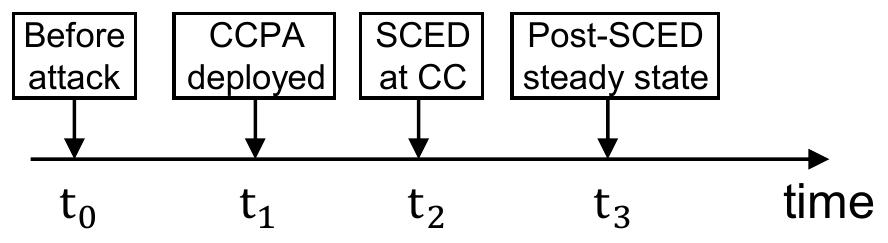}
\vspace{-1em}
\caption{Timeline of an instance of CCPA } \label{fig: timeline_attack}
 \vspace{-1em}
\end{figure}

\begin{table}[htbp]
\renewcommand\arraystretch{1.3}
\small
\vspace{-1em}
\caption{Notations v.s. Timeline}
\centering
\vspace{-.5em}
\begin{tabular}{|c| c |c |c| c|}
\hline
\textbf{time} & $t_0$ & $t_1$ & $t_2$ & $t_3$\\
\hline
\textbf{True measurement matrix} & $\tilde{\vH}$ & $\vH$& $\vH$& $\vH$ \\
\hline
\textbf{Measurement matrix at CC} & --- & --- & $\tilde{\vH}$ & $\tilde{\vH}$ \\
\hline
\textbf{True phase angle} & $\vtheta_0$ & $\vtheta_1$& $\vtheta_2 = \vtheta_1$& $\vtheta_3$ \\
\hline
\textbf{Phase angle at CC} & --- & --- & $\tilde{\vtheta}_2$& $\tilde{\vtheta}_3$\\
\hline
\textbf{True measurement} & $\vz_0$ & $\vz_1$& $\vz_2 = \vz_1$& $\vz_3$\\
\hline
\textbf{measurement at CC} & --- & --- & $\tilde{\vz}_2$&  --- \\
\hline
\end{tabular}
\vspace{-.5em}
\label{tab: notation_timeline}
\end{table}
\normalsize

First, we model the influence of attacks on SCED.  We define $\va_c \in \mathbb{R}^{m}$ to be the \emph{cyber-attack vector}, which changes the measurements received by the CC to $\tilde{\vz}_2 = \vz_2 + \va_c$, and $\va_p \in \{0,1\}^{|E|}$ the \emph{physical-attack vector}, where $a_{p,e} = 1$ indicates that line $e$ is disconnected by the physical attack. As the physical attack changes the topology, we use $\tilde{\mathcal{N}}$ to denote the pre-attack topology and $\mathcal{N}$ the post-attack topology. Accordingly, $\tilde{\vB}, \tilde{\vD}, \tilde{\vH}$ denote the pre-attack admittance, incidence, and measurement matrices, and $\vB, \vD, \vH$ their (true) post-attack counterparts,
related by
\begin{align}\label{eq: attack_BD}
\vB &= \tilde{\vB} - \tilde{\vD}\vGamma\text{diag}(\va_p)\tilde{\vD}^T, ~~~\vD = \tilde{\vD} - \tilde{\vD}\text{diag}(\va_p),
\end{align}
and $\vH = \tilde{\vH} - [(\vLambda_p\tilde{\vD}\vGamma\text{diag}(\va_p)\tilde{\vD}^T)^T, (\vLambda_f \tilde{\vD}\text{diag}(\va_p))^T]^T$.  Falsified measurements in $\tilde{\vz}_2$ and breaker status  will mislead CC to an incorrect state estimation and thus falsified SCED decisions.
Hence, overload-induced line tripping can happen at $t_3$. \looseness=-1

To bypass BDD, the attacker has to manipulate breaker status information to mask the physical attack, misleading the CC to believe that the measurement matrix is $\tilde{\vH}$ instead of $\vH$. Also, measurements have to be modified into $\tilde{\vz}_2$ such that BDD with $\tilde{\vz}_2$ and $\tilde{\vH}$ as input will not raise any alarm. Below, we will derive constraints on $\va_p$ and $\va_c$ such that the modified data can pass BDD under the assumption that the pre-attack data can pass BDD as  assumed in FDI~\cite{deng2017ccpa}. Considering that $\tilde{\vz}_2 = \vz_2 + \va_c$, $\va_c$ should be constructed such that
\begin{align}\label{eq: derivation_ac}
\lp{\tilde{\vz}_2 - \tilde{\vH} \tilde{\vtheta}_2} &= \lp{\vz_0 - \tilde{\vH} \vtheta_0 + \vz_2 + \va_c - \vz_0 + \tilde{\vH} \vtheta_0 - \tilde{\vH} \tilde{\vtheta}_2} \nonumber \\
&=\lp{\vz_0 - \tilde{\vH} \vtheta_0}, ~~~\mbox{(pre-attack residual)}
\end{align}
which leads to the following construction of $\va_c$:
\begin{align}\label{eq: construction_ac}
\va_c &= \vz_0 - \vz_2 + \tilde{\vH}( \tilde{\vtheta}_2 - \vtheta_0)\\
&= \tilde{\vH}\vtheta_0 + \epsilon_0 - \left(\vH\vtheta_2 + \epsilon_0 \right) + \tilde{\vH}( \tilde{\vtheta}_2 - \vtheta_0) \\
&= \left[\begin{array}{c}
     \vLambda_p \tilde{\vB}  \\
     \vLambda_f \vGamma \tilde{\vD}^T
\end{array}\right] \tilde{\vtheta}_2 - \left[\begin{array}{c}
     \vLambda_p \vB  \\
     \vLambda_f \vGamma \vD^T
\end{array}\right]\vtheta_2. \label{eq: constr_ac_theta}
\end{align}
Besides \eqref{eq: construction_ac}, there may be additional constraints on $\va_c$ to avoid causing suspicion.  Specifically,
following \cite{yuan2011modeling}, we assume that all the power injections at generator buses are measured and not subject to attacks, i.e.,
\begin{align}\label{eq: constr_gen_no_attack}
\vLambda_g \tilde{\vD}\tilde{\vf}_2 = \vLambda_g\tilde{\vB}\tilde{\vtheta}_2 = \vLambda_g\vB\vtheta_2 =\vLambda_g\vD\vf_2 =\vLambda_g\vp_0,
\end{align}
recalling that 
{$\vLambda_g$ is the row selection matrix corresponding to generator buses}.
Moreover, by representing the maximum normal load fluctuation through $\alpha\geq 0$, the magnitude of falsification at load buses needs to be constrained due to load forecasting \cite{yuan2011modeling,che2018false}, which can be modeled by \footnote{In contrast to \cite{yudi21SmartGridComm} that only imposes the magnitude constraint on measured buses, constraint \eqref{eq: attack_FDIrange} is imposed on all buses (although subsumed by \eqref{eq: constr_gen_no_attack} for generator buses). This is because under the assumption of full-rank measurement matrix (Section~\ref{subsec:Power Grid Modeling}), the CC can recover all the phase angles and hence the power injections at all the buses, and thus the attacker needs to avoid causing too much deviation in the power injections at all the buses. }
\begin{align}\label{eq: attack_FDIrange}
-\alpha |\vp_0| \leq \tilde{\vB}\tilde{\vtheta}_2 - \vp_0 \leq \alpha |\vp_0|.
\end{align}

Following the convention in \cite{wu2016efficient, yuan2011modeling}, the attack is constrained by a predefined constant $\xi_p$ denoting the maximum number of attacked lines and another constant $\xi_c$ denoting the maximum number of manipulated measurements, i.e.,  
\begin{align}\label{eq: const_phy}
    \lp{\va_p}_0 \le \xi_p, ~~~\lp{\va_c}_0 \le \xi_c.
\end{align}

In addition, we constrain $\va_p$ so that the graph after physical attack remains connected, which is needed for stealth of the attack according to \cite{che2018false, liu2016masking}. Specifically, defining $\vf_{con} \in \mathbb{R}^{|E|}$ as a pseudo flow and $u_0$ as the reference node, we can guarantee network connectivity at $t_2$ by ensuring
\begin{subequations}
\allowdisplaybreaks
\label{eq:connectivity_constr}
\begin{align}
\tilde{\vD}_u\vf_{con} &= \begin{cases}|V|-1, &\mbox{ if } u = u_0,\\
-1, &~\mbox{if } u\in V\setminus\{u_0\},
\end{cases} \label{eq:flow conservation}\\
-|V|\cdot (1-a_{p,e}) &\le f_{con,e} \le |V|\cdot (1-a_{p,e}). \label{eq:a_p constraint}
\end{align}
\end{subequations}
With lines oriented as in $\tilde{\vD}$, \eqref{eq:flow conservation} (flow conservation constraint) and \eqref{eq:a_p constraint} (line capacity constraint) ensure the existence of a unit pseudo flow from $u_0$ to every other node in the post-attack grid and hence the connectivity of the post-attack grid, where $f_{con,e}>0$ if the flow on $e$ is in the same direction of the line and $f_{con,e}<0$ otherwise.

{In practice, transmission lines are equipped with overcurrent protection devices, which will trip the lines when the power flow exceeds the tripping threshold. Thus, heavy overloading caused by the SCED misled by cyber attacks can lead to initial outages at $t_3$, which can create cascading outages \cite{che2018false}. Specifically, let $\vf_{max}\in \mathbb{R}^{|E|}$ be the normal power flow limits imposed in SCED \cite{NERCrating}. Then, a line $e\in E$ will be tripped by protection devices (i.e., having an outage) if
\begin{align}\label{eq: def_outage}
    |f_e| > \gamma_e f_{\tiny\text{max},e},
\end{align}
where $\gamma_e$ denotes the tripping threshold based on the thermal limit of the line. In practice, although \cite{NERCtripping} suggests $\gamma_e \ge 1.5$, the operator may choose higher $f_{\tiny\text{max},e}$, which leads to a smaller $\gamma_e$. As discussed in \cite{che2018false, athari2017impacts}, a small $\gamma_e$ implies that the system is operating with a low margin of overload. A large $\gamma_e$ may contribute to robustness to cascading failure \cite{athari2017impacts}, but leads to underutilization of transmission lines.}

\subsection{Modeling the Protection Effect of Secured PMUs}\label{subsec:Modeling Protection Effect of Secured PMUs}

Let $\vbeta\in\{0,1\}^{|V|}$ be the indicator vector for PMU placement such that $\beta_u = 1$ if and only if a secured PMU is installed at node $u$. We define $\Omega(\vbeta):= \{u|\beta_u > 0\}$ and the inverse process $\vbeta(\Omega): \beta_u = 1$ if $u\in \Omega$ and $\beta_u = 0$ otherwise. Let $\mathcal{V}_u$ be the node set containing neighbors of node $u$ (including $u$) and $E_u$ be the line set composed of lines incident on $u$. According to \cite{yang2017optimal}, by measuring both voltage and current phasor, a PMU on node $u$ can guarantee the correctness of phase angles in $\mathcal{V}_u$ and protect lines in $E_u$ from both cyber and physical attacks. Formally, we define $\vx_N \in \{0,1\}^{|V|}$ such that $(x_{N,u} = 1) \leftrightarrow ( \exists v \in \mathcal{V}_u \mbox{ such that }\beta_v=1)$, which can be modeled as \looseness=-1
\begin{align}\label{eq: const_beta_x_N}
\vDelta^{-1}\underline{\vA} \vbeta\le \vx_N \le \vDelta^{-1}\underline{\vA} \vbeta + \frac{\lp{\vDelta}_{\infty}-1}{\lp{\vDelta}_{\infty}},
\end{align}
where $\vDelta \in \mathbb{Z}^{|V|\times|V|}$ is a diagonal matrix with $\Delta_{uu}=|\mathcal{V}_u|$, while $\underline{\vA}:=\vA + \vI$ is the adjacency matrix of the grid with added self-loops at all nodes. Similarly, we define $\zeta$ to be any constant within $[0.5,1)$ and $\vx_L \in \{0,1\}^{|E|}$ satisfying $(x_{L,e} = 1)\leftrightarrow (\exists v \mbox{ with }e\in E_v \mbox{ and } \beta_v = 1 )$, which can be linearlized as
\begin{align}\label{eq: const_beta_x_L}
0.5|\vD|^T  \vbeta \le \vx_L \le 0.5|\vD|^T  \vbeta + \zeta.
\end{align}

We assume that the PMU locations are known to the attacker, thus the cyber attack is constrained as follows:
\begin{subequations}\label{eq: PMU_defend}
\begin{alignat}{2}
&x_{N,u}  = 1 \rightarrow \tilde{\vtheta}_{2,u} = \vtheta_{2,u}, &\forall u \in V, \label{eq: beta_on_theta} \\
&x_{L,e}  = 1 \rightarrow a_{p,e} = 0, &\forall e \in E. \label{eq: beta_on_ap}
\end{alignat}
\end{subequations}
Note that \eqref{eq: const_beta_x_N}-\eqref{eq: PMU_defend} implicitly protect the power flow measurements on lines incident to a PMU. To see this, suppose that $e=(s,t)$ and $\beta_s = 1$. Then we must have $x_{N,s} = x_{N,t} = x_{L,e} = 1$ due to \eqref{eq: const_beta_x_N}-\eqref{eq: const_beta_x_L}. By \eqref{eq: PMU_defend}, it is guaranteed that $\tilde{z}_{2,e}:= (\tilde{\theta}_{2,s} - \tilde{\theta}_{2,t}) / r_{st}= (\theta_{2,s} - \theta_{2,t}) / r_{st} =: z_{2,e}$. \rev{In addition, PMU data are usually collected at a high frequency (e.g., around $60$-$200$ samples per second). Thus, the PMUs can \quotes{instantly} detect any attack violating \eqref{eq: PMU_defend} even though they cannot prevent the attack from happening. In this way, the PMUs can reduce the potential damage by restricting the attacker's choices of attack vectors.}
\looseness=-1

\subsection{Optimal PMU Placement Problem}
Our main problem, named \emph{PMU Placement for Outage Prevention (PPOP)}, aims at {placing the minimum number of secured PMUs so that no undetectable CCPA can cause overload-induced tripping}.
To achieve this, we model the problem as a tri-level optimization problem (an overview of PPOP is given in Fig.~\ref{fig: ppop} in Appendix~A). 
\looseness=-1

The \emph{middle-level} optimization is the attacker's problem, which aims to maximize the number of overloaded lines without being detected. Instead of using $\va_c$ as decision variable, we propose to formulate over $\tilde{\vf}_i,\vf_i$ and $\tilde{\vtheta}_i, \vtheta_i$ where $i\in\{2,3\}$. In the rest of the paper, we will apply big-M modeling technique that introduces sufficiently large constants denoted as $M_{(\cdot)}$ for linearization. The calculation of $M_{(\cdot)}$ is given in Appendix~\ref{appendix:Calculation of Big-M}. 
Specifically, the constraints on $\vtheta_2$ and $\vf_2$ are:
\begin{subequations}\label{eq: dc_t2_gt}
\allowdisplaybreaks
\begin{align}
&{-M_{2,a,e}\lrbrackets{\bm{1}-\va_{p}}\le \vf_2  \le M_{2,a,e} \lrbrackets{\bm{1}-\va_{p}},} \label{f2_range}\\
&\tilde{\vD} \vf_2 = \vp_0, \label{f2_flow_conservation} \\
&-M_{2,f}\va_{p}\le \vGamma \vD \vtheta_2-\vf_2 \le M_{2,f}\va_p.\label{f2_theta_valid}
\end{align}
\end{subequations}
The constraints \eqref{f2_range} and \eqref{f2_flow_conservation} guarantee the consistency between $\vf_2$ and $\vp_0$ given $\va_p$, where $a_{p,e} = 1$ will force $f_{2,e} = 0$. The role of \eqref{f2_theta_valid} is to force the consistency between $\vf_2$ and $\vtheta_2$ for all $e$ with $a_{p,e} = 0$, which is necessary for the uniqueness of $\vf_2$. {Similarly, we can transform \eqref{eq: derivation_ac}-\eqref{eq: const_phy} into constraints over $\tilde{\vf}_2$, $\tilde{\vtheta}_2$, and $\va_p$, which are} 
\begin{subequations}\label{constr: f2_fdi}
\allowdisplaybreaks
\begin{align}
&-\vf_{\tiny\text{max}}\le \tilde{\vf}_{2} \le \vf_{\tiny\text{max}}, \label{eq: f2_fdi_valid_range} \\
&\vGamma \tilde{\vD}^T \tilde{\vtheta}_2-\tilde{\vf}_{2}  = 0, \label{eq: f2_fdi_valid_unique}\\
&\tilde{\theta}_{2,u} - \theta_{2,u} \le M_{2,\theta}\cdot (1 - \vx_{N,u}), \label{eq: constr_theta2_pmu_p} \\
&\tilde{\theta}_{2,u} - \theta_{2,u} \ge -M_{2,\theta}\cdot (1 - \vx_{N,u})\label{eq: constr_theta2_pmu_n},\\
&-\alpha |\vp_0| \le \tilde{\vD}\tilde{\vf}_2 - \vp_0 \le \alpha |\vp_0|, \label{eq: fdi_range}\\
&\vLambda_g \tilde{\vD}\tilde{\vf}_2 =\vLambda_g\vp_0, \label{eq: no_attack_generator}\\
&\lp{\vLambda_f \lrbrackets{\tilde{\vf}_2 - \vf_2}}_0  + \lp{ \vLambda_p \lrbrackets{\tilde{\vD}\tilde{\vf}_2 - \vp_0} }_0 \le \xi_c, \label{eq: f2_fdi_xi_c}\\
&\lp{\va_p}_0 \le \xi_p, \label{eq:xi_p}
\end{align}
\end{subequations}
{where \eqref{eq: f2_fdi_valid_range}-\eqref{eq: f2_fdi_valid_unique} guarantee the validity of $\tilde{\vf}_2$ as in \eqref{f2_range}-\eqref{f2_theta_valid}, \eqref{eq: constr_theta2_pmu_p}-\eqref{eq: constr_theta2_pmu_n} linearize \eqref{eq: beta_on_theta} ($M_{2,\theta}$ defined in Appendix~\ref{appendix:Calculation of Big-M}), 
while \eqref{eq: fdi_range}, \eqref{eq: no_attack_generator}, and \eqref{eq: f2_fdi_xi_c}--\eqref{eq:xi_p} correspond to \eqref{eq: attack_FDIrange}, \eqref{eq: constr_gen_no_attack}, and \eqref{eq: const_phy}, respectively. It is worth noting that there exists an $\va_c$ in the form of \eqref{eq: constr_ac_theta} for any $\tilde{\vf}_2$ and $\tilde{\vtheta}_2$ satisfying \eqref{constr: f2_fdi} due to the relationship between $\tilde{\vf}_2$, $\tilde{\vtheta}_2$ and $\va_c$ shown in \eqref{eq: constr_ac_theta} and \eqref{eq: f2_fdi_valid_unique}.}
Moreover, the constraints on $\vtheta_3,\tilde{\vtheta}_3$, and $\vf_3$ are
\begin{subequations}\label{eq: constr_f3}
\allowdisplaybreaks
\begin{align}
&\vp_{g,min} \le \vLambda_g \tilde{\vB} \tilde{\vtheta}_3 \le \vp_{g,max} \label{eq: pg_theta_3_fdi}\\
&-\vf_{\tiny\text{max}} \le \vGamma\vD^T \tilde{\vtheta}_3 \le \vf_{\tiny\text{max}}, \label{eq: fmax_theta_3_fdi}\\
&\vLambda_d \tilde{\vB} \tilde{\vtheta}_3 = \vLambda_d\tilde{\vD}\tilde{\vf}_2\label{eq: pd_theta_3_fdi}\\
&-M_{3,a}(\bm{1}-\va_{p})\le \vf_3  \le M_{3,a}(\bm{1}-\va_{p}),\label{eq:f_3, a_p}\\
&\vLambda_d\tilde{\vD}\vf_3 = \vLambda_d\vp_0,~~~ \vLambda_g\tilde{\vD}\vf_3 = \vLambda_g\tilde{\vB}\tilde{\vtheta}_3,~ \label{eq: attack_load}\\
&-M_{3,f}\va_p\le \vGamma \tilde{\vD}^T \vtheta_3-\vf_3 \le M_{3,f} \va_p, \label{eq:theta_3, f_3}
\end{align}
\end{subequations}
where \eqref{eq: pg_theta_3_fdi}-\eqref{eq: pd_theta_3_fdi} describe the feasible region of $\tilde{\vtheta}_3$ under false data injection, and \eqref{eq:f_3, a_p}--\eqref{eq:theta_3, f_3} are used to enforce the power flow equation \eqref{eq:B theta = p} at $t_3$, where $\vLambda_g\tilde{\vB}\tilde{\vtheta}_3$ is the post-SCED generation predicted by the attacker. While a straightforward formulation of the power flow equation should be
\begin{align}\label{eq: bilinear_terms}
    \vGamma \vD^T \vtheta_3=\vf_3, ~ \vLambda_d \vD\vf_3 = \vLambda_d\vp_0,~ \vLambda_g \vD\vf_3 = \vLambda_g\tilde{\vB}\tilde{\vtheta}_3,
\end{align}
such a formulation will introduce bilinear terms $\vD^T\vtheta_3$ and $\vD\vf_3$, as the post-attack incidence matrix $\vD$ is a function of the physical-attack vector $\va_p$ that is also a decision variable for the attacker. To avoid the bilinear terms, we use \eqref{eq:f_3, a_p} to force $f_{3,e} = 0$ when $a_{p,e} = 1$ (line $e$ is disconnected), and \eqref{eq:theta_3, f_3} to force $\Gamma_e \tilde{\vd}_e^T \vtheta_3 = \Gamma_e \vd_e^T \vtheta_3 = f_{3,e}$ when $a_{p,e} = 0$. Moreover, under \eqref{eq:f_3, a_p}, we observe that $\vD\vf_3 = \sum_{e\in E} \vd_e f_{3,e} = \sum_{e\in E} \tilde{\vd}_e f_{3,e} = \tilde{\vD}\vf_3$, as $\vd_e = \tilde{\vd}_e$ if $a_{p,e}=0$ and $\vd_e f_{3,e} = \tilde{\vd}_e f_{3,e} = \bm{0}$ if $a_{p,e}=1$, which explains \eqref{eq: attack_load}. \looseness=-1

Thus, the attacker's problem, which defines 
the optimal attack strategy, can be formulated as:
\begin{maxi!}|s|[2]<b>
    {}
    {\lp{\vpi}_0}
    {\label{eq:reformulate_attacker}}
    {\psi_a(\vbeta):=}
    \addConstraint{\eqref{eq:connectivity_constr}, \eqref{eq: const_beta_x_N}-\eqref{eq: constr_f3}}{}\label{eq: fdi_bypass}
    \addConstraint{\theta_{2,u_0}=\theta_{3,u_0} =  \tilde{\theta}_{2,u_0} = \tilde{\theta}_{3,u_0}=0}{}\label{eq: zero_ref_theta}
    \addConstraint{\tilde{\vtheta}_3 = \psi_s(\tilde{\vB}\tilde{\vtheta}_2, \tilde{\vD})}{}\label{eq: false_sced}
    \addConstraint{\frac{|f_{3,e}|}{f_{\tiny\text{max},e}} > \gamma_e  \leftrightarrow \pi_e = 1, \forall e\in E,}{}\label{eq: attack_overflow}
\end{maxi!}
where $\vy_{c}:= \tilde{\vtheta}_2 \oplus \tilde{\vtheta}_3 \oplus \vtheta_2 \oplus \vtheta_3 \oplus \vf_2\oplus \vf_3 \oplus \tilde{\vf}_2 \oplus \vf_{con} $ 
and $\vy_{b}:= \vpi \oplus \va_p \oplus \vx_N \oplus \vx_L$ are continuous and binary decision variables, respectively. Here, $\pi_e=1$ if and only if line $e$ is overloaded to be tripped, which is ensured by \eqref{eq: attack_overflow}. Thus, the objective is to maximize the number of overload-induced tripped lines due to the attack-induced load redistribution. The constraints \eqref{eq: zero_ref_theta} fixes the phase angle at the reference node, denoted as node $u_0$. The constraint \eqref{eq: false_sced} incorporates the \emph{lower-level} optimization of SCED \eqref{eq:reformulate_OPF} by specifying the post-SCED generation, determined by $\tilde{\vtheta}_3$.

We formulate the \emph{upper-level} PMU placement problem as
\begin{mini!}|s|<b>
{}{ \lp{\vbeta}_0 }
{\label{eq:PMU_placement}}
{}
\addConstraint{\psi_a(\vbeta) = 0}{}\label{eq: PMU_no_overflow}
\end{mini!}
where the decision variable is $\vbeta\in \{0,\: 1\}^{|V|}$, and $\psi_a(\vx)$ defined in \eqref{eq:reformulate_attacker} denotes the maximum number of lines that will be tripped according to \eqref{eq: def_outage} at $t_3$. In the sequel, we call $(\va_p,\va_c,e)$ an \emph{attack tuple}, which is called ``successful'' under PMU placement $\vbeta$ if there exists a feasible solution to \eqref{eq:reformulate_attacker} with physical attack $\va_p$ and cyber attack $\va_c$ such that $\pi_e = 1$. Moreover, {we call $(\va_p,e)$ a successful \emph{attack pair} under $\vbeta$ if it can form a successful attack tuple under $\vbeta$.} 

\emph{Remark~1:}
While the above formulation treats the load profile $\vp_0$ as a constant, it can be easily extended to handle the fluctuations in loads. This can be modeled by treating $\vp_0$ as a decision variable in the attacker's optimization, constrained by the expected range of fluctuation, e.g., $\vp_0 \in [\underline{\kappa}\vp^{(0)}, \overline{\kappa}\vp^{(0)}]$, or the union of ranges around multiple operating points:
\begin{align}\label{eq: varying_load_profile_p0}
    \vp_0 \in \bigcup_{i=1}^{i_0} \{  \underline{\kappa}_i \vp^{(i)}\le \vp \le \overline{\kappa}_i \vp^{(i)}\}.
\end{align}
This enlarges the solution space for the attacker, which changes the meaning of $\psi_a(\vbeta)$ to the \emph{maximum number of tripped lines under the worst load profile and the worst attack under this load profile}. Clearly, a PMU placement that avoids overload-induced tripping in this worst scenario can avoid overload-induced tripping in any scenario encountered during operation, as long as the load profile stays within the predicted range.

\emph{Remark~2:}
{In practice, PMUs are often deployed in stages. Thus, it may be desirable that a temporary PMU placement designed to prevent outages can be augmented into an optimal PMU placement $\vbeta^{opt}$ in the long run (e.g., a minimum placement that provides full observability). This can be modeled by adding a constraint in \eqref{eq:PMU_placement} that requires $\vbeta\leq\vbeta^{opt}$.}

\section{Solving PPOP}\label{sec: solve_PPOP_optimally}

The PPOP problem \eqref{eq:reformulate_attacker}-\eqref{eq:PMU_placement} is a tri-level non-linear mixed integer problem, which is notoriously hard \cite{liu2016masking}. In this section, we first formally prove that the problem is NP-hard, and then demonstrate how to transform it into a \emph{bi-level mixed-integer linear programming (MILP)} problem. Next, we propose an alternating optimization framework based on constraint generation to solve the problem optimally. Finally, to accelerate the computation, we develop a polynomial-time heuristic. 

\subsection{Hardness and Conversion to Bi-Level MILP}

Although multi-level non-linear mixed integer programming is generally hard, PPOP is only a special case and hence needs to be analyzed separately. Nevertheless, we show that PPOP is NP-hard (see proof in Appendix~\ref{appendix: additional_proof}). 



\begin{theorem}\label{theo: NP-hardness}
The PPOP problem \eqref{eq:PMU_placement} is NP-hard.
\end{theorem}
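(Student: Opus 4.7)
The plan is to establish NP-hardness by a polynomial-time reduction from \emph{minimum vertex cover}. Given an instance $(G=(V,E),k)$ of vertex cover, I would build a power-grid instance whose optimum value of \eqref{eq:PMU_placement} equals the minimum vertex-cover size of $G$. The grid topology is chosen to be $G$ itself, so that each edge of $G$ is a line of the grid. By \eqref{eq: const_beta_x_L} and \eqref{eq: beta_on_ap}, placing PMUs on a set $\Omega\subseteq V$ makes $x_{L,e}=1$ exactly for those edges with at least one endpoint in $\Omega$, and thereby forbids $a_{p,e}=1$ on precisely the edges covered by $\Omega$. Hence a PMU placement eliminates all physical attacks iff $\Omega$ is a vertex cover of $G$.

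The crux is to choose the remaining electrical parameters---reactances $r_e$, capacities $f_{\tiny\text{max},e}$, tripping thresholds $\gamma_e$, generation/load bounds, pre-attack operating point $\vp_0$, measurement selection $\vLambda_N,\vLambda_L$, and budgets $\xi_p=1,\xi_c$---so that (i) if an edge $e$ has no PMU at either endpoint, then the single-line physical attack $\va_p=\ve_e$ admits some cyber attack $\va_c$ consistent with \eqref{eq:connectivity_constr}, \eqref{eq: const_beta_x_N}--\eqref{eq: constr_f3} and resulting in $\pi_{e'}=1$ for some line $e'$; and (ii) no attack vector can reach $\pi_{\cdot}=1$ on any edge whose endpoint hosts a PMU. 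Given (i)--(ii), the condition $\psi_a(\vbeta)=0$ is exactly that $\Omega(\vbeta)$ be a vertex cover of $G$, so minimizing $\lp{\vbeta}_0$ solves minimum vertex cover on $G$.

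To enforce (i), I would attach to every edge $e$ a small \emph{victim gadget} (a few auxiliary buses and lines) tuned with tight thermal limits so that disconnecting $e$ forces the falsified SCED re-dispatch to push the gadget's victim line past its tripping threshold $\gamma$. The local structure is chosen so that the post-attack flow redistribution caused by removing $e$ is concentrated in $e$'s own gadget, guaranteeing non-interference across edges. The load profile $\vp_0$ and cost vector $\vphi$ are picked so the attacker-predicted SCED, misled by $\va_c$ of weight at most $\xi_c$, satisfies \eqref{eq: pg_theta_3_fdi}--\eqref{eq: pd_theta_3_fdi} and \eqref{eq: attack_FDIrange}. Measurements are placed to make $\vH$ full-column rank as assumed after \eqref{eq: z_H_theta} while leaving enough manipulable meters to mask the removal of $e$ within the $\xi_c$-budget.

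The main obstacle is verifying that the gadget achieves (i) simultaneously for all unprotected edges without disturbing (ii); in particular, one must check that the cyber-attack constraints \eqref{eq: fdi_range}, \eqref{eq: no_attack_generator}, and \eqref{eq: f2_fdi_xi_c} together with the post-SCED feasibility \eqref{eq: pg_theta_3_fdi}--\eqref{eq: pd_theta_3_fdi} remain jointly satisfiable for every unprotected edge under a single global parameter choice. If a uniform gadget proves brittle on arbitrary $G$, a fallback is a reduction from minimum dominating set---more naturally aligned with the node-neighborhood protection \eqref{eq: const_beta_x_N}---or a reduction from 3-SAT using per-clause gadgets that separately encode the physical-attack budget and the cyber-attack masking step.
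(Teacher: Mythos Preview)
Your vertex-cover plan has genuine gaps that would be hard to close. First, there is an internal inconsistency: you declare the grid topology to be $G$ itself and then attach per-edge victim gadgets, which adds new buses and lines; once gadgets are present, PMUs may be placed on gadget nodes, and you give no argument that the optimum of \eqref{eq:PMU_placement} over the enlarged vertex set equals the vertex-cover number of $G$. Second, the connectivity constraint \eqref{eq:connectivity_constr} forbids physically attacking any bridge of the post-gadget graph; since vertex-cover instances can contain bridges, your claim that $\va_p=\ve_e$ is admissible for every unprotected $e$ can fail outright. Third, your condition (ii) is not what is needed: even when every edge of $G$ is covered, the attacker may physically target a gadget line or launch a pure cyber attack with $\va_p=\bm{0}$, so you must additionally show that a vertex cover of $G$ forces $\psi_a(\vbeta)=0$ over the \emph{entire} augmented grid---a property your gadget is not designed to enforce and that would require its own argument.

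Your own fallback is the right move, and it is precisely what the paper does. The paper reduces from minimum dominating set on the given graph with no gadgets at all, using the extreme parameter choice $\xi_p=0$, $\xi_c=\infty$, $\alpha=\infty$, $\vp_0=\bm{0}$, $\vGamma=\vI$, and $\gamma_e=0$ for every $e$. With $\xi_p=0$ there is no physical attack, so connectivity and the edge-protection variable $\vx_L$ play no role; with $\gamma_e=0$, any nonzero post-dispatch flow trips some line. One then argues that $\psi_a(\vbeta)=0$ forces $\tilde{\vtheta}_2=\vtheta_0=\bm{0}$, which by \eqref{eq: constr_theta2_pmu_p}--\eqref{eq: constr_theta2_pmu_n} is guaranteed exactly when $x_{N,u}=1$ for all $u$, i.e., when $\Omega(\vbeta)$ is a dominating set. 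This sidesteps the gadget engineering, the bridge issue, and the joint satisfiability of \eqref{eq: fdi_range}, \eqref{eq: no_attack_generator}, \eqref{eq: f2_fdi_xi_c} with \eqref{eq: pg_theta_3_fdi}--\eqref{eq: pd_theta_3_fdi} that you yourself flagged as the main obstacle.
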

The attacker's problem \eqref{eq:reformulate_attacker} can be linearized into a MILP (see details in Appendix~\ref{appendix: bi_level_milp}), 
which implies that PPOP can be converted into a bi-level MILP.

\subsection{An Alternating Optimization Framework}

\begin{algorithm}\label{alg: alter_opt_v1}
\SetAlgoLined
\SetKwFunction{Fmain}{FailEdgeDetection}
\SetKwInOut{Input}{input}\SetKwInOut{Output}{output}
\textbf{Initialization: }$k=1$, $\hat{\vbeta}^{(k)} = \bm{0}$\;
\While{True}{
    Solve \eqref{eq:reformulate_attacker} under $\hat{\vbeta}^{(k)}$ to obtain $\psi_a(\hat{\vbeta}^{(k)})$\; \label{alter_opt:attacker}
    \uIf{$\psi_a(\hat{\vbeta}^{(k)})>0$}{
    Add constraints to \eqref{eq:PMU_placement}\label{Line: update_beta}\;
    $k\gets k+1$, obtain $\hat{\vbeta}^{(k)}$ by solving \eqref{eq:PMU_placement}, {with \eqref{eq: PMU_no_overflow} replaced by the generated constraints}\label{Line: update_beta_new_constr}
    }
    \lElse{
    break
    }
}
Return $\hat{\vbeta}^{(k)}$, indicators of the selected PMU placement\; 
\caption{Alternating Optimization}
\end{algorithm}
As a bi-level MILP, PPOP is still difficult to solve due to the integer variables in \eqref{eq:reformulate_attacker} and \eqref{eq:PMU_placement}.
Since one of the fundamental challenges in solving bi-level MILPs is the lack of explicit description of the upper-level optimization's feasible region, we propose an alternating optimization framework shown in Alg.~\ref{alg: alter_opt_v1} {to solve PPOP by gradually approximating the feasible region of the upper-level optimization through constraint generation}. In Sections~\ref{subsec:AONG}--\ref{subsec:AODC}, we will give two concrete constraint generation methods for Line~\ref{Line: update_beta} of Alg.~\ref{alg: alter_opt_v1} based on the results of \eqref{eq:reformulate_attacker}. \looseness=-1

In the sequel, we assume that solving \eqref{eq:reformulate_attacker} returns a successful attack tuple $(\va_p^{(k)}, \va_c^{(k)}, e^{(k)})$ if $\psi_a(\hat{\vbeta}^{(k)})>0$.

\subsection{Alternating Optimization with No-Good Constraints (AONG)}\label{subsec:AONG}

In this section, we give the first specific algorithm under the framework of Alg.~\ref{alg: alter_opt_v1}, in which the added constraints in Line~\ref{Line: update_beta} are motivated by the following observation:
\begin{lemma}\label{lem: ob1}
Given $\hat{\vbeta}$ and $\Omega(\hat{\vbeta}):=\{u\in V:\: \hat{\beta}_u>0\}$, if there exists a successful attack tuple $(\va_p, \va_c, e)$, then for all $\vbeta$ with $\Omega(\vbeta)\subseteq\Omega(\hat{\vbeta})$, there exists a successful attack tuple.
\end{lemma}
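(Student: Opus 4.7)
The plan is to prove Lemma 1 by a direct monotonicity argument: shrinking the PMU set can only relax the attacker's constraints, so any attack tuple that succeeds under the larger placement automatically succeeds under the smaller one. I would recycle the entire witness (physical attack, cyber attack, and all associated continuous variables) from the solution of \eqref{eq:reformulate_attacker} under $\hat{\vbeta}$ and show it is still feasible under $\vbeta$.

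First I would translate the hypothesis $\Omega(\vbeta)\subseteq\Omega(\hat{\vbeta})$ into the componentwise inequality $\vbeta\le\hat{\vbeta}$, and then push this monotonicity through the definitions \eqref{eq: const_beta_x_N} and \eqref{eq: const_beta_x_L}. Let $(\vx_N,\vx_L)$ and $(\hat{\vx}_N,\hat{\vx}_L)$ be the unique binary vectors induced by $\vbeta$ and $\hat{\vbeta}$ respectively (these are uniquely determined by the placements because of the "if and only if" logic used in \eqref{eq: const_beta_x_N}--\eqref{eq: const_beta_x_L}). Since each component of $\vx_N$ is $1$ exactly when some neighbor carries a PMU, and similarly for $\vx_L$, we immediately obtain $\vx_N\le\hat{\vx}_N$ and $\vx_L\le\hat{\vx}_L$.

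Now let $(\va_p,\va_c,e)$ be a successful attack tuple under $\hat{\vbeta}$, with associated continuous/binary variables $\tilde{\vtheta}_2,\vtheta_2,\tilde{\vtheta}_3,\vtheta_3,\tilde{\vf}_2,\vf_2,\vf_3,\vf_{con}$ and $\vpi$ that solve \eqref{eq:reformulate_attacker} with $\pi_e=1$. I would reuse exactly the same values under $\vbeta$ and verify each constraint. The constraints \eqref{eq:connectivity_constr}, \eqref{eq: dc_t2_gt}, \eqref{eq: fdi_range}--\eqref{eq:xi_p} of \eqref{constr: f2_fdi}, \eqref{eq: constr_f3}, \eqref{eq: zero_ref_theta}, \eqref{eq: false_sced}, and \eqref{eq: attack_overflow} do not involve $\vbeta$ at all, so they hold verbatim. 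The only constraints that reference the PMU placement are \eqref{eq: beta_on_theta}--\eqref{eq: beta_on_ap} (and their linearization \eqref{eq: constr_theta2_pmu_p}--\eqref{eq: constr_theta2_pmu_n}); these are one-directional implications of the form "$x_{N,u}=1\Rightarrow\tilde{\theta}_{2,u}=\theta_{2,u}$" and "$x_{L,e}=1\Rightarrow a_{p,e}=0$". Because $\vx_N\le\hat{\vx}_N$ and $\vx_L\le\hat{\vx}_L$, every index at which the $\vbeta$-implication fires was already an index at which the $\hat{\vbeta}$-implication fired, and the right-hand equalities were therefore already imposed on the witness. Hence these constraints remain satisfied, and $\pi_e=1$ is preserved, so the same tuple $(\va_p,\va_c,e)$ is still successful under $\vbeta$.

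The argument is essentially bookkeeping, and I do not expect a genuine obstacle: the only subtlety is keeping careful track of the direction of the implications in \eqref{eq: PMU_defend}, so that fewer active indicators really do translate into \emph{weaker} constraints rather than missing constraints. Once $\vx_N\le\hat{\vx}_N$ and $\vx_L\le\hat{\vx}_L$ are established, the rest follows simply because all the other constraints in \eqref{eq:reformulate_attacker} are independent of the placement.
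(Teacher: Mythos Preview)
Your proposal is correct and follows exactly the paper's approach: the paper's own proof is the single sentence ``For any $\vbeta$ with $\Omega(\vbeta)\subseteq\Omega(\hat{\vbeta})$, $(\va_p, \va_c, e)$ remains a successful attack tuple,'' and you have simply unpacked the bookkeeping behind that claim by tracking the monotonicity $\vx_N\le\hat{\vx}_N$, $\vx_L\le\hat{\vx}_L$ and verifying that the only $\vbeta$-dependent constraints \eqref{eq: PMU_defend} become weaker.
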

\begin{proof}
For any $\vbeta$ with $\Omega(\vbeta)\subseteq\Omega(\hat{\vbeta})$, $(\va_p, \va_c, e)$ remains a successful attack tuple.
\end{proof}
The above observation indicates that at least one PMU must be placed in $\Omega(\hat{\vbeta})^c:= V\setminus \Omega(\hat{\vbeta})$. Therefore, the optimal $\vbeta$ can be obtained in an iterative manner: during each iteration, we 
use the PMU placement $\hat{\vbeta}$ from the previous iteration (initially, $\hat{\vbeta}=\bm{0}$) to solve \eqref{eq:reformulate_attacker} for $\psi_a(\hat{\vbeta})$. If $\psi_a(\hat{\vbeta}) = 0$, $\hat{\vbeta}$ is the final solution; otherwise, we will add the following ``No-Good'' constraint: {$\sum_{i:\hat{\beta}_i=0}\beta_i \ge 1$}
to \eqref{eq:PMU_placement} for the next iteration to rule out the infeasible solution $\hat{\vbeta}$.

However, the above procedure will converge very slowly as $|\Omega(\hat{\vbeta})^c|$ is usually large. To speed up convergence, we augment each discovered infeasible solution $\hat{\vbeta}$ into a maximal infeasible solution $\hat{\vbeta}'$ to narrow down candidate solutions. 
This can be achieved by solving the following problem:
\begin{maxi!}|s|<b>
{}{ \lp{\hat{\vbeta}'}_0 }
{\label{eq: opt_max_nogood_but}}
{}
\addConstraint{\psi_a(\hat{\vbeta}') \ge 1}{}
\addConstraint{\hat{\beta}_u' = 1, \forall u\in V \mbox{ with }\hat{\beta}_u = 1,}{}
\end{maxi!}
which has the same decision variables as \eqref{eq:reformulate_attacker} and the additional $\hat{\vbeta}'$. Algorithm AONG adds the following \quotes{No-Good} constraint in Line~\ref{Line: update_beta} of Alg.~\ref{alg: alter_opt_v1}:
\begin{align}\label{eq: no_good_cut}
\sum_{i:\hat{\beta}'_i=0}\beta_i \ge 1.
\end{align}

AONG solves PPOP optimally, as proved in Appendix~\ref{appendix: additional_proof}. 

\begin{theorem}\label{thm:Optimality of alternating optimization}
AONG converges in finite time to an optimal solution to \eqref{eq:PMU_placement}.
\end{theorem}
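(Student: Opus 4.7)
The plan is to establish two properties of AONG in sequence: (i) it halts after finitely many iterations, and (ii) the placement it returns is globally optimal for \eqref{eq:PMU_placement}. Both properties will follow from a single structural fact about the cuts \eqref{eq: no_good_cut}, namely that each cut excludes an entire downward-closed set of PMU placements but none of them is feasible for \eqref{eq:PMU_placement}.

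For finite termination, I would argue that each iteration strictly shrinks the feasible region of the relaxed upper-level problem solved at Line~\ref{Line: update_beta_new_constr}. Specifically, the fixed-coordinate constraints in \eqref{eq: opt_max_nogood_but} force $\hat{\vbeta}' \ge \hat{\vbeta}^{(k)}$ componentwise, so $\hat{\beta}^{(k)}_i = 0$ whenever $\hat{\beta}'_i = 0$; hence $\sum_{i:\hat{\beta}'_i=0} \hat{\beta}^{(k)}_i = 0 < 1$, i.e., the current iterate $\hat{\vbeta}^{(k)}$ itself violates the newly added cut. Since $\hat{\vbeta}^{(k+1)}$ must satisfy every previously generated cut, the sequence $\{\hat{\vbeta}^{(k)}\}$ consists of pairwise distinct points of $\{0,1\}^{|V|}$, so the algorithm terminates in at most $2^{|V|}$ iterations.

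For optimality, I would show that every generated cut is valid, meaning it removes only placements that are infeasible for \eqref{eq:PMU_placement}. By \eqref{eq: opt_max_nogood_but} we have $\psi_a(\hat{\vbeta}')\ge 1$, so $\hat{\vbeta}'$ violates \eqref{eq: PMU_no_overflow}. More generally, any $\vbeta$ violating \eqref{eq: no_good_cut} has $\beta_i = 0$ for every $i$ with $\hat{\beta}'_i = 0$, i.e., $\Omega(\vbeta) \subseteq \Omega(\hat{\vbeta}')$, and by Lemma~\ref{lem: ob1} such a $\vbeta$ also admits a successful attack tuple and is therefore infeasible for \eqref{eq:PMU_placement}. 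Consequently the problem at Line~\ref{Line: update_beta_new_constr} is a genuine relaxation of \eqref{eq:PMU_placement}, so its optimal value is a lower bound on the true optimum. Upon termination, $\psi_a(\hat{\vbeta}^{(k)}) = 0$ makes $\hat{\vbeta}^{(k)}$ feasible for \eqref{eq:PMU_placement}, and since it attains the relaxation's optimal value it must be globally optimal.

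The main obstacle is confirming that every subproblem encountered is well defined. The attacker's problem \eqref{eq:reformulate_attacker} is feasible for any $\vbeta$ (e.g., the no-attack solution). Whenever \eqref{eq: opt_max_nogood_but} is invoked we have $\psi_a(\hat{\vbeta}^{(k)}) \ge 1$, so $\hat{\vbeta}^{(k)}$ itself is feasible for \eqref{eq: opt_max_nogood_but}, guaranteeing an optimum $\hat{\vbeta}'$. Finally, the relaxed upper-level problem at Line~\ref{Line: update_beta_new_constr} always admits the trivial placement $\vbeta=\bm{1}$: it achieves full observability and hence $\psi_a(\bm{1})=0$, and for every generated cut the generator $\hat{\vbeta}'$ satisfies $\hat{\vbeta}' \neq \bm{1}$ (otherwise $\psi_a(\hat{\vbeta}')=0$ contradicts $\psi_a(\hat{\vbeta}')\ge 1$), so $\sum_{i:\hat{\beta}'_i=0} 1 \ge 1$. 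With well-definedness in hand, the two-step argument above completes the proof.
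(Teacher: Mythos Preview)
Your proof is correct and follows essentially the same approach as the paper: finite termination via distinctness of iterates (each new cut excludes the current $\hat{\vbeta}^{(k)}$), and optimality via validity of the cuts (Lemma~\ref{lem: ob1} guarantees only infeasible placements are removed, so the cut-constrained problem remains a relaxation of \eqref{eq:PMU_placement}). The paper additionally states a facet characterization of the no-good polytope (its Lemma~\ref{lem: facet}) that is not strictly needed for the theorem, while you add a useful well-definedness check of the subproblems that the paper omits; neither difference changes the underlying argument.
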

Given the MILP formulation of \eqref{eq:reformulate_attacker} in Appendix~\ref{appendix: bi_level_milp}, 
it is easy to write \eqref{eq: opt_max_nogood_but} as a MILP and solve it by existing MILP solvers.
It is worth noting that solving \eqref{eq: opt_max_nogood_but} suboptimally does not affect the optimality of AONG. Thus, we can also apply heuristic algorithms (e.g., LP relaxation with rounding).

\subsection{Alternating Optimization with Double Constraints (AODC)}\label{subsec:AODC}
Building on AONG, we develop an additional constraint as a complement of \eqref{eq: no_good_cut} to accelerate convergence, in the special case where $\xi_c = \infty$ and $\psi_s(\vp, \vD)$ returns the set of $\vtheta$'s satisfying \eqref{eq: cc_load_meet}-\eqref{eq: cc_pgp}, i.e., it returns the feasible region of SCED 
rather than a single solution.  Such a special case is worth consideration because (i) $\xi_c = \infty$ represents the strongest cyber attack, and (ii) relaxing the optimality requirement in \eqref{eq: false_sced} means that the attacker is allowed to pick a solution for SCED within its feasible region, both making the attack stronger and hence the resulting PMU placement more robust in preventing outages.\looseness=-1 


Below we will first introduce the new constraints, called \quotes{Attack-Denial} constraints, and then give the AODC algorithm, in which both \quotes{No-Good} constraints and \quotes{Attack-Denial} constraints are added in Line~\ref{Line: update_beta} of Alg.~\ref{alg: alter_opt_v1}. The new constraints are motivated by the following observations about AONG: many PMU placements enumerated by AONG are vulnerable to attacks formed from the same attack pair $(\va_p, e)$, indicating that it is more efficient to generate constraints that can invalidate the identified attack pairs. More discussions are given in Appendix~\ref{appendix: effi_no_good}. 
\looseness=-1

The above observations motivate the following idea of \quotes{Attack-Denial} constraints: \emph{given a successful attack pair $(\va_p^{(k)}, e^{(k)})$ under $\vbeta^{(k)}$, the added constraints 
should guarantee that any PMU placement satisfying the constraints can prevent attacks that fail lines according to $\va_p^{(k)}$ from causing overload-induced tripping at line $e^{(k)}$.} We focus on $(\va_p^{(k)}, e^{(k)})$ instead of $(\va_p^{(k)}, \va_c^{(k)}, e^{(k)})$ due to the following observations:
\begin{enumerate}
    \item The number of $(\va_p^{(k)}, \va_c^{(k)}, e^{(k)})$'s is infinite since $\va_c^{(k)}$ is continuous, but the number of $(\va_p^{(k)}, e^{(k)})$'s is finite.
    \item Given $\vx_N$ and $(\va_p^{(k)}, e^{(k)})$, \eqref{eq: fdi_bypass}-\eqref{eq: attack_overflow} reduce to a linear system with only the continuous variables contained in $\vy_c$ under the assumptions that $\xi_c = \infty$ and $\psi_s(\vp, \vD)$ returns the set of $\vtheta$'s satisfying \eqref{eq: cc_load_meet}-\eqref{eq: cc_pgp}. The linear system can be summarized as
    \begin{subequations}\label{eq: LP_given_pair}
    \begin{align}
    \vF_1^{(k)}\vy_c &= \vs_1^{(k)}, \label{eq: LP_given_pair_eq}\\
    \vF_2^{(k)} \vy_c &\le \vs_2^{(k)} + \vF_3 \vx_N, \label{eq: LP_given_pair_ineq}
    \end{align}
    \end{subequations}
    where $\vF_1^{(k)}$, $\vF_2^{(k)}$, $\vF_3$, $\vs_1^{(k)}$, $\vs_2^{(k)}$ are constant matrices/vectors defined in Appendix~\ref{appendix: expansion_attack_denial_primal}. 
    An attack pair $(\va_p^{(k)},e^{(k)})$ can form a successful attack if {and only if} \eqref{eq: LP_given_pair} has a feasible solution.\looseness=-1
\end{enumerate}
{The above assumptions (i.e., $\xi_c=\infty$ and $\psi_s(\vp, \vD)$ returns all the $\vtheta$'s satisfying \eqref{eq: cc_load_meet}-\eqref{eq: cc_pgp}) are needed because: (i) $\xi_c = \infty$ implies that we no longer need {the binary variables used to linearize \eqref{eq: f2_fdi_xi_c} (i.e., $\vw_f$ and $\vw_p$ in \eqref{eq: linearization_xi_c} in Appendix~\ref{appendix: bi_level_milp})};
(ii) when the lower-level optimization returns the feasible region of \eqref{eq:reformulate_OPF}, \eqref{eq: false_sced} can be replaced by \eqref{eq: cc_load_meet}-\eqref{eq: cc_pgp} without introducing binary variables required for transforming \eqref{eq:reformulate_OPF} into its KKT conditions~\cite{yuan2011modeling}.} \looseness=-1

Our key observation is that a PMU placement $\vbeta$ can defend against an attack pair $(\va_p^{(k)}, e^{(k)})$ by either preventing the physical attack $\va_p^{(k)}$ or making \eqref{eq: LP_given_pair} infeasible. The former can be achieved by adding constraint $\sum_{l:a_{p,l}^{(k)}=1}x_{L,l} \ge 1$ (i.e., at least one attacked line must be incident to a PMU). The latter holds according to Gale's theorem of alternative \cite{mangasarian1994nonlinear} if and only if there exists $\vq_1^{(k)}$ and $\vq_2^{(k)} \ge \bm{0}$ satisfying
\begin{subequations}\label{eq: gale_formulation}
\begin{align}
(\vF_1^{(k)})^T \vq_1^{(k)} + (\vF_2^{(k)})^T \vq_2^{(k)} &= \bm{0},\\
(\vs_1^{(k)})^T\vq_1^{(k)} + (\vs_2^{(k)} + \vF_3 \vx_N)^T\vq_2^{(k)} &< 0, \label{eq: gale_strict_ineq}
\end{align}
\end{subequations}
where $\vq_1^{(k)} \in \mathbb{R}^{m_1}$ and $\vq_2^{(k)}\in \mathbb{R}^{m_2}$ can be treated as the dual variables for \eqref{eq: LP_given_pair_eq} and \eqref{eq: LP_given_pair_ineq}, respectively.

Based on the above observation, the \quotes{Attack-Denial} constraints for  defending against  $(\va_p^{(k)}, e^{(k)})$ are:
\begin{subequations}\label{eq: attack_denial_constr}
\allowdisplaybreaks
\begin{align}
&(\vF_1^{(k)})^T \vq_1^{(k)} + (\vF_2^{(k)})^T \vq_2^{(k)} = \bm{0}, \label{eq: gale_eq}\\
&(\vs_1^{(k)})^T\vq_1^{(k)} + (\vs_2^{(k)}+\vF_3 \vx_N)^T\vq_2^{(k)} \le w_{a,k}-1, \label{eq: gale_ineq} \\
&\sum_{l:a_{p,l}^{(k)}=1}x_{L,l} \ge w_{a,k}, \label{eq: attack_denial_xL} \\
&\vq_2^{(k)} \ge \bm{0}, w_{a,k}\in\{0,1\},
\end{align}
\end{subequations}
where {$\vq_1^{(k)}$,} $\vq_2^{(k)}$, and $w_{a,k}$ are newly introduced variables. Note that \eqref{eq: gale_strict_ineq} and \eqref{eq: gale_ineq} are equivalent when $w_k = 0$ since we can scale $\vq_1^{(k)}$ and $\vq_2^{(k)}$ to satisfy \eqref{eq: gale_ineq} if \eqref{eq: gale_strict_ineq} holds. 
The binary variable $w_{a,k}$ indicates which approach to use for defending against $(\va_p^{(k)}, e^{(k)})$. When $w_{a,k} = 0$, \eqref{eq: attack_denial_xL} holds trivially, in which case $\vbeta$ defends against $(\va_p^{(k)}, e^{(k)})$ by satisfying \eqref{eq: gale_formulation}, i.e., preventing the cyber attack from causing overload-induced tripping at line $e^{(k)}$. When $w_{a,k} = 1$, $\vq_1^{(k)} = \bm{0}$ and $\vq_2^{(k)} = \bm{0}$ will satisfy the constraints \eqref{eq: gale_eq}-\eqref{eq: gale_ineq}, in which case $\vbeta$ defends against $(\va_p^{(k)}, e^{(k)})$ by preventing the physical attack $\va_p^{(k)}$.

Now, we are ready to present the AODC algorithm, where $\hat{\vbeta}^{(K+1)}$ in Line~\ref{Line: update_beta_new_constr} of Alg.~\ref{alg: alter_opt_v1} is obtained by solving:
\begin{mini!}|s|<b>
{}{ \lp{\vbeta}_0 }
{\label{eq: vcg_formulation}}
{}
\addConstraint{\eqref{eq: const_beta_x_N}-\eqref{eq: const_beta_x_L}, \eqref{eq: attack_denial_constr} \mbox{ for } k = 1,\cdots, K}{} \label{eq:x_N,x_L}
\addConstraint{\sum_{i:\hat{\beta}'^{(k)}_i=0}\beta_i \ge 1, k = 1,\cdots, K}{}
\addConstraint{\vbeta\in \{0,\: 1\}^{|V|},}{}
\end{mini!}
where the decision variables are $\vbeta$, $\vx_N$, $\vx_L$, $\vq_1^{(k)}$, $\vq_2^{(k)}$, and $w_{a,k}$ for $k=1,\cdots,K$.

To convert \eqref{eq: vcg_formulation} to a MILP, we linearize $(\vF_3\vx_N)^T \vq_2^{(k)}$ using McCormick's relaxation. Concretely, note that \looseness=-1
\begin{align}
(\vF_3\vx_N)^T \vq_2^{(k)} = \sum_{u\in V} \vx_{N,u} \left( \sum_{i=1}^{m_2} F_{3,i,u} q_{2,i}^{(k)} \right), \forall k.
\end{align}
Assuming that $\sum_i F_{3,i,u} q_{2,i}^{(k)} \in [\underline{M}_{F}, \overline{M}_{F}]$, we introduce a continuous auxiliary variable $y_u$
and the following constraints: \looseness=-1
\begin{subequations}\label{eq: McCormick_relaxation}
\allowdisplaybreaks
\begin{align}
\underline{M}_{F} x_{N,u} &\le y_u \le \overline{M}_{F} x_{N,u},\\
y_u &\le \left( \sum_{i=1}^{m_2} F_{3,i,u} q_{2,i}^{(k)} \right) +\underline{M}_{F}x_{N,u}-\underline{M}_{F},\\
y_u &\ge \left( \sum_{i=1}^{m_2} F_{3,i,u} q_{2,i}^{(k)} \right) +\overline{M}_{F}x_{N,u}-\overline{M}_{F}.
\end{align}
\end{subequations}
{Note that $y_u = \sum_{i=1}^{m_2} F_{3,i,u} q_{2,i}^{(k)}$ if $x_{N,u} = 1$ and $y_u = 0$ otherwise, i.e., $y_u = \vx_{N,u} \left( \sum_{i=1}^{m_2} F_{3,i,u} q_{2,i}^{(k)} \right)$.} Then, $(\vF_3\vx_N)^T \vq_2^{(k)}$ in \eqref{eq: gale_ineq} can be replaced by $\sum_{u\in V}y_u$ subject to \eqref{eq: McCormick_relaxation}. \looseness=-1

AODC guarantees an optimal solution at convergence in the considered special case (see proof in Appendix~\ref{appendix: additional_proof}). 

\begin{theorem}\label{thm:Optimality of AODC}
If $\xi_c = \infty$ and $\psi_s(\vp, \vD)$ returns the feasible region of \eqref{eq:reformulate_OPF}, then AODC will converge in finite time to an optimal solution to \eqref{eq:PMU_placement}.
\end{theorem}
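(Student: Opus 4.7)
The plan is to establish two properties of AODC: (i) termination in finitely many iterations, and (ii) optimality of the returned placement $\hat{\vbeta}^{(K+1)}$ for \eqref{eq:PMU_placement}. Termination follows essentially as in Theorem~\ref{thm:Optimality of alternating optimization}, since the No-Good constraint \eqref{eq: no_good_cut} is retained inside \eqref{eq: vcg_formulation}: there are only $2^{|V|}$ possible placements, and each iteration rules out at least the current $\hat{\vbeta}^{(k)}$ (in fact all subsets of $\Omega(\hat{\vbeta}'^{(k)})$), so no placement is revisited. Augmenting with Attack-Denial constraints only tightens the feasible region further, which can only hasten termination, not obstruct it.

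The crux is optimality, which I would prove by showing that the Attack-Denial constraints are \emph{valid}, i.e., every PMU placement $\vbeta^\star$ with $\psi_a(\vbeta^\star)=0$ remains feasible for \eqref{eq: vcg_formulation} throughout the run. First I would verify that under the two special-case assumptions ($\xi_c=\infty$ and $\psi_s$ returning the entire feasible region of SCED), once an attack pair $(\va_p^{(k)},e^{(k)})$ and the vector $\vx_N$ induced by $\vbeta$ are fixed, the attacker's subproblem reduces exactly to the linear feasibility problem \eqref{eq: LP_given_pair} over the continuous vector $\vy_c$; the two assumptions are precisely what eliminate the binary variables otherwise needed to linearize \eqref{eq: f2_fdi_xi_c} and to reformulate \eqref{eq: false_sced} via KKT conditions. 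Next, by Gale's theorem of the alternative, infeasibility of \eqref{eq: LP_given_pair} for a given $\vx_N$ is equivalent to existence of $\vq_1^{(k)}$ and $\vq_2^{(k)}\ge \bm{0}$ satisfying \eqref{eq: gale_formulation}. Consequently, any $\vbeta^\star$ with $\psi_a(\vbeta^\star)=0$ defends against $(\va_p^{(k)},e^{(k)})$ either by blocking the physical attack (witnessing \eqref{eq: attack_denial_constr} with $w_{a,k}=1$ and $\vq_1^{(k)}=\vq_2^{(k)}=\bm{0}$) or by rendering \eqref{eq: LP_given_pair} infeasible (witnessing \eqref{eq: attack_denial_constr} with $w_{a,k}=0$ and appropriately scaled $\vq_1^{(k)},\vq_2^{(k)}$ that convert the strict \eqref{eq: gale_strict_ineq} into \eqref{eq: gale_ineq}). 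A separate technical check is that the McCormick linearization \eqref{eq: McCormick_relaxation} is exact rather than a relaxation: because $x_{N,u}\in\{0,1\}$ and the envelopes $\underline{M}_F,\overline{M}_F$ correctly bracket $\sum_i F_{3,i,u}q_{2,i}^{(k)}$, the auxiliary $y_u$ is forced to equal $x_{N,u}\bigl(\sum_i F_{3,i,u}q_{2,i}^{(k)}\bigr)$, so no spurious feasibility is introduced into \eqref{eq: gale_ineq}. Combining the above, every optimum $\vbeta^\star$ lies in the feasible region of \eqref{eq: vcg_formulation} throughout the run, hence $\|\hat{\vbeta}^{(K+1)}\|_0\le \|\vbeta^\star\|_0$; and since termination yields $\psi_a(\hat{\vbeta}^{(K+1)})=0$, we conclude optimality.

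The hardest part will be rigorously executing the reduction from \eqref{eq:reformulate_attacker} to \eqref{eq: LP_given_pair} — specifically, bookkeeping exactly which integer variables of the attacker's MILP formulation vanish under the two assumptions, and handling the overload condition $|f_{3,e^{(k)}}|>\gamma_{e^{(k)}}f_{\tiny\text{max},e^{(k)}}$ cleanly (e.g., by splitting into two sign cases, each yielding a linear system of the form \eqref{eq: LP_given_pair}, and applying Gale's alternative to each). A secondary subtlety is choosing $\underline{M}_F,\overline{M}_F$ tight enough to preserve exactness of the McCormick step; an overly loose envelope could admit spurious $\vbeta$'s and silently break the optimality guarantee.
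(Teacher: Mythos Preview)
Your proposal is correct and follows essentially the same route as the paper: finite termination via the retained No-Good constraints, and optimality via validity of the Attack-Denial constraints (any $\vbeta^\star$ with $\psi_a(\vbeta^\star)=0$ remains feasible for \eqref{eq: vcg_formulation}, so the minimizer there cannot require more PMUs). The paper's own proof is a two-line contradiction that simply asserts this validity, leaving the LP reduction and Gale-alternative justification you spell out to the surrounding derivation in Section~\ref{subsec:AODC}.

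One minor correction to your McCormick discussion: because $x_{N,u}\in\{0,1\}$, the linearization \eqref{eq: McCormick_relaxation} is \emph{exact} for any envelope $[\underline{M}_F,\overline{M}_F]$ that validly brackets $\sum_i F_{3,i,u}\,q_{2,i}^{(k)}$; an overly \emph{loose} envelope cannot admit spurious $\vbeta$'s. The risk to optimality runs the other way --- an overly \emph{tight} (invalid) envelope could exclude the witnessing $\vq_2^{(k)}$ and thereby render the true optimum $\vbeta^\star$ infeasible in \eqref{eq: vcg_formulation}. So the care required is that the bounds be wide enough, not tight enough.
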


Although in the worst case AODC may still enumerate all the attack pairs, which can be exponential in $|E|$, we have observed that in practice it usually converges after identifying a relatively small set of \quotes{typical attack pairs}, as shown in Table~\ref{tab: computational_iterations}.\looseness=-1

\subsection{Efficient Heuristics} 

Although Alg.~\ref{alg: alter_opt_v1} is guaranteed to find the optimal solution, the computational complexity 
can grow exponentially with the network size due to the requirement of solving MILPs in each iteration, which motivates us to develop polynomial-time heuristics. A scenario of particular interest is when $\xi_p$ is small, i.e., $\xi_p=\mathcal{O}(1)$. In this case, the total number of attack pairs is polynomial in $|E|$, and thus the number of iterations in AODC and the complexity of computing a new attack pair in each iteration are both polynomial in $|E|$. Our focus in this case is thus on solving \eqref{eq: vcg_formulation} approximately in polynomial time.

\emph{Relaxation:}
One idea is to directly relax the MILP version of \eqref{eq: vcg_formulation} into an LP. However, simple LP relaxation will not work:\looseness=-1
\begin{enumerate}
    \item The LP relaxation will invalidate the McCormick relaxation \eqref{eq: McCormick_relaxation} for the bilinear term $(\vF_3\vx_N)^T \vq_2^{(k)}$.
    \item The feasible region is significantly extended by the LP relaxation due to the adopted big-M modeling technique. 
    \item Given a continuous solution $\tilde{\vbeta}$ 
    obtained from the LP relaxation, it is non-trivial to determine which subset of $\Omega(\tilde{\vbeta})$, if any, can achieve our defense goal.
\end{enumerate}

We have developed a polynomial-time heuristic that can find a better PMU placement. The core of our heuristic is a different ``LP relaxation'' of \eqref{eq: vcg_formulation}. 
Recall that the main challenge in directly relaxing the MILP version of \eqref{eq: vcg_formulation} is the invalidation of \eqref{eq: McCormick_relaxation} for linearizing $(\vF_3\vx_N)^T \vq_2^{(k)}$. 
To overcome this issue, we make the following observation (see proof in Appendix~\ref{appendix: additional_proof}): 
\begin{lemma}\label{lem: lp_relax_attack_denial}
Define $\vLambda_{x,p}, \vLambda_{x,n} \in \{0,1\}^{|V|\times m_2}$ such that $(\vLambda_{x,p} \vq_2)_u$ is the dual variable for \eqref{eq: constr_theta2_pmu_p} and $(\vLambda_{x,n} \vq_2)_u$ is the dual variable for \eqref{eq: constr_theta2_pmu_n}. Suppose that the linear system
\begin{subequations}\label{eq: attack_denial_lp}
\allowdisplaybreaks
\begin{align}
&(\vF_1^{(k)})^T \vq_1^{(k)} + (\vF_2^{(k)})^T \vq_2^{(k)} = \bm{0},\\
&(\vs_1^{(k)})^T\vq_1^{(k)} + (\vs_2^{(k)} + \vF_3)^T\vq_2^{(k)} \le -1, \label{eq: lp_gale_ineq} \\
&(\vLambda_{x,p}+\vLambda_{x,n}) \vq_2 \le M_{q}\underline{\vA} \vbeta, \label{eq: q2_beta_relationship} \\
&\vq_2^{(k)} \ge \bm{0},~ \bm{1}\ge  \vbeta \ge \bm{0}
\end{align}
\end{subequations}
for attack pair $(\va_p^{(k)},e^{(k)})$ is feasible under $\vbeta = \check{\vbeta}$, where $M_{q}$ is a large constant (defined in Appendix~\ref{appendix:Calculation of Big-M}).  
Then, $\vbeta = \ceil{\check{\vbeta}}$ satisfies \eqref{eq: const_beta_x_N}--\eqref{eq: const_beta_x_L} and \eqref{eq: attack_denial_constr} with $w_{a,k} = 0$ for the attack pair $(\va_p^{(k)},e^{(k)})$. \looseness=-1
%
\end{lemma}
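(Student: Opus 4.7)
The plan is to verify, in order, the three pieces of the conclusion: (i) $\vbeta := \lceil\check{\vbeta}\rceil$ is a valid binary PMU indicator vector, (ii) there exist binary $\vx_N, \vx_L$ compatible with $\vbeta$ through \eqref{eq: const_beta_x_N}--\eqref{eq: const_beta_x_L}, and (iii) the $(\vq_1^{(k)}, \vq_2^{(k)})$ returned by the LP \eqref{eq: attack_denial_lp}, together with $w_{a,k} = 0$ and this $\vx_N$, satisfy \eqref{eq: attack_denial_constr}. Parts (i) and (ii) are essentially definitional: $\lceil\check{\vbeta}\rceil \in \{0,1\}^{|V|}$, and for every node $v$ one can set $x_{N,v} = 1$ iff $(\underline{\vA}\vbeta)_v > 0$, and $x_{L,e} = 1$ iff $(|\vD|^T\vbeta)_e > 0$, which satisfies the sandwich constraints \eqref{eq: const_beta_x_N}--\eqref{eq: const_beta_x_L}. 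The constraint \eqref{eq: attack_denial_xL} is trivial under $w_{a,k} = 0$, and \eqref{eq: gale_eq} is carried over unchanged from the LP. So the real work is establishing \eqref{eq: gale_ineq}.

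The key step is to relate the relaxed LP bound \eqref{eq: lp_gale_ineq}, which uses $\vF_3$ (implicitly $\vx_N = \bm{1}$), to the integer bound \eqref{eq: gale_ineq}, which uses $\vF_3 \vx_N$. The difference $\vF_3(\bm{1} - \vx_N)$ has nonzero entries only in the rows of \eqref{eq: constr_theta2_pmu_p}--\eqref{eq: constr_theta2_pmu_n}, whose dual components live in $\vLambda_{x,p}\vq_2$ and $\vLambda_{x,n}\vq_2$. I will show that for every node $v$, the product of the row-$v$ entry of $\vF_3(\bm{1} - \vx_N)$ and the corresponding dual component vanishes. If $(\vLambda_{x,p}\vq_2)_v + (\vLambda_{x,n}\vq_2)_v > 0$, then by \eqref{eq: q2_beta_relationship} we must have $(\underline{\vA}\check{\vbeta})_v > 0$, so there exists some $u \in \mathcal{V}_v$ with $\check{\beta}_u > 0$, i.e., $\beta_u = \lceil\check{\beta}_u\rceil = 1$; this forces $(\underline{\vA}\vbeta)_v \ge 1$ and hence $x_{N,v} = 1$ by the construction in (ii), so $(1 - x_{N,v}) = 0$. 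Otherwise the dual component is already zero. Summing over rows gives $(\vs_2^{(k)} + \vF_3 \vx_N)^T \vq_2^{(k)} = (\vs_2^{(k)} + \vF_3)^T \vq_2^{(k)}$, so \eqref{eq: lp_gale_ineq} immediately implies $(\vs_1^{(k)})^T \vq_1^{(k)} + (\vs_2^{(k)} + \vF_3 \vx_N)^T \vq_2^{(k)} \le -1 = w_{a,k} - 1$, which is \eqref{eq: gale_ineq}.

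The main obstacle is the bookkeeping around $\vF_3$: one has to unpack its structure from Appendix~\ref{appendix: expansion_attack_denial_primal} to confirm that the only rows with nonzero $\vF_3$ entries are exactly those tied to $\vx_N$ via \eqref{eq: constr_theta2_pmu_p}--\eqref{eq: constr_theta2_pmu_n} and, in those rows, $\vF_3$ acts by selecting $x_{N,v}$ with a single negative coefficient (namely $-M_{2,\theta}$). Provided this structural claim is verified, the matching between dual-support and the induced $x_{N,v} = 1$ via \eqref{eq: q2_beta_relationship} is the clean content of the lemma; the constant $M_q$ simply needs to be large enough that it does not truncate any feasible dual (this is why the bound is discussed in Appendix~\ref{appendix:Calculation of Big-M}). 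All other entries of $\vq_1^{(k)}$ and $\vq_2^{(k)}$ are inherited from the LP as-is, so feasibility of the remaining constraints in \eqref{eq: attack_denial_constr} is immediate, completing the proof.
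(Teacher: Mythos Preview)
Your proof is correct, but it takes a more elaborate route than the paper's. The paper does not use constraint \eqref{eq: q2_beta_relationship} at all in the proof of this lemma. Instead, it relies only on a sign argument: since every entry $F_{3,i,u}$ is either $0$ or $-M_{2,\theta}$ (hence non-positive), $x_{N,u}(\lceil\check{\vbeta}\rceil)\in[0,1]$, and $\check q_{2,i}\ge 0$, one gets directly
\[
(\vF_3\vx_N)^T\check{\vq}_2 \;=\; \sum_{u}x_{N,u}\sum_i F_{3,i,u}\check q_{2,i}\;\le\; \sum_{u}\sum_i F_{3,i,u}\check q_{2,i}\;=\;(\vF_3\bm{1})^T\check{\vq}_2,
\]
which is enough to pass from \eqref{eq: lp_gale_ineq} to \eqref{eq: gale_ineq} with $w_{a,k}=0$. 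Your argument instead invokes \eqref{eq: q2_beta_relationship} to show that the support of $(\vLambda_{x,p}+\vLambda_{x,n})\check{\vq}_2$ forces $x_{N,v}=1$ after rounding, and hence obtains \emph{equality} $(\vF_3\vx_N)^T\check{\vq}_2=(\vF_3\bm{1})^T\check{\vq}_2$. That is a valid and slightly stronger conclusion, but it is not needed; the constraint \eqref{eq: q2_beta_relationship} is in the LP to couple $\vq_2$ with $\vbeta$ so that the relaxed solution points to useful PMU locations, not because the feasibility argument here requires it. So your characterization of \eqref{eq: q2_beta_relationship} as ``the clean content of the lemma'' overstates its role in the proof itself.
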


Lemma~\ref{lem: lp_relax_attack_denial} suggests that given an attack pair $(\va_p^{(k)},e^{(k)})$, we can relax the mixed integer \quotes{Attach-Denial} constraints \eqref{eq: attack_denial_constr} into the linear constraints \eqref{eq: attack_denial_lp} and round up the fractional solution to obtain a valid PMU placement, which is guaranteed to prevent the given attack pair from forming successful attack tuples.
{According to Gale's theorem of alternative, $\lrbrackets{(\vLambda_{x,p}+\vLambda_{x,n}) \vq_2^{(k)}}_u > 0$ only if at least one of \eqref{eq: constr_theta2_pmu_p} and \eqref{eq: constr_theta2_pmu_n} is \emph{effective} for making \eqref{eq: LP_given_pair} infeasible\footnote{We say that an inequality in \eqref{eq: LP_given_pair} is \emph{effective} for making \eqref{eq: LP_given_pair} infeasible if removing it will change the feasibility of \eqref{eq: LP_given_pair}.}. Since \eqref{eq: constr_theta2_pmu_p}-\eqref{eq: constr_theta2_pmu_n} is effective if and only if $x_{N,u} = 1$ (under the constraint of $x_{N,u} \in \{0,1\}$), we use $(\vLambda_{x,p}+\vLambda_{x,n}) \vq_2^{(k)}$ as a proxy of $\vx_N$ 
in Lemma~\ref{lem: lp_relax_attack_denial}. } \looseness=-1


Lemma~\ref{lem: lp_relax_attack_denial} motivates us to formulate the following LP based on a given set $\mathcal{C}$ 
of infeasible PMU placements and a given set $\{(\va_p^{(k)}, e^{(k)})\}_{k=1}^K$ of attack pairs:
\begin{mini!}|s|<b>
{}{ \sum_{u\in V}\beta_u }
{\label{eq: vcg_LP_relax}}
{}
\addConstraint{ \eqref{eq: attack_denial_lp} \mbox{ for } k=1,\cdots,K }{}\label{eq: constr_q_beta}
\addConstraint{\sum_{i:\hat{\beta}_i=0}\beta_i \ge 1, \forall \hat{\vbeta} \in \mathcal{C},}{}\label{eq: constr_no_good_relax}
\end{mini!}
where \eqref{eq: constr_q_beta} models relaxed \quotes{Attack-Denial} constraints and \eqref{eq: constr_no_good_relax} models relaxed \quotes{No-Good} constraints. In this sense, \eqref{eq: vcg_LP_relax} is a \quotes{LP relaxation} of \eqref{eq: vcg_formulation}.
However, instead of directly computing a PMU placement from \eqref{eq: vcg_LP_relax} which still faces some of the issues for simple LP relaxation, our idea is to use the result of \eqref{eq: vcg_LP_relax} to identify important nodes for PMU placement to defend against the given attack pairs in the case of $w_{a,k} = 0$ in \eqref{eq: attack_denial_constr}. We will account for the case of $w_{a,k} = 1$ separately in the proposed algorithm to avoid scaling and numerical issues.
\begin{algorithm}\label{alg: heuristic_3_phase}
\SetAlgoLined
\SetKwFunction{Fmain}{FailEdgeDetection}
\SetKwInOut{Input}{input}\SetKwInOut{Output}{output}
\tcc{Phase-1: find a set $\mathcal{A}_0$ of attack pairs}
\textbf{Initialization: }$k=1$,
$\hat{\vbeta}^{(k)} = \bm{0}$, $\mathcal{A}_0 = \emptyset$, $\mathcal{C} = \emptyset$\;
\While{$\psi_a(\hat{\vbeta}^{(k)}) > 0$}{
    $\mathcal{A}_0\leftarrow \mathcal{A}_0\cup \{(\va_p^{(k)}, e^{(k)})\}$, where $(\va_p^{(k)}, e^{(k)})$ is obtained by solving \eqref{eq:reformulate_attacker} under $\hat{\vbeta}^{(k)}$\; \label{Line: ph1_atttacker_v1}
    $\mathcal{C} \gets \mathcal{C} \cup \{\hat{\vbeta}^{(k)}\}$, $k\gets k+1$\;
    obtain {$\check{\vbeta}^{(k)}$} by solving \eqref{eq: vcg_LP_relax} over $\mathcal{C}$ and $\mathcal{A}_0$\;\label{Line: ph1_vcg_lp}
    Rounding: $\hat{\vbeta}^{(k)} \gets \ceil{\check{\vbeta}^{(k)}}$\;
}
\tcc{Phase-2: find candidate placements $\{\Omega_i\}_{i=1}^{K_{c}}$ to defend against $\mathcal{A}_0$}
Set $\Omega_i := \{u_i\}, i=1,\cdots,K_{c}$, where $\{u_i\}_{i=1}^{K_{c}}$ are the indices of the largest $K_c$ elements of $\check{\vbeta}^{(k)}$ {that is obtained in the last iteration of phase-1}\;
$\{\Omega_i\}_{i=1}^{K_{c}}, \mathcal{C} \gets \mbox{UpdateCandidate}\lrbrackets{\{\Omega_i\}_{i=1}^{K_{c}}, \mathcal{A}_0, \mathcal{C}}$\;\label{Line: init_candidates_lp}
\tcc{Phase-3: augment $\{\Omega_i\}_{i=1}^{K_{c}}$ to find a placement $\Omega$ with $\psi_a\left(\vbeta(\Omega)\right) = 0$}
\While{True}{
    $\mathcal{A} \gets \emptyset$\;
    \For{$i\gets1$ \KwTo $K_{c}$}{
    \lIf{$\psi_a\lrbrackets{\vbeta(\Omega_i)} > 0$ \label{Line: attack_generate_phase3}}
    {Generate $(\va_p^{(i)}, e^{(i)})$ and $\mathcal{A} \gets \mathcal{A}\cup (\va_p^{(i)}, e^{(i)})$}
    \lElse{Return $\Omega^* = \arg\min_{\Omega_j: \psi_a(\vbeta(\Omega_j))=0} |\Omega_j|$ if $|\Omega^*| \le 1+ \min_{\Omega_j: \psi_a(\vbeta(\Omega_j))>0} |\Omega_j|$}
    }
    $\{\Omega_i\}_{i=1}^{K_{c}}, \mathcal{C} \gets \mbox{UpdateCandidate}\lrbrackets{\{\Omega_i\}_{i=1}^{K_{c}}, \mathcal{A}, \mathcal{C}}$\; \label{Line: uc_A3}
}
\caption{3-phase Secured PMU Placement}
\end{algorithm}
\emph{Algorithm:}
The details of the proposed heuristic is given in Alg.~\ref{alg: heuristic_3_phase}, which relies on the function \emph{UpdateCandidate($\cdot$)} shown in Alg.~\ref{alg: primitive_find_candidate}. The logic behind the heuristic is similar to that in AODC, i.e., iteratively updating PMU placements 
based on newly found attack pairs. The questions are: (\romannumeral1) how to generate initial placements, (\romannumeral2) how to find attack pairs that can cause outages under given placements, and (\romannumeral3) how to update the given placements 
to defend against the newly found attack pairs, all in polynomial time. Since this algorithm is designed for the case of $\xi_p=\mathcal{O}(1)$, under which question (\romannumeral2) is easily solvable, our focus will be on questions (\romannumeral1) and (\romannumeral3).

We answer question (\romannumeral1) in two phases. Specifically, in \emph{phase-1}, we iteratively find a set of attack pairs $\mathcal{A}_0$ such that solving \eqref{eq: vcg_LP_relax} over $\mathcal{A}_0$ leads to a fractional solution $\check{\vbeta}$ with $\psi_a\lrbrackets{\ceil{\check{\vbeta}}} = 0$. Then in \emph{phase-2}, we search for a set of candidate PMU placements $\{\Omega_i\}_{i=1}^{K_{c}}$ to defend against $\mathcal{A}_0$ in the hope that $|\Omega_i| < |\Omega(\ceil{\check{\vbeta}})|$. The motivation for maintaining $K_{c} > 1$ candidates is to avoid the situation where the computed placement is effective in defending against the given attacks but ineffective for other attacks.

We answer (\romannumeral3) in Alg.~\ref{alg: primitive_find_candidate}, which iteratively augments a given set of candidate placements $\{\Omega_i\}_{i=1}^{K_{c}}$ to defend against a given set $\mathcal{A}$ of attack pairs. For each candidate placement not effective against all the attack pairs in $\mathcal{A}$, Alg.~\ref{alg: primitive_find_candidate} will generate $K_L$ and $K_A$ new candidate placements in Line~\ref{line: alg3_candidate_phy_attack} and Lines~\ref{line: alg3_sol_LP}-\ref{line: alg3_up_Q}, respectively. Then, Line~\ref{line: alg3_new_can_2} will select the $K_{c}$ placements most effective in defending against the attack pairs in $\mathcal{A}$ from the pool of $K_{c}\cdot (K_A + K_L)$ candidate placements. We now characterize the complexity of Alg.~\ref{alg: heuristic_3_phase} (see proof in Appendix~\ref{appendix: additional_proof}). 
\looseness=-1

\begin{theorem}\label{theo: polynomial_time_heuristic}
If $\xi_p = \mathcal{O}(1)$, then the complexity of Alg.~\ref{alg: heuristic_3_phase} is polynomial in $|V|$, $|E|$, and $K_c$.
\end{theorem}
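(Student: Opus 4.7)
The proof plan rests on three observations: (i) when $\xi_p = \mathcal{O}(1)$, the attacker's problem \eqref{eq:reformulate_attacker} can be evaluated in polynomial time by exhaustive enumeration; (ii) the LP in \eqref{eq: vcg_LP_relax} and the subroutine \emph{UpdateCandidate} run in polynomial time per call; and (iii) the total number of outer iterations (across all three phases) is polynomially bounded because each iteration introduces at least one new attack pair and the attack pair space itself is polynomial in $|E|$.

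First, I would establish the polynomial bound on attack pairs. A pair $(\va_p,e)$ consists of $\va_p \in \{0,1\}^{|E|}$ with $\|\va_p\|_0 \le \xi_p$ and $e \in E$, so the total number of attack pairs is at most $\binom{|E|}{\xi_p}\cdot |E| = \mathcal{O}(|E|^{\xi_p+1})$, which is polynomial when $\xi_p = \mathcal{O}(1)$. This also yields a polynomial-time procedure for $\psi_a(\vbeta)$: enumerate all $\mathcal{O}(|E|^{\xi_p})$ choices of $\va_p$ satisfying the connectivity constraint \eqref{eq:connectivity_constr} and the line-incidence constraint \eqref{eq: beta_on_ap}; for each fixed $\va_p$, the remaining constraints in \eqref{eq:reformulate_attacker} with $\vx_N$ fixed reduce to an LP (or a small MILP whose binary variables are only the per-line overload indicators $\pi_e$, which can be handled by enumerating $e$). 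Thus $\psi_a(\cdot)$ and the generation of a witness attack pair take polynomial time per call.

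Next, I would verify the per-iteration cost. Problem \eqref{eq: vcg_LP_relax} invoked in Phase 1 and inside \emph{UpdateCandidate} is an LP whose size is linear in $|V|$, $|E|$, and the current counts $K$ and $|\mathcal{C}|$, and is therefore solvable in polynomial time. \emph{UpdateCandidate} itself processes $K_c$ candidates, for each producing at most $K_A+K_L$ new candidates (each obtained either from the discrete choice of protecting an attacked line or from solving a version of \eqref{eq: vcg_LP_relax} and rounding), and then selects the best $K_c$ of the $\mathcal{O}(K_c(K_A+K_L))$ aggregated candidates. Each of these steps is polynomial in $|V|$, $|E|$, $K_c$, and $|\mathcal{A}|$.

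The key remaining ingredient, and the main obstacle, is bounding the number of outer iterations in Phases 1 and 3. The monotone-progress argument I would use is: in each iteration of Phase 1, the solver produces a witness attack pair $(\va_p^{(k)},e^{(k)})$ against the current rounded $\hat{\vbeta}^{(k)}$, and this pair is added to $\mathcal{A}_0$; because the LP \eqref{eq: vcg_LP_relax} enforces a constraint for every pair in $\mathcal{A}_0$, the same pair cannot be returned in a later iteration while the algorithm is still in Phase 1. Hence $|\mathcal{A}_0|$ is strictly monotone and upper-bounded by the polynomial count $\mathcal{O}(|E|^{\xi_p+1})$ from the first step. An analogous argument works for Phase 3: each iteration either terminates, or adds at least one new attack pair via Line~\ref{Line: attack_generate_phase3}, and the set $\mathcal{C}$ of excluded placements grows monotonically through the no-good constraints \eqref{eq: constr_no_good_relax}, so the total number of iterations is again bounded by the polynomial attack-pair count. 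Combining the per-iteration polynomial cost with the polynomial iteration count yields the claimed overall polynomial complexity in $|V|$, $|E|$, and $K_c$.
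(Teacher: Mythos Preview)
Your high-level decomposition matches the paper's proof, but there are two genuine gaps.

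First, you never bound the number of iterations of the \emph{inner} while loop of \textsc{UpdateCandidate} (Alg.~\ref{alg: primitive_find_candidate}). Your paragraph on the per-iteration cost describes one pass through Lines~\ref{line: alg3_for_LP}--\ref{line: alg3_new_can_2} and correctly argues that pass is polynomial, but the surrounding while loop may repeat many times, and you give no bound. The paper's argument here is that every element inserted into $\mathcal{Q}$ is either an input $\Omega_i$ that already defends against all of $\mathcal{A}$, or $\Omega_i$ augmented by exactly one node; hence any candidate that does not yet defend gains a node each inner iteration, and the while loop terminates within $|V|$ iterations. Without this observation a single call to \textsc{UpdateCandidate} is not shown to be polynomial, and your claim (ii) is unsupported.

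Second, your Phase-3 bound asserts that each outer iteration produces a \emph{new} attack pair, but the justification you offer (that $\mathcal{C}$ grows monotonically via the no-good constraints) does not establish this: $\mathcal{C}$ tracks placements, not attack pairs, and can be exponentially large. The Phase-1 argument does not carry over ``analogously'' because in Phase~1 the set $\mathcal{A}_0$ is cumulative, whereas in Phase~3 the set $\mathcal{A}$ is reset to $\emptyset$ at the start of each outer iteration. What makes the pairs new across Phase-3 iterations is that every candidate output by \textsc{UpdateCandidate} is a superset of some input candidate, so candidates only grow across outer iterations; combined with Lemma~\ref{lem: ob1} (defense is monotone in $\Omega$), this implies that once all $K_c$ candidates defend against a pair after some iteration, they continue to do so thereafter, forcing the pairs discovered in distinct Phase-3 iterations to be distinct. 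That is the missing link between your monotone-progress claim and the polynomial attack-pair count.
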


\begin{algorithm}\label{alg: primitive_find_candidate}
\SetAlgoLined
\textbf{Initialization: }$\mathcal{A}_i = \mathcal{A}, i=1,\cdots, K_c$\;
\While{$\exists i$ such that $\mathcal{A}_i \neq \emptyset$}{
    $\mathcal{Q} \gets \emptyset$\;
    \For{$i\gets 1$ \KwTo $K_{c}$\label{line: alg3_for_LP}}{
    \lIf{$\mathcal{A}_i=\emptyset$}{$\mathcal{Q} \gets \mathcal{Q} \bigcup \{\Omega_i\}$ and continue}
    \lElse{ $\mathcal{C} \gets \mathcal{C} \cup \{ \vbeta(\Omega_i) \}$ }
    $\mathcal{Q} \gets \mathcal{Q} \bigcup ( \Omega_i \cup \{ v_j\} )$ for $j=1,\cdots,K_L$, where $v_j$ can prevent the j-th most physical attacks  in $\mathcal{A}_i$ \label{line: alg3_candidate_phy_attack}\;
    Solve \eqref{eq: vcg_LP_relax} over $\mathcal{A}$, $\mathcal{C}$, and the constraints $\beta_u = 1, \forall u\in \Omega_i$, which results in $\check{\vbeta}$\;\label{line: alg3_sol_LP}
    $\mathcal{Q} \gets \mathcal{Q} \bigcup ( \Omega_i \cup \{ u_j \} )$ for $j=1,\cdots,K_A$, where $u_j$ is the index of the $j$-th largest element in $\{\check{\beta}_u\}_{u\in V\setminus \Omega_i}$\label{line: alg3_up_Q}\;
    }
    Update $\{\Omega_i\}_{i=1}^{K_{c}}$ as the $K_c$ elements in $\mathcal{Q}$ that can defend against the most attack pairs in $\mathcal{A}$\label{line: alg3_new_can_2}\;
    $\mathcal{A}_i \gets \{(\va_p, e) \in \mathcal{A} \arrowvert \Omega_i \mbox{ cannot defend against }$ $(\va_p, e)\}$, $\forall i=1,\ldots,K_c$\;
}
Return $\{\Omega_i\}_{i=1}^{K_{c}}$ and $\mathcal{C}$\;
\caption{UpdateCandidate\big($\{\Omega_i\}_{i=1}^{K_{c}}$, $\mathcal{A}$, $\mathcal{C}$\big)}
\vspace{-0em}
\end{algorithm}

{
\section{Extension to AC Power Flow Model}\label{sec:Extension to AC}
So far we have assumed the DC power flow approximation for the power grid given in Section~\ref{subsec:Power Grid Modeling}. It remains to validate the resulting PMU placement under the AC power flow model that describes the grid state more accurately.
To this end, we will address the following questions: given a PMU placement $\Omega_{\mbox{\tiny DC}} \subseteq V$ obtained under the DC power flow model,
 (i) how to test the feasibility of $\Omega_{\mbox{\tiny DC}}$ in preventing outages under the AC power flow model, and
(ii) how to refine $\Omega_{\mbox{\tiny DC}}$ if needed to achieve our defense goal under the AC power flow model. \looseness=-1 
}

{
\subsection{Testing a PMU Placement under AC Model}
One challenge to answer the first question is the nonlinear and nonconvex nature of \emph{AC power flow based SCED (AC-SCED)}, which invalidates the transformation of \eqref{eq:reformulate_attacker} into a single-level MILP through KKT conditions. Another challenge lies in formulating a single optimization to maximize the overloading of a target line after SCED (at $t_3$ in Fig.~\ref{fig: timeline_attack}). Specifically, since solving nonlinear AC power flow equations usually requires iterative methods (e.g., Newton-Raphson method \cite{monticelli2000electric}), we cannot directly formulate the AC-SCED at $t_2$ and the corresponding ground-truth grid state at $t_3$ in an optimization problem.
Existing works handled this challenge by approximating the grid state at $t_3$ by the DC power flow model \cite{chu2020vulnerability, liang2015vulnerability} or DC-based line outage distribution factors \cite{chung2018local, chu2021n}. However, such DC-based approximations cannot be directly used to compute the magnitude of currents, which determines the overloading and related tripping of lines.
}


{In the following, we provide a method, as shown in Alg.~\ref{alg: check_AC}, to check the existence of an AC-based CCPA that can cause overloading 
under a given PMU placement. To overcome the challenges discussed before, we first remove the optimality requirement in AC-SCED, similar to our derivation of \quotes{Attack-Denial} constraints {in Section~\ref{subsec:AODC}}. Omitting this optimality requirement is equivalent to allowing the attacker to choose the objective for AC-SCED, which enlarges the feasible region for the attacker's optimization. To jointly model the current at $t_3$ and the AC-SCED at $t_2$, we adopt the \emph{linearized approximation of AC power flow equations} \cite{yang2018general}.}
{Based on these two strategies, we formulate the following optimization problem for the attacker to maximize the magnitude of current on a given target line $e_t$ under a given physical attack (i.e., $\va_p$): 
}
{\begin{maxi!}|s|<b>
{}{ |\hat{I}_{3,e_t}|^2 }
{\label{eq: ac_formulate_overload}}
{}
\addConstraint{\text{Constraints on } \tilde{\vv}_2, \tilde{\vtheta}_2 \text{ to bypass BDD}}{} \label{ac: fdi_t2}
\addConstraint{\text{ACOPF constraints on }\tilde{\vv}_3, \tilde{\vtheta}_3}{}\label{ac: t3_acopf}
\addConstraint{\text{Constraints to solve }\hat{\vv}_3, \hat{\vtheta}_3, |\hat{\vI}_3|,}{}\label{ac: pf_linearze_t3}
\end{maxi!}}
{where $\tilde{\vv}_2, \tilde{\vtheta}_2$ denote the voltage magnitudes and phase angles estimated at $t_2$ by the control center based on falsified measurements, $\tilde{\vv}_3, \tilde{\vtheta}_3$ denote the same variables predicted by AC-SCED for $t_3$ (computed at $t_2$), and $\hat{\vv}_3, \hat{\vtheta}_3, |\hat{\vI}_3|$ denote the approximated ground-truth of voltage magnitudes, phase angles and line current magnitude at $t_3$.}
{The details of \eqref{eq: ac_formulate_overload} are given in Appendix~\ref{appendix: ac_attacker_detail}. Similar to Table~\ref{tab: notation_timeline}, for a given variable $x$, we use $\tilde{x}_{2}$ to denote its estimate based on falsified measurements at $t_2$, $x_2$ to denote its ground-truth value at $t_2$, $\tilde{x}_{3}$ to denote the value predicted by AC-SCED (at $t_2$) for $t_3$, and $x_3$ to denote the ground-truth value at $t_3$. Given the voltage magnitudes $\tilde{\vv}_3$ and the phase angles $\tilde{\vtheta}_3$, the approximated values of $x$ at $t_3$ is denoted as $\hat{x}_3$.}

{
In \eqref{eq: ac_formulate_overload}, we have the following three types of constraints and decision variables:
\begin{enumerate}
    \item Constraint \eqref{ac: fdi_t2} is the counterpart of \eqref{constr: f2_fdi} under the AC power flow model, in which the main decision variables are $\tilde{\vv}_2$ and $\tilde{\vtheta}_2$. Similar to \eqref{constr: f2_fdi},  we use $\tilde{\vv}_2$ and $\tilde{\vtheta}_2$ as the decision variables to model the cyber attack that can bypass the BDD. Following \cite{chung2018local}, we adopt the quadratic convex (QC) relaxation \cite{coffrin2015qc} in \eqref{ac: fdi_t2} to model the AC power flow equations.
    \item As the counterpart of \eqref{eq: pg_theta_3_fdi}-\eqref{eq: pd_theta_3_fdi} under the AC power flow model, \eqref{ac: t3_acopf} models the reaction of AC-SCED to the falsified measurements based on the QC relaxation.
    \item The real grid state at $t_3$ is formulated in \eqref{ac: pf_linearze_t3} as the counterpart of \eqref{eq:f_3, a_p}-\eqref{eq:theta_3, f_3}, based on the approximation of AC power flow equations proposed in \cite{yang2018general}.
\end{enumerate}
As we have enlarged the feasible region for the attacker in \eqref{ac: fdi_t2}-\eqref{ac: t3_acopf} by using the QC relaxation, \eqref{eq: ac_formulate_overload} models a stronger attack, and hence a PMU placement that prevents overloading under this attack can prevent overloading under the original attack. We will use $x^*$ to denote the value of decision variable $x$ in the optimal solution to \eqref{eq: ac_formulate_overload}.
}

{
Based on \eqref{eq: ac_formulate_overload}, we develop an algorithm to check the feasibility of a PMU placement $\Omega\subseteq V$ in preventing outages under AC-based CCPA, shown in Algorithm~\ref{alg: check_AC}. Specifically, at Lines~\ref{Line: ac_I2}, we compute $\vv_2, \vtheta_2, |\vI_2|$ by solving power flow equations. Thus, the counterpart of \eqref{eq: dc_t2_gt} is no longer needed to compute the real grid states after physical attacks.
Then, at Line~\ref{Line: solve_attacker_ac}, we obtain the optimal solution ($|\hat{I}_{3,e_t}^*|$, $\tilde{\vv}_3^*, \tilde{\vtheta}_3^*$) to \eqref{eq: ac_formulate_overload} for the given attack pair $(\va_p,e_t)$ (recall that $|\hat{I}_{3,e_t}^*|$ is the approximated current magnitude on line $e_t$ at time $t_3$ while $|I_{3,e_t}^*|$ is the corresponding true value). 
Alg.~\ref{alg: check_AC} considers the PMU placement $\Omega$ to successfully defend against $(\va_p,e_t)$ (i.e., preventing overloading at line $e_t$ under physical attack $\va_p$) if one of the following conditions hold:
\begin{enumerate}
    \item no cyber attack $\va_c$ can bypass the BDD, i.e., \eqref{eq: ac_formulate_overload} is infeasible, as checked in Line~\ref{Line: infeasible_fdi}, or
    \item $|\hat{I}_{3,e_t}^*| \le \hat{I}_{max,e_t}$ and $|I_{3,e_t}^*| \le \gamma_e I_{max,e_t}$, as checked in Lines~\ref{Line: if_linear_no_overload}--\ref{Line: continue_no_of}, where $\hat{I}_{max,e_t}$ (derived in Theorem~\ref{theorem: ac_thr_set_linear_guarantee}) is the threshold used by Alg.~\ref{alg: check_AC} to detect the tripping of line $e_t$ based on the approximated current magnitude $|\hat{I}_{3,e_t}^*|$.
\end{enumerate}
}

{
The use of $\hat{I}_{max,e}$ rather than $\gamma_e I_{max,e}$ allows us to compensate for the approximation error at $t_3$. As stated in Theorem~\ref{theorem: ac_thr_set_linear_guarantee}, under a properly-set $\hat{I}_{max,e}$, a PMU placement $\Omega$ is guaranteed to achieve our defense goal under the AC model if $\Omega$ can pass the test of Alg.~\ref{alg: check_AC}, i.e., no overloading is reported.
How to bound the approximation errors as assumed in Theorem~\ref{theorem: ac_thr_set_linear_guarantee} is not the focus of this work; we refer interested readers to \cite{yang2018general} for details.
\begin{theorem}\label{theorem: ac_thr_set_linear_guarantee}
Assume that the approximation used in \eqref{ac: pf_linearze_t3} satisfies $|\hat{v}_{3,u} - v_{3,u}| \le \epsilon_{v,u}, |\hat{\theta}_{3,u} - \theta_{3,u}| \le \epsilon_{\theta,u}$, $\forall u \in V$ and $|\hat{p}_{3,f,e} - p_{3,f,e}| \le \epsilon_{p,e}$, $|\hat{q}_{3,f,e} - q_{3,f,e}| \le \epsilon_{q,e}$, $\forall e \in E$. Then, there exists $\epsilon_{I,e}, \forall e\in E$ {(see proof in Appendix~\ref{appendix: additional_proof} for details)} and $\hat{I}_{max,e}:=\gamma_e I_{max,e} - \epsilon_{I,e}$ such that any PMU placement passing the test of Alg.~\ref{alg: check_AC} can {prevent overload-induced tripping under the AC power flow model}.
\end{theorem}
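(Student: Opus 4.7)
The plan is to turn the theorem into a quantitative error-propagation statement: construct $\epsilon_{I,e}$ as an explicit upper bound on the discrepancy between the linearised current magnitude $|\hat{I}_{3,e}|$ returned by \eqref{ac: pf_linearze_t3} and the true AC current magnitude $|I_{3,e}|$, and then use this bound to argue that any PMU placement that passes the test of Algorithm~\ref{alg: check_AC} leaves no AC-feasible way to overload a line.

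The first step is to write the current in a form that depends only on quantities whose approximation errors are already controlled. For line $e=(s,t)$ the complex current satisfies $I_{3,e}=S_{3,e}^{*}/V_{3,s}^{*}$, hence
\begin{equation*}
|I_{3,e}|^{2}=\frac{p_{3,f,e}^{2}+q_{3,f,e}^{2}}{v_{3,s}^{2}},
\end{equation*}
and the same identity links $|\hat{I}_{3,e}|$ with $(\hat{p}_{3,f,e},\hat{q}_{3,f,e},\hat{v}_{3,s})$, which are exactly the outputs of the linearised block \eqref{ac: pf_linearze_t3}. Using the hypothesised bounds $|\hat{p}_{3,f,e}-p_{3,f,e}|\le \epsilon_{p,e}$, $|\hat{q}_{3,f,e}-q_{3,f,e}|\le \epsilon_{q,e}$, $|\hat{v}_{3,s}-v_{3,s}|\le \epsilon_{v,s}$, together with the positive lower bound $v_{\min}>0$ on voltage magnitudes and the upper bound on apparent power flow coming from the ACOPF box constraints embedded in \eqref{ac: t3_acopf}, I would apply the triangle inequality in a telescoping fashion to obtain a constant $\epsilon^{sq}_{I,e}$ satisfying $\bigl||I_{3,e}|^{2}-|\hat{I}_{3,e}|^{2}\bigr|\le \epsilon^{sq}_{I,e}$. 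Passing from squared to plain magnitudes via $\bigl||a|-|b|\bigr|\le |a^{2}-b^{2}|/(|a|+|b|)$ and the same lower bound on $v_{3,s}$ then yields the desired $\epsilon_{I,e}$ with $\bigl||I_{3,e}|-|\hat{I}_{3,e}|\bigr|\le \epsilon_{I,e}$. I expect this nonlinear propagation step to be the main obstacle, because the factor $1/v_{3,s}^{2}$ forces me to carry an explicit positive lower bound on $v_{3,s}$ through the entire argument, and $\epsilon_{I,e}$ must be tight enough that the resulting threshold $\hat{I}_{max,e}=\gamma_e I_{max,e}-\epsilon_{I,e}$ remains strictly positive, otherwise the theorem is vacuous on that line.

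With $\epsilon_{I,e}$ in hand the remainder is direct. Fix a PMU placement $\Omega$ that passes Algorithm~\ref{alg: check_AC}, and fix any attack pair $(\va_p,e_t)$ enumerated by the algorithm. If \eqref{eq: ac_formulate_overload} is declared infeasible at Line~\ref{Line: infeasible_fdi}, then because the QC relaxation used in \eqref{ac: fdi_t2}--\eqref{ac: t3_acopf} is an outer approximation of the AC feasible region, no AC-feasible CCPA with target $e_t$ can bypass the BDD, so $e_t$ cannot be tripped under $\Omega$. Otherwise the check at Lines~\ref{Line: if_linear_no_overload}--\ref{Line: continue_no_of} forces $|\hat{I}_{3,e_t}^{*}|\le \hat{I}_{max,e_t}$, and the error bound gives
\begin{equation*}
|I_{3,e_t}^{*}|\le |\hat{I}_{3,e_t}^{*}|+\epsilon_{I,e_t}\le \gamma_{e_t} I_{max,e_t},
\end{equation*}
so the overcurrent protection on $e_t$ does not trip. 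Since $|\hat{I}_{3,e_t}^{*}|$ is the maximum over all BDD-bypassing cyber attacks under the fixed physical attack $\va_p$ and target $e_t$, the same conclusion holds for every such attack, and taking the union over all $(\va_p,e_t)$ enumerated by Algorithm~\ref{alg: check_AC} shows that no overload-induced tripping can occur under $\Omega$ in the AC model, which is exactly the claim of the theorem.
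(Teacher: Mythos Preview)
Your high-level plan is the right one, and the final paragraph mirrors the paper's contradiction argument: if the placement passes the test, then $|\hat I_{3,e_t}^*|\le\hat I_{max,e_t}$, and the error bound $\epsilon_{I,e_t}$ pushes this back to $|I_{3,e_t}|\le\gamma_{e_t}I_{max,e_t}$ for every admissible attack. But there is a genuine gap in the step where you conclude that $|\hat I_{3,e_t}^*|$ dominates the approximated current of \emph{every} true AC attack. You justify this by saying that the QC constraints \eqref{ac: fdi_t2}--\eqref{ac: t3_acopf} are outer approximations; that is correct for those two blocks, but the third block \eqref{ac: pf_linearze_t3} is a linearised \emph{approximation}, not a relaxation. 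For a true AC attack, the associated approximated quantities $\hat v_3,\hat\theta_3,\hat p_{f,3},\hat q_{f,3}$ need not satisfy the linearised balance equations exactly, so the corresponding point need not be feasible to \eqref{eq: ac_formulate_overload}, and the inequality $|\hat I_{3,e_t}|\le |\hat I_{3,e_t}^*|$ is not automatic. The paper closes this gap by inserting slack parameters $\eta_{3,p,i},\eta_{3,q,i}$ into the linearised balance constraints and showing they can be chosen from $\epsilon_{v,u},\epsilon_{p,e}$ so that the approximated state of any true AC attack \emph{is} feasible to \eqref{ac: pf_linearze_t3}; only then does the optimality of $|\hat I_{3,e_t}^*|$ yield the required domination. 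You never mention these tolerances, so as written the domination step is unsupported.

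A secondary point: you derive $\epsilon_{I,e}$ from the identity $|I_e|^2=(p_{f,e}^2+q_{f,e}^2)/v_s^2$, but in the optimisation the variable $|\hat I_{3,e}|^2$ is \emph{defined} by the linear DistFlow-type relation in \eqref{ac: pf_linearze_t3} (see \eqref{eq: ac_t3_linear_If}), which is what the paper uses to build $\epsilon_{I,e}$. The two formulas coincide on exact AC states, but since you must bound the distance between the \emph{computed} $|\hat I_{3,e}|$ and the true $|I_{3,e}|$, the error propagation should track the same linear expression the solver actually evaluates. Your version would work only after an extra argument relating the two formulas on approximate data, which you do not provide.
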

}
\begin{algorithm}\label{alg: check_AC}
\SetAlgoLined
\SetKwFunction{Fmain}{FailEdgeDetection}
\SetKwInOut{Input}{input}\SetKwInOut{Output}{output}
\For{each possible attack pair $(\va_p,e_t)$ {under the given PMU placement $\Omega$}}{
    Obtain $\vv_2, \vtheta_2, |\vI_2|$ from AC power flow equations\; \label{Line: ac_I2}
    Solve \eqref{eq: ac_formulate_overload} to obtain $|\hat{I}_{3,e_t}^*|$, $\tilde{\vv}_3^*, \tilde{\vtheta}_3^*$\;\label{Line: solve_attacker_ac}
    \uIf{\eqref{eq: ac_formulate_overload} is feasible AND $|\hat{I}_{3,e_t}^*| \leq \hat{I}_{max,e_t}$\label{Line: if_linear_no_overload}}{ Compute $|I_{3,e_t}^*|$ from AC power flow equations\; \label{Line: ac_pf_t3}
    \uIf{$|I_{3,e_t}^*| \leq \gamma_e I_{max,e_t}$}{Continue\label{Line: continue_no_of}\;}\lElse{Terminate and report overloading}}
    \uElseIf{\eqref{eq: ac_formulate_overload} is infeasible\label{Line: infeasible_fdi}}{Continue\;}
    \lElse{Terminate and report overloading}
}
\caption{Test Feasibility of $\Omega$ under AC Model 
}
\end{algorithm}

{
\subsection{Refining PMU Placement}
In the case that the DC-based PMU placement $\Omega_{\mbox{\tiny DC}}$ fails the test by Alg.~\ref{alg: check_AC}, we provide a simple heuristic to augment it into a new placement $\Omega_{\mbox{\tiny AC}}$ that can achieve our defense goal under the AC model. The intuition is to iteratively augment $\Omega_{\mbox{\tiny DC}}$ by placing more PMUs until the resulting placement $\Omega_{\mbox{\tiny AC}}$ can pass the test of Alg.~\ref{alg: check_AC}. The key question is which node to add. To answer this question, we first augment $\Omega_{\mbox{\tiny DC}}$ into a PMU placement $\Omega_C := \Omega(\vbeta_C)$ that can achieve full observability 
by solving \eqref{eq: full_ob_augment_Omega_c}: \looseness=-1
\begin{mini!}|s|<b>
{\vbeta_C\in \{0,1\}^{|V|}}{ \lp{\vbeta_C}_1 }
{\label{eq: full_ob_augment_Omega_c}}
{}
\addConstraint{ \vbeta_C\ge \vbeta(\Omega_{\mbox{\tiny DC}}) }{}\label{constr: containt_Omega_DC}
\addConstraint{\underline{\vA}\vbeta_C \ge 1,}{}\label{constr: full_ob_Omega_c}
\end{mini!}
where \eqref{constr: containt_Omega_DC} guarantees $\Omega_{\mbox{\tiny DC}}\subseteq\Omega_{C}$, and \eqref{constr: full_ob_Omega_c} forces $\Omega_C$ to achieve full observability.
Then equipped with $\Omega_C$, 
we augment $\Omega_{\mbox{\tiny DC}}$ into $\Omega_{\mbox{\tiny AC}}$ by Alg.~\ref{alg: ac_augment_pmu_dc}. If a PMU placement cannot defend against an attack pair $(\va_p,e_t)$ (Line~\ref{Line: overload_detected}), then we update the PMU placement by the following rules:
\begin{enumerate}
    \item If there exists a node $u\in \Omega_C$ that can prevent the physical attack $\va_p$ as in \eqref{eq: beta_on_ap}, we add node $u$ to the current PMU placement (Line~\ref{Line: update_Omega_prevent_ap}).
    \item Otherwise, we add the node {in $\Omega_C$} with the maximum deviation in phase angle due to false data injection (Line~\ref{Line: u_max_VA_deviate}), with ties broken arbitrarily.
\end{enumerate}
}
\begin{algorithm}\label{alg: ac_augment_pmu_dc}
\SetAlgoLined
\SetKwFunction{Fmain}{FailEdgeDetection}
\SetKwInOut{Input}{input}\SetKwInOut{Output}{output}
\textbf{Initialization: }$\Omega_{\mbox{\tiny AC}}=\Omega_{\mbox{\tiny DC}}$\;
\While{True}{
Test $\Omega_{\mbox{\tiny AC}}$ through Alg.~\ref{alg: check_AC}\;
\lIf{No overloading is reported}{Return $\Omega_{\mbox{\tiny AC}}$}
\Else
{
    Let $(\va_p,e_t)$ be the attack pair under which overloading is reported, and $U := \{u\in\hspace{-.1em} V\hspace{-.1em}: \exists e \mbox{ with } a_{p,e}=1, D_{u,e}\neq 0 \}$ (all end-nodes of physically-attacked lines)\label{Line: overload_detected}\;
    \uIf{$\Omega_C\cap U \neq \emptyset$}{Arbitrarily choose a node $u\in \Omega_C\cap U$\label{Line: update_Omega_prevent_ap}\;}
    \uElse{Let $\tilde{\vtheta}_2, \vtheta_2$ be the falsified/true phase angles at $t_2$ under attack pair $(\va_p,e_t)$\;
    Set $u:=\arg\max_{v\in \Omega_C} |\tilde{\theta}_{2,v}-\theta_{2,v}|$\label{Line: u_max_VA_deviate}\;}
    $\Omega_{\mbox{\tiny AC}}\leftarrow \Omega_{\mbox{\tiny AC}}\cup \{u\}$\;
}
}
\caption{Augment PMU Placement for AC Model}
\end{algorithm}

\section{Numerical Experiments}\label{sec: evaluations}
\emph{Simulation Settings:} We evaluate our solution against benchmarks in several standard systems: IEEE 30-bus, IEEE 57-bus, IEEE 118-bus, and IEEE 300-bus system, where the system parameters as well as load profiles are obtained from \cite{babaeinejadsarookolaee2019power}. {The parameters for our evaluation are set as follows unless specified otherwise:} 
We set $\alpha = 0.25$ according to \cite{che2018false}. We allow 
$\tilde{\vtheta}_3$ to take any value specified by the attacker subject to \eqref{eq: cc_load_meet}-\eqref{eq: cc_pgp}, which makes our defense effective under any SCED cost vector.
The attacker's capability is set as $\xi_p = 2$, $\xi_c = \infty$ (no constraint on the number of manipulated meters). We set the overload-induced tripping threshold to $\gamma_e = 1.2, \forall e \in E$, which is slightly smaller than the one used in \cite{che2018false} to make the solution more robust. For Alg.~\ref{alg: heuristic_3_phase}, we set $K_c = K_A = K_L = 10$.

In the rest of this section, we will compare the performance of Alg.~\ref{alg: alter_opt_v1} (AONG or AODC) and Alg.~\ref{alg: heuristic_3_phase} with the following benchmarks: (\romannumeral1) PMU placement to achieve full observability as proposed in \cite{chakrabarti2008placement}; (\romannumeral2) greedily placing PMUs in the descending order of  node degrees until attack-induced overload-induced tripping is prevented, referred to as \quotes{GreedyDegree}. Benchmark (i) represents the current approach, and benchmark (ii) represents a baseline solution under the lowered goal of defense. \looseness=-1


\emph{Savings in the Number of PMUs:} 
In  Table~\ref{tab: res_matpower_pmu}, we compare the number of secured PMUs required by the proposed algorithms (Alg.~\ref{alg: alter_opt_v1}, Alg.~\ref{alg: heuristic_3_phase}) with the benchmarks
under the nominal operating point~\cite{babaeinejadsarookolaee2019power}. The minimum number of PMUs required to avoid outages, given by Alg.~\ref{alg: alter_opt_v1} (either AONG or AODC), 
is significantly smaller than what is required to achieve full observability. Alg.~\ref{alg: heuristic_3_phase} closely approximates the minimum for the tested systems, but a simple heuristic such as GreedyDegree does not.  For IEEE 300-bus system, we have skipped Alg.~\ref{alg: alter_opt_v1} as neither AODC nor AONG can converge within $72$ hours. The details of PMU locations are given in Appendix~\ref{appendix: pmu_locations}. \looseness=-1

\begin{table}[htbp]
\small
\vspace{-1em}
\centering
\caption{Comparison of the Required Number of PMUs }
\begin{tabular}{|c| c |c|c |c |}
\hline
                   & 30-bus & 57-bus & 118-bus & 300-bus \\
                   \hline
Alg.~\ref{alg: alter_opt_v1} (optimal)         & 2      & 3      & 9 & --- \\
\hline
Alg.~\ref{alg: heuristic_3_phase}  &   2    &  3     & 10 & 31 \\
\hline
GreedyDegree  &   3    &   3    & 14 & 85 \\
\hline
Full observability & 10  & 17  & 32 & 87 \\
\hline
\end{tabular}
\vspace{-.5em}
\label{tab: res_matpower_pmu}
\end{table}

{
Then, we evaluate the scenario when the solution by PPOP is used as a temporary PMU placement that will eventually be augmented into a placement achieving full observability, as discussed at the end of Section~\ref{sec: problem_formulation} (Remark 2). To this end, we evaluate the following metrics: (\romannumeral1) the minimum number of PMUs required by PPOP $|\Omega_{\mbox{\tiny PPOP}}|$, (\romannumeral2) the minimum number of PMUs for achieving full observability $|\Omega_{\mbox{\tiny FO}}|$, (\romannumeral3) the size of a full-observability placement $\Omega_{C}$ augmented from $\Omega_{\mbox{\tiny PPOP}}$ given by \eqref{eq: full_ob_augment_Omega_c}, and (\romannumeral4) the size of the optimal solution $\Omega'_{\mbox{\tiny PPOP}}$ to a variation of PPOP with the additional constraint that $\Omega'_{\mbox{\tiny PPOP}} \subseteq\Omega_{\mbox{\tiny FO}}$. In Table~\ref{tab: fullob_matpower_pmu}, we observe that (i) $|\Omega'_{\mbox{\tiny PPOP}}|$ is only slightly larger than $|\Omega_{\mbox{\tiny PPOP}}|$, i.e., most of the cost savings by PPOP is still achievable when its solution is required to be consistent with the optimal long-term solution that achieves full observability, but (ii) $|\Omega_{C}|$ can be notably larger than $|\Omega_{\mbox{\tiny FO}}|$ for large systems, i.e., augmenting an arbitrary solution to PPOP to achieve full observability may require notably more PMUs compared to a clean-slate solution. }

\begin{table}[htbp]
\small
\vspace{-1em}
\centering
\caption{Comparison of \#PMUs under Temporary/Long-term Placement }
\begin{tabular}{|c| c |c|c |c |}
\hline
                   & 30-bus & 57-bus & 118-bus & 300-bus \\
                   \hline
$|\Omega_{\mbox{\tiny PPOP}}|$         & 2      & 3      & 9 & 31 \\
\hline
$|\Omega'_{\mbox{\tiny PPOP}}|$  &   2    &  3     & 10 & 34 \\
\hline
$|\Omega_{C}|$  &   10    &   17    & 33 & 95 \\
\hline
$|\Omega_{\mbox{\tiny FO}}|$ & 10  & 17  & 32 & 87 \\
\hline
\end{tabular}
\vspace{-.5em}
\label{tab: fullob_matpower_pmu}
\end{table}

\emph{Impact of System Parameters:}
We evaluate the impact of various system parameters on the number of PMUs required by PPOP, given by Alg.~\ref{alg: alter_opt_v1} (by Alg.~\ref{alg: heuristic_3_phase} for the 300-bus system).\looseness=0 

First, we study the effect of $\alpha$ introduced in \eqref{eq: attack_FDIrange}, where a larger $\alpha$ implies a larger feasible region for the attacker. 
It can be seen from Table~\ref{tab: res_varying_limits} that (i) PPOP can still significantly reduce the required number of PMUs compared to ``Full observability'' (see Table~\ref{tab: res_matpower_pmu}) even if $\alpha$ is large, and (ii) PPOP benefits from a small value of $\alpha$, which 
signifies the importance of precise load forecasting in defending against CCPA. 

\begin{table}[htbp]
\small
\vspace{-1em}
\centering
\caption{Number of PMUs in PPOP under varying $\alpha$}
\begin{tabular}{|c| c |c|c |c |}
\hline
& 30-bus & 57-bus & 118-bus & 300-bus \\
\hline
$\alpha = 0.01$         &  1     &    1   &   4  & 24  \\
\hline
$\alpha = 0.10$         &  1    &    2   &    6  & 30  \\
\hline
$\alpha = 0.25$         & 2      & 3      & 9  & 31 \\
\hline
$\alpha = 0.50$         & 3      &   3    &   11 & 34    \\
\hline
\end{tabular}
\vspace{-.5em}
\label{tab: res_varying_limits}
\end{table}

Then, we vary $\xi_p$ and $\xi_c$ to evaluate the impact of the attacker's capability. As shown in Figure~\ref{fig:xi_p xi_c}, (i) defending against a stronger attacker requires more PMUs as expected, (ii) PPOP still requires much fewer PMUs than ``Full observability'' when the attacker can disconnect multiple lines and manipulate all the meters (except for the secured PMUs), which is stronger than the attack model considered in \cite{tian2019multilevel, che2018false}, {and (\romannumeral3) PPOP can save a larger fraction of PMUs in IEEE 57-bus system since $\vf_{\tiny\text{max}}$ given in \cite{babaeinejadsarookolaee2019power} is large}. \looseness=-1
\begin{figure}
\begin{minipage}{.495\linewidth}
  \centerline{
  \includegraphics[width=1\columnwidth]{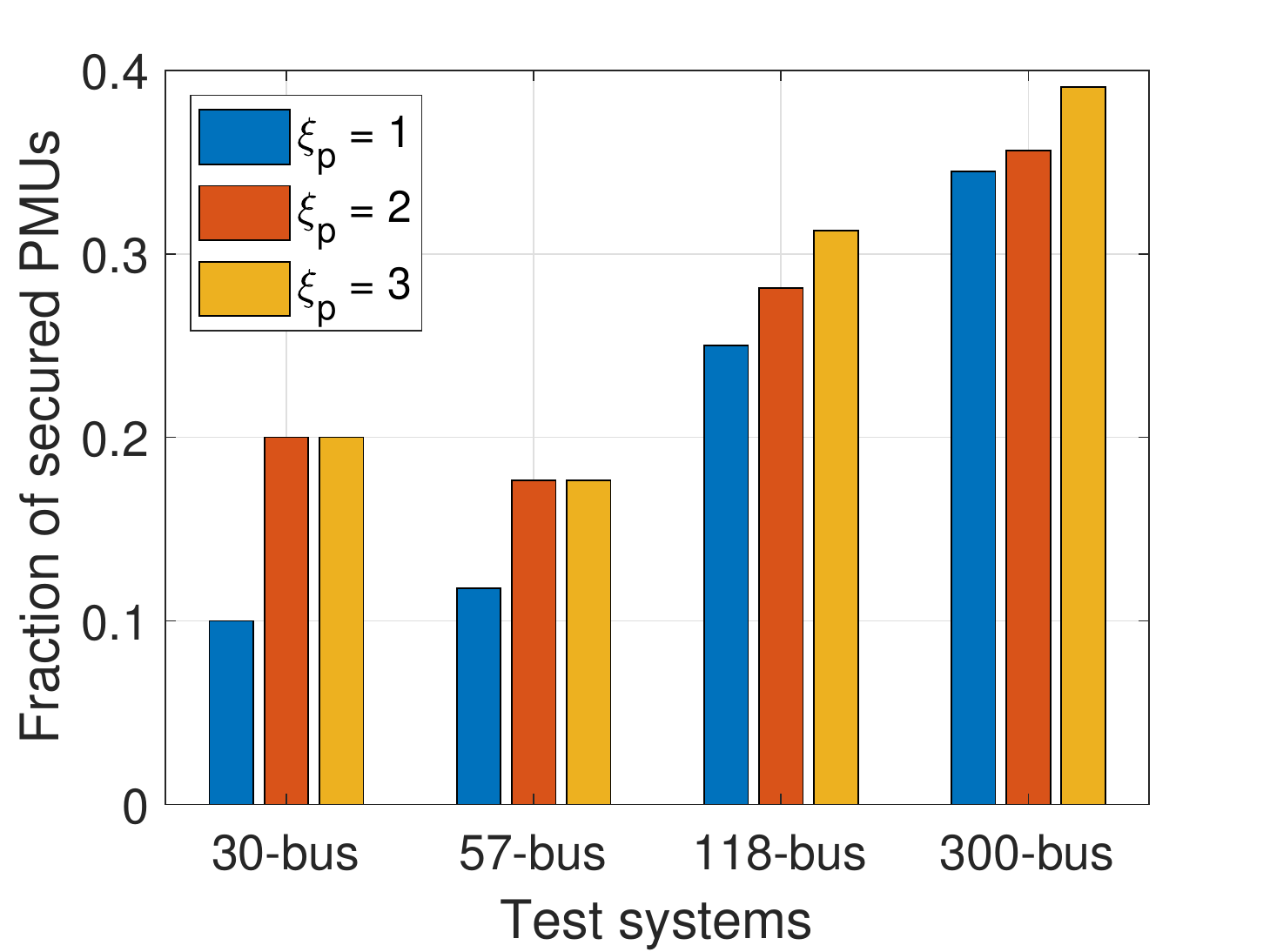}}
  \centerline{\small (a) Effect of $\xi_p$ }
\end{minipage}\hfill
\begin{minipage}{.495\linewidth}
 \centerline{
  \includegraphics[width=1\columnwidth]{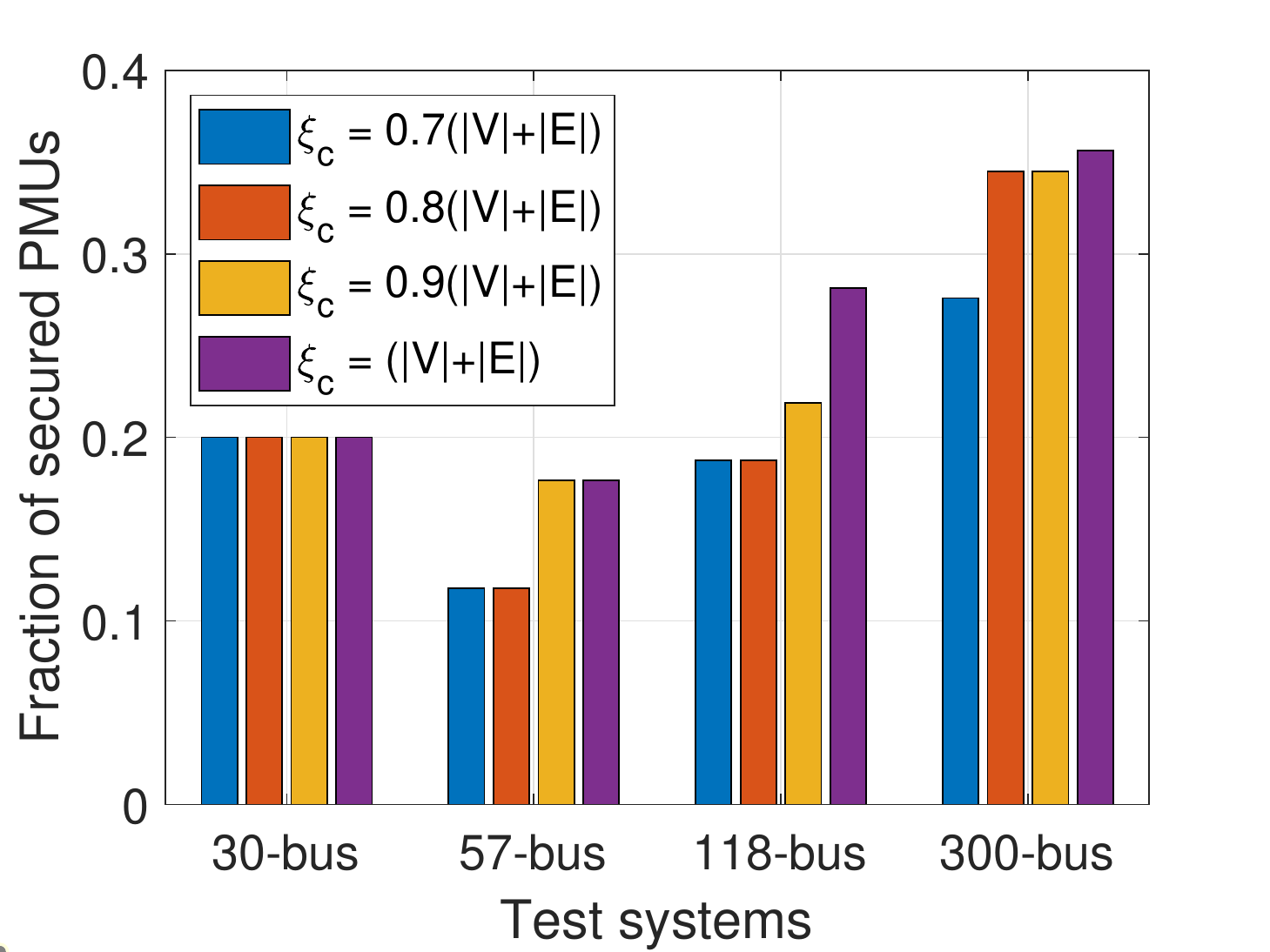}}
  \centerline{\small (b) Effect of $\xi_c$ }
\end{minipage}
  \caption{ $\frac{\text{\#PMUs required by PPOP}}{\text{\#PMUs required by full observability}}$ ($\xi_c\hspace{-.2em}=\hspace{-.2em}|V|\hspace{-.2em}+\hspace{-.2em}|E|$ means no $\xi_c$-constraint).}
  \label{fig:xi_p xi_c}
\vspace{-1.5em}
\end{figure}


In addition, we consider the case that the load profile $\vp_0$ can vary as shown in \eqref{eq: varying_load_profile_p0}. We assume $\vp_0 \in [\underline{\kappa}\vp^{(0)}, \overline{\kappa}\vp^{(0)}]$, where $\vp^{(0)}$ is the nominal load profile from \cite{babaeinejadsarookolaee2019power},  $\underline{\kappa} = 0.5$ and $\overline{\kappa}$ is set to the maximum value that keeps \eqref{eq:reformulate_OPF} feasible under $\overline{\kappa}\vp^{(0)}$. 
{In our evaluations, we set $\overline{\kappa}$ as $1.95, 2.69$, $2.41$ and $1.61$ for IEEE 30-bus, 57-bus, 118-bus and 300-bus systems, respectively.} {For the given range, PPOP requires 3, 4, 19, and 33 PMUs for the 30-bus, 57-bus, 118-bus, and 300-bus systems, which is more than what is required under a single load profile as expected. Nevertheless, PPOP can still save PMUs compared to \quotes{Full observability} as shown in Table~\ref{tab: res_matpower_pmu}.} \looseness=-1

\emph{Computational Efficiency:} We compare AODC and AONG in terms of the number of iterations ({which is also the number of examined attack pairs}) and the running time, which is evaluated in a platform with Intel i7-8700 CPU with Gurobi as the solver. 
Since any feasible solution to \eqref{eq: opt_max_nogood_but} can form an \quotes{No-Good} constraint, we set an upper-bound on the time for solving \eqref{eq: opt_max_nogood_but}, which is $1200$ seconds. As shown in Table~\ref{tab: computational_iterations}, while the two algorithms perform similarly for small systems, AODC converges notably faster for larger systems such as the 118-bus system thanks to its reduced solution space due to the adoption of both \quotes{No-Good} and \quotes{Attack-Denial} constraints. {Note that both algorithms converge after examining a small fraction of possible attack pairs (the total number of attack pairs is 33620, 252800, and 3200130 for these systems, respectively).}


\begin{table}[htbp]
\small
\vspace{-1em}
\centering
\caption{Number of iterations/Convergence time ($10^3$ sec) }
\begin{tabular}{|c| c |c|c |}
\hline
                   & 30-bus & 57-bus & 118-bus \\
                   \hline
AODC         &  8/0.021     &   3/2.188    &   16/26.64    \\
\hline
AONG  &   7/0.014 &   4/2.163    &  78/74.44   \\
\hline
\end{tabular}
\vspace{-.5em}
\label{tab: computational_iterations}
\end{table}

Moreover, we use IEEE 118-bus system as an example to demonstrate the trade-off in tuning the parameters $K_c, K_A, K_L$ for Alg.~\ref{alg: heuristic_3_phase} (assuming $K_A = K_L$). 
We run Alg.~\ref{alg: heuristic_3_phase} for $5$ times under each setting due to the randomness in solving \eqref{eq:reformulate_attacker} and breaking ties. 
The results are given in Fig.~\ref{fig: effect_Kc_KA}, where the bar denotes the mean and the error bar denotes the minimum/maximum. In Fig.~\ref{fig: effect_Kc_KA}~(b), we show the speedup of the heuristic compared to AODC in convergence time, i.e., $(\mbox{time of AODC})/(\mbox{time of heuristic})$.
We observe that (\romannumeral1) Alg.~\ref{alg: heuristic_3_phase} can return a good solution when $K_c\ge \%10\cdot|V|$ and $K_A=K_L\ge K_c$, and (\romannumeral2) under this configuration, Alg.~\ref{alg: heuristic_3_phase} is significantly faster than AODC at a small cost of requiring a couple of more PMUs.\looseness=-1 
\begin{figure}
\begin{minipage}{.495\linewidth}
  \centerline{
  \includegraphics[width=1\columnwidth]{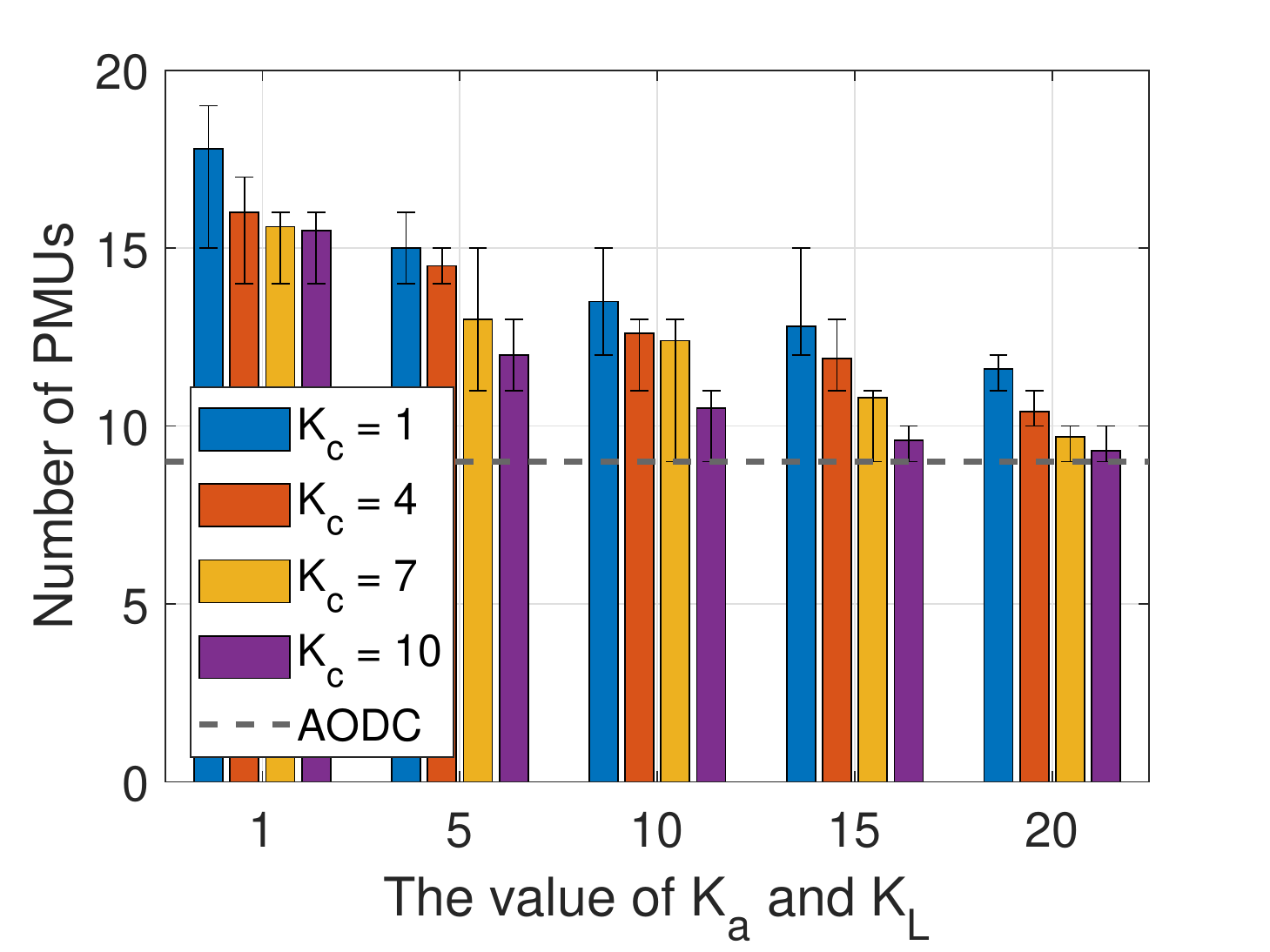}}
  \centerline{\small (a) Number of secured PMUs }
\end{minipage}\hfill
\begin{minipage}{.495\linewidth}
 \centerline{
  \includegraphics[width=1\columnwidth]{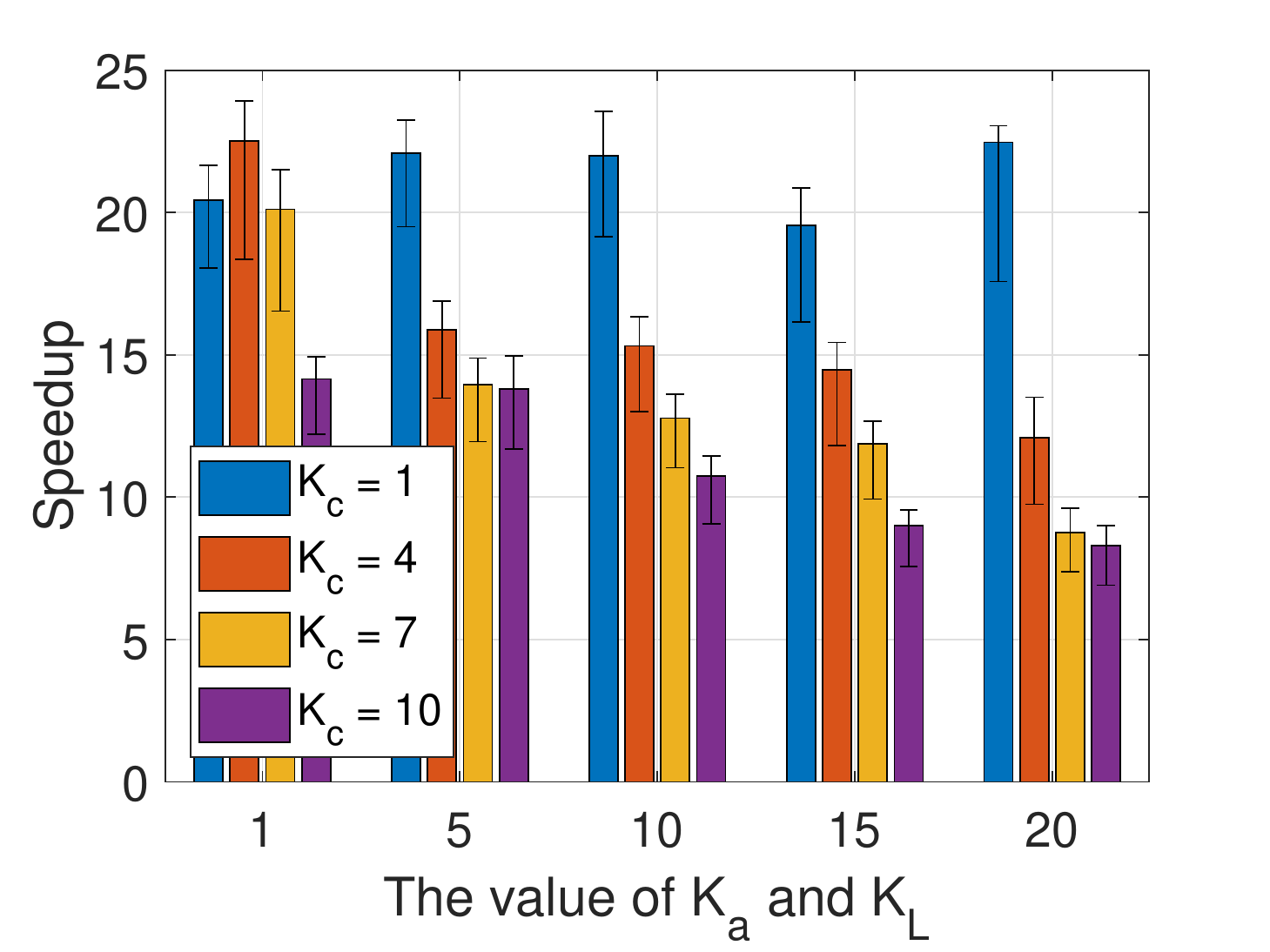}}
  \centerline{\small (b) Computation time }
\end{minipage}
  \caption{The performance of Alg.~\ref{alg: heuristic_3_phase} under different $K_c$, $K_A$, and $K_L$.} \label{fig: effect_Kc_KA}
  \vspace{-1em}
\end{figure}

{
\emph{Extension to AC model: }We compare the solution $\Omega_{\mbox{\tiny AC}}$ obtained by Alg.~\ref{alg: ac_augment_pmu_dc} with the best previous solution $\Omega_{\mbox{\tiny DC}}$ obtained under the DC approximation. As shown in Table~\ref{tab: ac_matpower_pmu}, although the DC-based solution may need augmentation to defend against AC-based CCPA, the gap (i.e., $|\Omega_{\mbox{\tiny AC}}| - |\Omega_{\mbox{\tiny DC}}|$) is small. More importantly, $|\Omega_{\mbox{\tiny AC}}|$ is still much smaller (by $60$--$80\%$) than the number of PMUs $|\Omega_{\mbox{\tiny FO}}|$ required to achieve full observability (see Table~\ref{tab: fullob_matpower_pmu}), indicating the efficacy of our approach of first computing an initial solution under the DC approximation and then augmenting it to achieve our defense goal under the AC model. We note that the values of $|\Omega_{\mbox{\tiny AC}}|$ in Table~\ref{tab: ac_matpower_pmu} are only upper bounds on the number of PMUs required to prevent outages under AC-based CCPA, suggesting great potential of saving PMUs by adopting the proposed defense goal.
}
\begin{table}[htbp]
\small
\vspace{-0.5em}
\centering
\caption{Number of PMUs Under AC Power Flow Model}
\begin{tabular}{|c| c |c|c |c |}
\hline
                   & 30-bus & 57-bus & 118-bus & 300-bus \\
                   \hline
$|\Omega_{\mbox{\tiny AC}}|$         & 3      & 3      & 10 & 34 \\
\hline
$|\Omega_{\mbox{\tiny DC}}|$  &   2    &  3     & 9 & 31 \\
\hline
\end{tabular}
\vspace{-.5em}
\label{tab: ac_matpower_pmu}
\end{table}

\section{Conclusion}\label{sec: conclusion}
We formulate a tri-level optimization problem {under the DC power flow model} to find the optimal secured PMU placement to defend against the coordinated cyber-physical attack (CCPA) in the smart grid. Rather than completely eliminating the attack, we propose to limit the impact of the attack by preventing overload-induced outages. To solve the proposed problem, we first transform it into a bi-level MILP and then propose an alternating optimization algorithm framework to obtain optimal solutions. The core of the proposed algorithm framework is constraint generation based on infeasible placements, for which we develop two constraint generation approaches. Furthermore, we propose a polynomial-time heuristic algorithm that can scale to large-scale grids. {In addition, we demonstrate how to extend the obtained PMU placement to achieve our defense goal under the AC power flow model.} Our experimental results on standard test systems demonstrate great promise of the proposed approach in reducing the requirement of PMUs.
{Our work lays the foundation for tackling a number of further questions in future work, e.g., how to characterize the optimal attack without solving MILPs, how to directly optimize the PMU placement for outage prevention under the AC model, and how to improve the robustness of the solution against the failures of PMUs themselves.}

\bibliographystyle{IEEEtran}
\bibliography{myBib}

\begin{thebibliography}{10}
\providecommand{\url}[1]{#1}
\csname url@samestyle\endcsname
\providecommand{\newblock}{\relax}
\providecommand{\bibinfo}[2]{#2}
\providecommand{\BIBentrySTDinterwordspacing}{\spaceskip=0pt\relax}
\providecommand{\BIBentryALTinterwordstretchfactor}{4}
\providecommand{\BIBentryALTinterwordspacing}{\spaceskip=\fontdimen2\font plus
\BIBentryALTinterwordstretchfactor\fontdimen3\font minus
  \fontdimen4\font\relax}
\providecommand{\BIBforeignlanguage}[2]{{%
\expandafter\ifx\csname l@#1\endcsname\relax
\typeout{** WARNING: IEEEtran.bst: No hyphenation pattern has been}%
\typeout{** loaded for the language `#1'. Using the pattern for}%
\typeout{** the default language instead.}%
\else
\language=\csname l@#1\endcsname
\fi
#2}}
\providecommand{\BIBdecl}{\relax}
\BIBdecl

\bibitem{yudi21SmartGridComm}
Y.~Huang, T.~He, N.~R. Chaudhuri, and T.~L. Porta, ``Preventing outages under
  coordinated cyber-physical attack with secured {PMUs},'' in \emph{IEEE
  SmartGridComm}.\hskip 1em plus 0.5em minus 0.4em\relax IEEE, 2021.

\bibitem{deng2017ccpa}
R.~Deng, P.~Zhuang, and H.~Liang, ``Ccpa: Coordinated cyber-physical attacks
  and countermeasures in smart grid,'' \emph{IEEE Transactions on Smart Grid},
  vol.~8, no.~5, pp. 2420--2430, 2017.

\bibitem{Fairley16Spectrum}
P.~Fairley, ``Cybersecurity at {U.S.} utilities due for an upgrade: Tech to
  detect intrusions into industrial control systems will be mandatory,''
  \emph{IEEE Spectrum}, vol.~53, no.~5, pp. 11--13, May 2016.

\bibitem{chaojun2015detecting}
G.~Chaojun, P.~Jirutitijaroen, and M.~Motani, ``Detecting false data injection
  attacks in {AC} state estimation,'' \emph{IEEE Transactions on Smart Grid},
  vol.~6, no.~5, pp. 2476--2483, 2015.

\bibitem{sou2019protection}
K.~C. Sou, ``Protection placement for power system state estimation measurement
  data integrity,'' \emph{IEEE Transactions on Control of Network Systems},
  vol.~7, no.~2, pp. 638--647, 2019.

\bibitem{dibaji2019systems}
S.~M. Dibaji, M.~Pirani, D.~B. Flamholz, A.~M. Annaswamy, K.~H. Johansson, and
  A.~Chakrabortty, ``A systems and control perspective of {CPS} security,''
  \emph{Annual reviews in control}, vol.~47, pp. 394--411, 2019.

\bibitem{ozay2013sparse}
M.~Ozay, I.~Esnaola, F.~T.~Y. Vural, S.~R. Kulkarni, and H.~V. Poor, ``Sparse
  attack construction and state estimation in the smart grid: Centralized and
  distributed models,'' \emph{IEEE Journal on Selected Areas in
  Communications}, vol.~31, no.~7, pp. 1306--1318, 2013.

\bibitem{alexopoulos2020complementarity}
T.~A. Alexopoulos, G.~N. Korres, and N.~M. Manousakis, ``Complementarity
  reformulations for false data injection attacks on pmu-only state
  estimation,'' \emph{Electric Power Systems Research}, vol. 189, p. 106796,
  2020.

\bibitem{yuan2011modeling}
Y.~Yuan, Z.~Li, and K.~Ren, ``Modeling load redistribution attacks in power
  systems,'' \emph{IEEE Transactions on Smart Grid}, vol.~2, no.~2, pp.
  382--390, 2011.

\bibitem{lakshminarayana2021moving}
S.~Lakshminarayana, E.~V. Belmega, and H.~V. Poor, ``Moving-target defense
  against cyber-physical attacks in power grids via game theory,'' \emph{IEEE
  Transactions on Smart Grid}, 2021.

\bibitem{che2018false}
L.~Che, X.~Liu, Z.~Li, and Y.~Wen, ``False data injection attacks induced
  sequential outages in power systems,'' \emph{IEEE Transactions on Power
  Systems}, vol.~34, no.~2, pp. 1513--1523, 2018.

\bibitem{liu2016masking}
X.~Liu, Z.~Li, X.~Liu, and Z.~Li, ``Masking transmission line outages via false
  data injection attacks,'' \emph{IEEE Transactions on Information Forensics
  and Security}, vol.~11, no.~7, pp. 1592--1602, 2016.

\bibitem{li2015bilevel}
Z.~Li, M.~Shahidehpour, A.~Alabdulwahab, and A.~Abusorrah, ``Bilevel model for
  analyzing coordinated cyber-physical attacks on power systems,'' \emph{IEEE
  Transactions on Smart Grid}, vol.~7, no.~5, pp. 2260--2272, 2015.

\bibitem{lin2016runtime}
H.~Lin, A.~Slagell, Z.~T. Kalbarczyk, P.~W. Sauer, and R.~K. Iyer, ``Runtime
  semantic security analysis to detect and mitigate control-related attacks in
  power grids,'' \emph{IEEE Transactions on Smart Grid}, vol.~9, no.~1, pp.
  163--178, 2016.

\bibitem{garcia2017hey}
L.~Garcia, F.~Brasser, M.~H. Cintuglu, A.-R. Sadeghi, O.~A. Mohammed, and S.~A.
  Zonouz, ``Hey, my malware knows physics! attacking {PLCs} with physical model
  aware rootkit.'' in \emph{NDSS}, 2017.

\bibitem{deng2015defending}
R.~Deng, G.~Xiao, and R.~Lu, ``Defending against false data injection attacks
  on power system state estimation,'' \emph{IEEE Transactions on Industrial
  Informatics}, vol.~13, no.~1, pp. 198--207, 2015.

\bibitem{liu2016optimal}
X.~Liu, Z.~Li, and Z.~Li, ``Optimal protection strategy against false data
  injection attacks in power systems,'' \emph{IEEE Transactions on Smart Grid},
  vol.~8, no.~4, pp. 1802--1810, 2016.

\bibitem{tian2019multilevel}
M.~Tian, M.~Cui, Z.~Dong, X.~Wang, S.~Yin, and L.~Zhao, ``Multilevel
  programming-based coordinated cyber physical attacks and countermeasures in
  smart grid,'' \emph{IEEE Access}, vol.~7, pp. 9836--9847, 2019.

\bibitem{kim2011strategic}
T.~T. Kim and H.~V. Poor, ``Strategic protection against data injection attacks
  on power grids,'' \emph{IEEE Transactions on Smart Grid}, vol.~2, no.~2, pp.
  326--333, 2011.

\bibitem{hao2015sparse}
J.~Hao, R.~J. Piechocki, D.~Kaleshi, W.~H. Chin, and Z.~Fan, ``Sparse malicious
  false data injection attacks and defense mechanisms in smart grids,''
  \emph{IEEE Transactions on Industrial Informatics}, vol.~11, no.~5, pp.
  1--12, 2015.

\bibitem{yang2017optimal}
Q.~Yang, D.~An, R.~Min, W.~Yu, X.~Yang, and W.~Zhao, ``On optimal pmu
  placement-based defense against data integrity attacks in smart grid,''
  \emph{IEEE Transactions on Information Forensics and Security}, vol.~12,
  no.~7, pp. 1735--1750, 2017.

\bibitem{xiang2017game}
Y.~Xiang and L.~Wang, ``A game-theoretic study of load redistribution attack
  and defense in power systems,'' \emph{Electric Power Systems Research}, vol.
  151, pp. 12--25, 2017.

\bibitem{wu2016efficient}
X.~Wu and A.~J. Conejo, ``An efficient tri-level optimization model for
  electric grid defense planning,'' \emph{IEEE Transactions on Power Systems},
  vol.~32, no.~4, pp. 2984--2994, 2016.

\bibitem{yao2007trilevel}
Y.~Yao, T.~Edmunds, D.~Papageorgiou, and R.~Alvarez, ``Trilevel optimization in
  power network defense,'' \emph{IEEE Transactions on Systems, Man, and
  Cybernetics, Part C (Applications and Reviews)}, vol.~37, no.~4, pp.
  712--718, 2007.

\bibitem{bobba2010detecting}
R.~B. Bobba, K.~M. Rogers, Q.~Wang, H.~Khurana, K.~Nahrstedt, and T.~J.
  Overbye, ``Detecting false data injection attacks on {DC} state estimation,''
  in \emph{First Workshop on Secure Control Systems}, vol. 2010.\hskip 1em plus
  0.5em minus 0.4em\relax Stockholm, Sweden, 2010.

\bibitem{chakrabarti2008placement}
S.~Chakrabarti, E.~Kyriakides, and D.~G. Eliades, ``Placement of synchronized
  measurements for power system observability,'' \emph{IEEE Transactions on
  power delivery}, vol.~24, no.~1, pp. 12--19, 2008.

\bibitem{yuan2016robust}
W.~Yuan, J.~Wang, F.~Qiu, C.~Chen, C.~Kang, and B.~Zeng, ``Robust
  optimization-based resilient distribution network planning against natural
  disasters,'' \emph{IEEE Transactions on Smart Grid}, vol.~7, no.~6, pp.
  2817--2826, 2016.

\bibitem{liang2015vulnerability}
J.~Liang, L.~Sankar, and O.~Kosut, ``Vulnerability analysis and consequences of
  false data injection attack on power system state estimation,'' \emph{IEEE
  Transactions on Power Systems}, vol.~31, no.~5, pp. 3864--3872, 2015.

\bibitem{jin2018power}
M.~Jin, J.~Lavaei, and K.~H. Johansson, ``Power grid ac-based state estimation:
  Vulnerability analysis against cyber attacks,'' \emph{IEEE Transactions on
  Automatic Control}, vol.~64, no.~5, pp. 1784--1799, 2019.

\bibitem{chung2018local}
H.-M. Chung, W.-T. Li, C.~Yuen, W.-H. Chung, Y.~Zhang, and C.-K. Wen, ``Local
  cyber-physical attack for masking line outage and topology attack in smart
  grid,'' \emph{IEEE Transactions on Smart Grid}, vol.~10, no.~4, pp.
  4577--4588, 2018.

\bibitem{chu2021n}
Z.~Chu, J.~Zhang, O.~Kosut, and L.~Sankar, ``{N-1} reliability makes it
  difficult for false data injection attacks to cause physical consequences,''
  \emph{IEEE Transactions on Power Systems}, 2021.

\bibitem{chu2020vulnerability}
------, ``Vulnerability assessment of large-scale power systems to false data
  injection attacks,'' in \emph{2020 IEEE International Conference on
  Communications, Control, and Computing Technologies for Smart Grids
  (SmartGridComm)}.\hskip 1em plus 0.5em minus 0.4em\relax IEEE, 2020, pp.
  1--6.

\bibitem{yang2018general}
Z.~Yang, K.~Xie, J.~Yu, H.~Zhong, N.~Zhang, and Q.~Xia, ``A general formulation
  of linear power flow models: Basic theory and error analysis,'' \emph{IEEE
  Transactions on Power Systems}, vol.~34, no.~2, pp. 1315--1324, 2018.

\bibitem{krumpholz1980power}
G.~Krumpholz, K.~Clements, and P.~Davis, ``Power system observability: a
  practical algorithm using network topology,'' \emph{IEEE Transactions on
  Power Apparatus and Systems}, no.~4, pp. 1534--1542, 1980.

\bibitem{NERCrating}
``System operating limit definition and exceedance clarification,'' Online,
  March 2014,
  \url{https://www.nerc.com/pa/Stand/Prjct201403RvsnstoTOPandIROStndrds/2014_03_fourth_posting_white_paper_sol_exceedance_20141201_clean.pdf}.

\bibitem{NERCtripping}
``Nerc standard prc-023-1,'' Online, February 2008,
  \url{https://www.nerc.com/files/prc-023-1.pdf}.

\bibitem{athari2017impacts}
M.~H. Athari and Z.~Wang, ``Impacts of wind power uncertainty on grid
  vulnerability to cascading overload failures,'' \emph{IEEE Transactions on
  Sustainable Energy}, vol.~9, no.~1, pp. 128--137, 2017.

\bibitem{mangasarian1994nonlinear}
O.~L. Mangasarian, \emph{Nonlinear programming}.\hskip 1em plus 0.5em minus
  0.4em\relax SIAM, 1994.

\bibitem{monticelli2000electric}
A.~Monticelli, ``Electric power system state estimation,'' \emph{Proceedings of
  the IEEE}, vol.~88, no.~2, pp. 262--282, 2000.

\bibitem{coffrin2015qc}
C.~Coffrin, H.~L. Hijazi, and P.~Van~Hentenryck, ``The qc relaxation: A
  theoretical and computational study on optimal power flow,'' \emph{IEEE
  Transactions on Power Systems}, vol.~31, no.~4, pp. 3008--3018, 2015.

\bibitem{babaeinejadsarookolaee2019power}
S.~Babaeinejadsarookolaee, A.~Birchfield, R.~D. Christie, C.~Coffrin,
  C.~DeMarco, R.~Diao, M.~Ferris, S.~Fliscounakis, S.~Greene, R.~Huang
  \emph{et~al.}, ``The power grid library for benchmarking ac optimal power
  flow algorithms,'' \emph{arXiv preprint arXiv:1908.02788}, 2019.

\bibitem{coffrin2015distflow}
C.~Coffrin, H.~L. Hijazi, and P.~Van~Hentenryck, ``Distflow extensions for ac
  transmission systems,'' \emph{arXiv preprint arXiv:1506.04773}, 2015.

\bibitem{porter2020detecting}
M.~Porter, P.~Hespanhol, A.~Aswani, M.~Johnson-Roberson, and R.~Vasudevan,
  ``Detecting generalized replay attacks via time-varying dynamic
  watermarking,'' \emph{IEEE Transactions on Automatic Control}, vol.~66,
  no.~8, pp. 3502--3517, 2020.

\bibitem{liu2014detecting}
L.~Liu, M.~Esmalifalak, Q.~Ding, V.~A. Emesih, and Z.~Han, ``Detecting false
  data injection attacks on power grid by sparse optimization,'' \emph{IEEE
  Transactions on Smart Grid}, vol.~5, no.~2, pp. 612--621, 2014.

\bibitem{vielma2015mixed}
J.~P. Vielma, ``Mixed integer linear programming formulation techniques,''
  \emph{Siam Review}, vol.~57, no.~1, pp. 3--57, 2015.

\bibitem{terlaky2013interior}
T.~Terlaky, \emph{Interior point methods of mathematical programming}.\hskip
  1em plus 0.5em minus 0.4em\relax Springer Science \& Business Media, 2013,
  vol.~5.

\end{thebibliography}

\newpage
\ifisappendix
\begin{appendices}
\section{MILP Formulation of Attacker's Problem}\label{appendix: bi_level_milp}
In this section, we will demonstrate how to transform \eqref{eq:reformulate_attacker} into a MILP, which can be efficiently solved by existing solver such as Gurobi. 

To begin with, we give an overview of the PPOP, as shown in Fig.~\ref{fig: ppop}.
\begin{figure}[ht]
\vspace{-.5em}
\centering
\includegraphics[width=.6\linewidth]{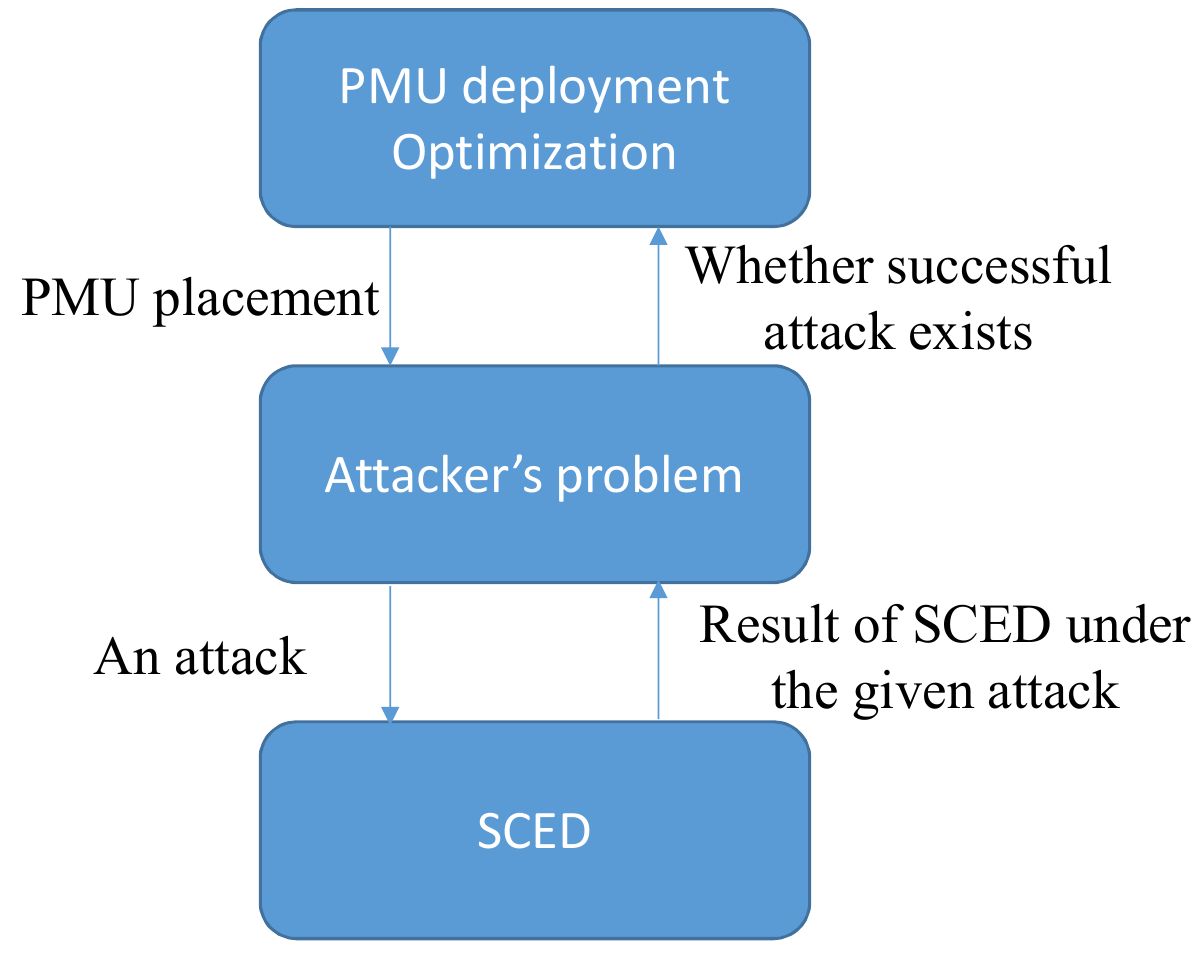}
\vspace{-1em}
\caption{Overview of the PPOP} \label{fig: ppop}
 \vspace{-1em}
\end{figure}

{We first consider the case that lower-level optimization \eqref{eq:reformulate_OPF} returns the set of $\vtheta$'s satisfying \eqref{eq: cc_load_meet}-\eqref{eq: cc_pgp}, i.e., it returns the feasible region of SCED rather than a single solution. In this case, \eqref{eq:reformulate_attacker} becomes a single-level problem.}


Below, we show how to convert the single-level formulation of \eqref{eq:reformulate_attacker} into a MILP. To convert \eqref{eq: PMU_defend} and \eqref{eq: attack_overflow} into linear constraints, we introduce a constant $M_{2,\theta}$ (defined in Appendix~\ref{appendix:Calculation of Big-M}) such that \eqref{eq: beta_on_theta} holds if and only if the following holds:\looseness=-1
\begin{subequations}\label{eq: linear_ifthen}
\begin{align}
\tilde{\theta}_{2,u} - \theta_{2,u} &\le M_{2,\theta}\cdot (1 - \vx_{N,u}), \label{eq: constr_theta2_pmu_p_appendix} \\
\tilde{\theta}_{2,u} - \theta_{2,u} &\ge -M_{2,\theta}\cdot (1 - \vx_{N,u})\label{eq: constr_theta2_pmu_n_appendix},
\end{align}
\end{subequations}
and similar conversion applies to \eqref{eq: beta_on_ap}.
As for \eqref{eq: attack_overflow}, by defining a sufficiently large constant $M_{\pi,e}$ (see Appendix~\ref{appendix:Calculation of Big-M}) and two binary auxiliary variables $\pi_{n,e}, \pi_{p,e}$ to get rid of the absolute value operation, \eqref{eq: attack_overflow} is transformed into { \begin{subequations}\label{eq: linear_overflow_v2}
\begin{align}
-M_{\pi,e}\cdot (1-\pi_{p,e}) < \frac{f_{3,e}}{f_{\tiny\text{max},e}}  - \gamma_e   \le M_{\pi,e}\cdot \pi_{p,e}, \\
-M_{\pi,e}\cdot (1-\pi_{n,e}) < \frac{-f_{3,e}}{f_{\tiny\text{max},e}}  - \gamma_e    \le M_{\pi,e}\cdot \pi_{n,e}.
\end{align}
\end{subequations}
We claim that $\pi_e = \pi_{n,e}+\pi_{p,e}$. To see this, suppose that $f_{3,e} \ge 0$. Then, we must have $-\frac{f_{3,e}}{f_{\tiny\text{max},e}}-\gamma_e \le 0$ and thus $\pi_{n,e} = 0$, while $\frac{|f_{3,e}|}{f_{\tiny\text{max},e}}-\gamma_e=\frac{f_{3,e}}{f_{\tiny\text{max},e}}-\gamma_e$ and thus $\pi_{p,e}=\pi_e$. Notice that we must have $\pi_e = 1$ if $|f_{3,e}| - \gamma_e\cdot f_{\tiny\text{max},e} > 0$, while $|f_{3,e}| - \gamma_e\cdot f_{\tiny\text{max},e} \le 0$ leads to $\pi_e = 0$. }


{To linearize \eqref{eq: f2_fdi_xi_c}, we introduce binary variables $\vw_{f} \in \{0,1\}^{|m_L|}$ and $\vw_{p} \in \{0,1\}^{|m_N|}$ for data injection on line measurements and node measurements, respectively. Then, \eqref{eq: f2_fdi_xi_c} can be transformed into (see definitions of $M_{c,f},\: M_{c,p}$ in  Appendix~\ref{appendix:Calculation of Big-M}) 
\begin{subequations}\label{eq: linearization_xi_c}
\begin{align}
&-M_{c,f} \vw_f \le \vLambda_f \lrbrackets{\tilde{\vf}_2 - \vf_2} \le M_{c,f} \vw_f, \\
&-M_{c,p} \vw_p \le \vLambda_p \lrbrackets{\tilde{\vB}\tilde{\vtheta}_2 - \vp_0} \le M_{c,p} \vw_p, \\
&\bm{1}^T \vw_f + \bm{1}^T \vw_p \le \xi_c.
\end{align}
\end{subequations}}

Together, the above techniques transform \eqref{eq:reformulate_attacker} into a MILP. Specifically, the binary decision variables are $\{\vpi_n,\vpi_p, \va_p, \vw_f,\vw_p\}$, continuous variables are $\{\tilde{\vtheta}_2 , \tilde{\vtheta}_3 , \vtheta_2 , \vtheta_3 , \vf_2, \vf_3 , \tilde{\vf}_2, \vf_{con}\}$, where $\vw_f,\vw_p$ are introduced auxiliary variables. Then, the full formulation without considering the optimality of \eqref{eq:reformulate_OPF} is given as follows.
\begin{maxi!}|s|[3]<b>
{}
{\lp{\vpi_p + \vpi_n}_0}    
{\label{eq:attacker_complete}}
{}
\addConstraint{\vDelta^{-1}\underline{\vA} \vbeta\le \vx_N \le \vDelta^{-1}\underline{\vA} \vbeta + \frac{\lp{\vDelta}_{\infty}-1}{\lp{\vDelta}_{\infty}}}{} \label{milp: xn}
\addConstraint{\frac{1}{2}|\vD|^T  \vbeta \le \vx_L \le \frac{1}{2}|\vD|^T  \vbeta + \zeta}{}\label{milp: xl}
\addConstraint{-(\bm{1}-\va_{p})\le \mbox{diag}\left(\vgamma \odot \vf_{\tiny\text{max}}\right)^{-1} \vf_2  \le \bm{1}-\va_{p}}{}\label{milp: theta2_start}
\addConstraint{\tilde{\vD} \vf_2 = \vp_0, -M_{2,f}\va_{p}\le \vGamma \vD \vtheta_2-\vf_2 \le M_{2,f}\va_p}{}\label{milp: theta2_end}
\addConstraint{-\vf_{\tiny\text{max}}\le \tilde{\vf}_{2} \le \vf_{\tiny\text{max}},~ \vGamma \tilde{\vD}^T \tilde{\vtheta}_2-\tilde{\vf}_{2}  = 0}{}\label{milp: theta2_fdi_start}
\addConstraint{-\alpha |\vp_0| \le \tilde{\vD}\tilde{\vf}_2 - \vp_0 \le \alpha |\vp_0|}{}
\addConstraint{\vLambda_g \tilde{\vD}\tilde{\vf}_2 =\vLambda_g\vp_0}{}
\addConstraint{-M_{c,f} \vw_f \le \vLambda_f \lrbrackets{\tilde{\vf}_2 - \vf_2} \le M_{c,f} \vw_f}{}
\addConstraint{-M_{c,p} \vw_p \le \tilde{\vB}\tilde{\vtheta}_2 - \vp_0 \le M_{c,p} \vw_p}{}
\addConstraint{\bm{1}^T \vw_f + \bm{1}^T \vw_p \le \xi_c,~\lp{\va_p}_0 \le \xi_p}{}\label{milp: theta2_fdi_end}
\addConstraint{\tilde{\theta}_{2,u} - \theta_{2,u} \le M_{2,\theta}\cdot (1 - \vx_{N,u})}{}\label{milp: theta2_protect_p}
\addConstraint{\tilde{\theta}_{2,u} - \theta_{2,u} \ge -M_{2,\theta}\cdot (1 - \vx_{N,u})}{}\label{milp: theta2_protect_n}
\addConstraint{\vp_{g,min} \le \vLambda_g \tilde{\vB} \tilde{\vtheta}_3 \le \vp_{g,max}}{}\label{milp: theta3_start}
\addConstraint{-\vf_{\tiny\text{max}} \le \vGamma\vD^T \tilde{\vtheta}_3 \le \vf_{\tiny\text{max}}}{}
\addConstraint{\vLambda_d \tilde{\vB} \tilde{\vtheta}_3 = \vLambda_d\tilde{\vD}\tilde{\vf}_2}{}
\addConstraint{-M_{3,a}(\bm{1}-\va_{p})\le \vf_3  \le M_{3,a}(\bm{1}-\va_{p})}{}
\addConstraint{\vLambda_d\tilde{\vD}\vf_3 = \vLambda_d\vp_0,~~~ \vLambda_g\tilde{\vD}\vf_3 = \vLambda_g\tilde{\vB}\tilde{\vtheta}_3}{}
\addConstraint{-M_{3,f}\va_p\le \vGamma \tilde{\vD}^T \vtheta_3-\vf_3 \le M_{3,f} \va_p}{}\label{milp: theta3_end}
\addConstraint{\theta_{2,u_0}=\theta_{3,u_0} =  \tilde{\theta}_{2,u_0} = \tilde{\theta}_{3,u_0}=0}{}\label{milp: zero_ref_theta}
\addConstraint{-\cdot (1-\pi_{p,e}) < \frac{f_{3,e}}{M_{\pi,e}f_{\tiny\text{max},e}}  - \frac{\gamma_e }{M_{\pi,e}} \le \cdot \pi_{p,e}, \forall e}{}\label{milp: outage_1}
\addConstraint{-\cdot (1-\pi_{n,e}) < \frac{-f_{3,e}}{M_{\pi,e}f_{\tiny\text{max},e}}  - \frac{\gamma_e }{M_{\pi,e}} \le \cdot \pi_{n,e}, \forall e}{}\label{milp: outage_2}
\addConstraint{\tilde{\vD}_u\vf_{con} = \begin{cases}|V|-1, &\mbox{ if } u = u_0,\\
-1, &~\mbox{if } u\in V\setminus\{u_0\},
\end{cases}}{}\label{milp: fcon_1}
\addConstraint{-|V|\cdot (1-a_{p,e}) \le f_{con,e} \le |V|\cdot (1-a_{p,e})}{}\label{milp: fcon_2}
\end{maxi!} 
The constraints \eqref{milp: xn}-\eqref{milp: xl} correspond to \eqref{eq: const_beta_x_N}-\eqref{eq: const_beta_x_L}, \eqref{milp: theta2_start}-\eqref{milp: theta2_end} correspond to \eqref{f2_range}-\eqref{f2_theta_valid}, \eqref{milp: theta2_fdi_start}-\eqref{milp: theta2_fdi_end} correspond to \eqref{constr: f2_fdi}, \eqref{milp: theta2_protect_p}-\eqref{milp: theta2_protect_n} correspond to \eqref{eq: linear_ifthen}, \eqref{milp: theta3_start}-\eqref{milp: theta3_end} correspond to \eqref{eq: constr_f3}, \eqref{milp: zero_ref_theta} corresponds to \eqref{eq: zero_ref_theta}, \eqref{milp: outage_1}-\eqref{milp: outage_2} correspond to \eqref{eq: attack_overflow}, \eqref{milp: fcon_1}-\eqref{milp: fcon_2} correspond to \eqref{eq:connectivity_constr}.

If we do not relax the optimality requirements in \eqref{eq:reformulate_OPF}, we need to introduce additional binary variables $\{\vr_{fl},\vr_{fu},\vr_{gl},\vr_{gu}\}$ and continuous dual variables $\{\vmu_b,\vmu_c,\vmu_d,\vmu_e,\vmu_g\}$ to transform \eqref{eq:reformulate_OPF} into a linear system by using its KKT conditions \cite{yuan2011modeling}. Specifically, we add the following linear system into \eqref{eq:attacker_complete} for the completeness of KKT conditions of \eqref{eq:reformulate_OPF}:
\begin{subequations}
\allowdisplaybreaks
\begin{align}
&\tilde{\vB}\vLambda_d^T \vmu_b + \tilde{\vD}\vGamma \vmu_c + \tilde{\vD}\vGamma \vmu_d + \tilde{\vB}\vLambda_g^T \vmu_e - \tilde{\vB}\vLambda_g^T \vmu_g = -\tilde{\vB}\vLambda_g^T \vphi \\
&\vmu_c - M \vr_{fl} \le \bm{0}, \\
&\vGamma\tilde{\vD}^T\tilde{\vtheta_3} + M \vr_{fl}\le M - \vf_{\tiny\text{max}}\\
&\vmu_d - M \vr_{fu} \le \bm{0}, \\
&-\vGamma\tilde{\vD}^T\tilde{\vtheta_3} + M \vr_{fl}\le M - \vf_{\tiny\text{max}}\\
&\vmu_e - M \vr_{gl} \le \bm{0}, \\
&\vLambda_g\tilde{\vB}\tilde{\vtheta}_3 + M\vr_{gl}\le \vp_{g,min}+M\bm{1} \\
&\vmu_g - M \vr_{gu} \le \bm{0}, \\
&-\vLambda_g\tilde{\vB}\tilde{\vtheta}_3 + M\vr_{gu}\le -\vp_{g,max}+M\bm{1}\\
&\vr_{gl}+\vr_{gu}\le \bm{1}\\
&\vr_{fl}+\vr_{fu}\le \bm{1} \\
&\vmu_c, \vmu_d, \vmu_e, \vmu_g \ge \bm{0}
\end{align}
\end{subequations}
{Compared to the attacker's formulations in \cite{li2015bilevel, yudi21SmartGridComm} that also optimize the location of physical attacks, the key advantage of \eqref{eq:reformulate_attacker} is avoiding McCormick's relaxation for bilinear terms \eqref{eq: bilinear_terms} and reducing the numbers of variables and constraints. Specifically, McCormick's relaxation in \cite{yudi21SmartGridComm} will introduce $2|E||V|$ additional continuous variables and $8|E||V|$ additional constrains. The cost of avoiding bilinear term in \eqref{eq:reformulate_attacker} is the additional variables $\vf_{2},\vf_3, \tilde{\vf}_2$ and the associated constraints, although the benefit usually outweighs the cost. For example, for the IEEE 118-bus system, the formulation in \cite{yudi21SmartGridComm} has $44436$ continuous variables and $178,596$ constraints, while \eqref{eq:reformulate_attacker} only has $1216$ continuous variables and $5,802$ constraints.}

\section{Calculation of Big-M}\label{appendix:Calculation of Big-M}
In this section, we will explain how to calculate {$M_{2,a,e}$ in \eqref{f2_range}}, $M_{2,f}$ in \eqref{f2_theta_valid}, $M_{3,a}$ in \eqref{eq:f_3, a_p}, $M_{2,\theta}$ in \eqref{eq: linear_ifthen}, $M_{3,f}$ in \eqref{eq:theta_3, f_3}, $M_{\pi,e}$ in \eqref{eq: linear_overflow_v2}, $\overline{M}_{F}, \underline{M}_{F}$ in \eqref{eq: McCormick_relaxation}, $M_{c,f}, M_{c,p}$ in \eqref{eq: linearization_xi_c} and $M_{q}$ in \eqref{eq: q2_beta_relationship}. In this section, we denote $\mathcal{N}=(V,E)$ as the graph before physical attack while $\mathcal{N}'=(V,E')$ as the graph after attack.

We first show how to calculate {$M_{2,a}$}, $M_{2,f}$ and $M_{2,\theta}$. {Suppose that the power grid is designed to be robust to $N-k$ contingency. Then, the value of $M_{2,a}$ depends on $\xi_p-k$. If $\xi_p-k\leq 0$, then we can set $M_{2,a,e}:=f_{\tiny\text{max},e}$ or $M_{2,a,e}:=\gamma_e f_{\tiny\text{max},e}$, since no $\va_p$ can cause overloading. Otherwise, we set $M_{2,a,e}:=C_{2,a}\gamma_e f_{\tiny\text{max},e}$ with a parameter $C_{2,a}>1$. In our simulations, we find that $C_{2,a}:=3$ suffices since $\xi_p-k$ is usually small.} Next, we bound $|\vtheta_2|$ by defining $M_{\theta_2,u} \ge \max_{\va_p} |\theta_{2,u}|$ and $M_{\theta_2} \ge \max_{\va_p}\max_u |\theta_{2,u}|$ since the value of $\vtheta_2$ depends on $\va_p$. An intuitive way of obtaining $M_{\theta_2,u}$ is enumerating all possible values of $\va_p$, whose time complexity is polynomial in $|E|$ and $|V|$ if $\xi_p = \mathcal{O}(1)$. Here we provide another way of bounding $M_{\theta_2,u}$. Due to our assumption of the connected $\mathcal{N}$, there exists at least one path in $\mathcal{N}$ connecting the reference node $u_0$ to each node $u\in V$. Moreover, for each path connecting $u_0$ and $u$, say $Pa(u_0,u) := (e_0, e_1,\cdots,e_J)$ where $e_0=(u_0,v_1)$, $e_j = (v_j,v_{j+1})$ and $e_J = (v_J,u)$, we have $\theta_{2,u} - \theta_{2,u_0} = \theta_{2,s}-\theta_{2,v_1}+\theta_{2,v_1}-\cdots+\theta_{2,v_J} - \theta_{2,t}$, which leads to 
\begin{align}\label{append: max_theta_2_u}
\max_{\va_p} |\theta_{2,u}| &= \max_{\va_p} |\theta_{2,u} - \theta_{2,u_0}|  \nonumber \\
&\le \sum_{j = 0}^J{ r_{e_j} M_{2,a,e_j} := M_{Pa}(\theta_{2,u})}
\end{align}
since $\theta_{2,u_0} = 0$ in our assumption and $|\theta_{2,v_j} - \theta_{2,v_{j+1}}| \le {r_{e_j} M_{2,a,e_j}}$ due to \eqref{f2_range}. Denote $n_p$ as the number of different paths connecting $u$ and $u_0$. Then, since the physical attack will disconnect at most $\xi_p$ lines, we set $M_{\theta_2,u} := \max \{M_{Pa_i}(\theta_{2,u})\}_{i=1}^{\min\{\xi_p+1, n_p\}}$.

Equipped with $M_{\theta_2,u}, u\in V$, we can calculate $M_{2,f}$ and $M_{2,\theta}$. We define an intermediate constant $M_{2,f,e}$ for each line such that $M_{2,f} = \max_{e\in E} M_{2,f,e}$. Then, for $e=(u,v)$ we can set $M_{2,f,e} := r_e (M_{\theta_2,u} + M_{\theta_2,v})$ since $|\Gamma_e \tilde{\vd}_e^T\vtheta_2 - f_{2,e}| > 0$ only if $a_{p,e} = 1$ and $f_{2,e}=0$. 

To obtain $M_{2,\theta}$, we first bound $M_{\tilde{\theta}_2,u} \ge \max_{\va_p,\va_c} |\tilde{\theta}_{2,u}|, u\in V$ in a similar way as that in \eqref{append: max_theta_2_u}. Specifically, since $\tilde{\vtheta}_2$ is estimated by CC based on the topology $\mathcal{N}$, we can arbitrarily choose one path $(e_0, e_1,\cdots,e_J)$ in $\mathcal{N}$ that connects $u$ and $u_0$ and set 
\begin{align}\label{append: max_theta_2_fdi_u}
M_{\tilde{\theta}_2,u} := \sum_{j = 0}^J r_{e_j} f_{max,e_j} \ge \max_{\va_p,\va_c} |\tilde{\theta}_{2,u}|.
\end{align}
Then, we can set $M_{2,\theta} := \max_{u\in V} ( M_{\tilde{\theta}_2,u} + M_{\theta_2,u} )$.


Now, we are ready to demonstrate the calculation of $M_{3,a}$ and $M_{3,f}$. As for $M_{3,a}$, we only require $M_{3,a,e} > \gamma_e f_{\tiny\text{max},e}$ and $M_{3,a} \ge \max_{e\in E}\gamma_e f_{\tiny\text{max},e}$ so that the attacker can cause outages over any lines. In practice, we can set $M_{3,a} := c \max_{e\in E}\gamma_e f_{\tiny\text{max},e}$ with $c > 1$. As for $M_{3,f}$, we again first show that we can bound $|\theta_{3,u}| \le M_{\theta_3,u}$ without hurting the attacker's objective. We notice that the topology of grid at $t_3$ before lines facing outage automatically disconnect themselves is still $\mathcal{N}'$. Thus, we can set $M_{\theta_3,u}$ similarly as $M_{\theta_2,u}$, except that \eqref{append: max_theta_2_u} becomes:
\begin{align}
\max_{\va_p} |\theta_{3,u}| \le \sum_{j = 0}^J r_{e_j} M_{a,e_j} := M_{Pa}(\theta_{3,u}).
\end{align}
Then, we can set $M_{\theta_3,u} := \max \{M_{Pa_i}(\theta_{3,u})\}_{i=1}^{\min\{\xi_p+1, n_p\}}$, $M_{3,f,e} := r_e (M_{\theta_3,u} + M_{\theta_3,v})$ for $e=(u,v) \in E$, and $M_{3,f} = \max_{e\in E} M_{3,f,e}$.

Equipped with $M_{3,a,e}$, $M_{\pi,e}$ can be easily set as $c\cdot ( \frac{M_{3,a,e}}{f_{\tiny\text{max},e}} + \gamma_e )$ with any constant $c > 1$.

We can set $\overline{M}_{F}$ as $0$ since $\vq_2 \ge \bm{0}$ and $F_{3,i,u} \in \{0,-M_{2,\theta}\}$, $\forall i,u$. There is no simple guidelines for $\underline{M}_{F}$ in \eqref{eq: McCormick_relaxation} since it is the bound for dual variables. In practice, we can initialize $\underline{M}_{F}$ to a given value and solve \eqref{eq: vcg_formulation} for each attack pair separately. Then, we iteratively decrease $\underline{M}_{F}$ until \eqref{eq: vcg_formulation} is feasible under each attack pair separately. In our simulations, we set $\underline{M}_{F} := -M_{2,\theta}^2$. Equipped with $\underline{M}_{F}$, we can set $M_{q} := \frac{2\underline{M}_{F}}{M_{2,\theta}}$.

Finally, we demonstrate how to set $M_{c,f}$ and $M_{c,p}$. Due to \eqref{f2_range} and \eqref{eq: f2_fdi_valid_range}, we have $|\tilde{f}_{2,e}-f_{2,e}| \le (1+\gamma_e) f_{\tiny\text{max},e}$, which implies that we can set $M_{c,f}:= \max_{e\in E} (1+\gamma_e) f_{\tiny\text{max},e}$. Similarly, we can set $M_{c,p}:= \alpha \lp{\vp_0}_{\infty}$ due to \eqref{eq: fdi_range}.

\section{Efficiency Analysis of \quotes{No-Good} Constraints}\label{appendix: effi_no_good}
We have the following observations about AONG:
\begin{enumerate}
    \item \emph{Cold start}. The efficiency of \eqref{eq: no_good_cut} can be characterized by the number of infeasible $\vbeta$'s that are cut out. Let $\{\hat{\vbeta}^{(k)}\}_{k=1}^K$ be the PMU placements obtained in the first $K$ iterations of Alg.~\ref{alg: alter_opt_v1} and $\{\hat{\vbeta}^{'(k)}\}_{k=1}^K$ the corresponding augmented placements obtained from \eqref{eq: opt_max_nogood_but}. Then, the number of feasible $\vbeta$'s for the next iteration is at least
    \begin{align}
    \left( 2^{|\bigcap_{k=1}^K \Omega(\hat{\vbeta}^{'(k)})^c|} - 1 \right) \cdot 2^{ |V| - |\bigcap_{k=1}^K \Omega(\hat{\vbeta}^{'(k)})^c| }
    \end{align}
    if $\bigcap_{k=1}^K \Omega(\hat{\vbeta}^{'(k)})^c \ne \emptyset$, as placing at least one PMU in $\bigcap_{k=1}^K \Omega(\hat{\vbeta}^{'(k)})^c$ will satisfy \eqref{eq: no_good_cut} for every placement in $\{\hat{\vbeta}^{'(k)}\}_{k=1}^K$. This implies that the number of $\vbeta$'s that are cut out is at most $2^{ |V| - |\bigcap_{k=1}^K \Omega(\hat{\vbeta}^{'(k)})^c| }$. Therefore, the first $K$ ``No-Good'' constraints \eqref{eq: no_good_cut} added in Alg.~\ref{alg: alter_opt_v1} will be inefficient if $|\bigcap_{k=1}^K \Omega(\hat{\vbeta}^{'(k)})^c|$ is large. 
    We observe that $|\bigcap_{k=1}^K \Omega(\hat{\vbeta}^{'(k)})^c|$ is large at the beginning of Alg.~\ref{alg: alter_opt_v1} and decreases quickly as $\lp{\hat{\vbeta}^{(k)}}_0$ increases. 
    \item \emph{Repeated successful attacks}. Another cause of inefficiency 
    is that for many PMU placements 
    enumerated by AONG, there exist successful attacks based on the same attack pair $(\va_p, e)$, indicating that new constraints are needed to better defend against identified attacks. 
\end{enumerate}


\section{The Details of Coefficients in Attacker's Problem} \label{appendix: expansion_attack_denial_primal}
The linear system \eqref{eq: LP_given_pair_eq} is the composition of \eqref{eq: no_attack_generator}, \eqref{eq: attack_load} and \eqref{eq: pd_theta_3_fdi}, which can be expanded into:
\begin{align}
\left[ \begin{array}{ccc}
   \vLambda_g \tilde{\vB}  & \bm{0}  & \bm{0} \\
   \bm{0}  & \bm{0}  & \vLambda_d \vB \\
   \bm{0}  & -\vLambda_g \tilde{\vB} & \vLambda_g \vB  \\
   \vLambda_d \tilde{\vB} & -\vLambda_d \tilde{\vB} & \bm{0}
\end{array} 
\right]  
\left[\begin{array}{c}
     \tilde{\vtheta}_2  \\
     \tilde{\vtheta}_3 \\
     \vtheta_3
\end{array}
\right] = 
\left[\begin{array}{c}
     \vLambda_g \vp_0  \\
     \vLambda_d \vp_0 \\
     \bm{0} \\
      \bm{0}
\end{array}
\right]
\end{align}
as well as $\tilde{\theta}_{2,u_0} = \tilde{\theta}_{3,u_0} = \theta_{3,u_0} = 0$. For a given attack pair $(\va_p, e)$ and the corresponding $\vtheta_2$, the expansion of \eqref{eq: LP_given_pair_ineq} is 
\begin{align}\label{eq: expand_primal_ineq}
\begin{blockarray}{cccc}
{\tilde{\vtheta}_2}&{\tilde{\vtheta}_3} & {\vtheta_3} & {\vs_2 + \vF_3 \vx_N}  \\
\begin{block}{[cccc]}
\tilde{\vB}  & \bm{0}  & \bm{0} & \vp_0 + \alpha |\vp_0| \\
-\tilde{\vB}  & \bm{0}  & \bm{0} & -\vp_0 + \alpha |\vp_0| \\
\vI_{|V|}  & \bm{0}  & \bm{0} & \vtheta_2 + M_{\theta}(1-\vx_N) \\
-\vI_{|V|}  & \bm{0}  & \bm{0} & -\vtheta_2 + M_{\theta}(1-\vx_N) \\
0  &  0  & -\Gamma_e\vd_e^T & -\gamma_e f_{\tiny\text{max},e} \\
\bm{0} & \tilde{\vD}^T \vGamma & \bm{0} &\vf_{\tiny\text{max}} \\
\bm{0} & -\tilde{\vD}^T \vGamma & \bm{0} &\vf_{\tiny\text{max}} \\
\bm{0}  & \vLambda_g \tilde{\vB} & \bm{0} & \vp_{g,max}   \\
\bm{0}  & -\vLambda_g \tilde{\vB} & \bm{0} & -\vp_{g,min}  \\
\tilde{\vD}^T\vGamma & \bm{0} & \bm{0} &\vf_{\tiny\text{max}} \\
-\tilde{\vD}^T\vGamma & \bm{0}  & \bm{0} &\vf_{\tiny\text{max}} \\
\end{block}
\end{blockarray}
\end{align}
Specifically, the first two rows of \eqref{eq: expand_primal_ineq} correspond to \eqref{eq: fdi_range}, the next two rows correspond to \eqref{eq: linear_ifthen}, the 5-th row indicates the outage at the target line, the 6-th and 7-th rows correspond to \eqref{eq: fmax_theta_3_fdi}, the 8-th and 9-th rows correspond to \eqref{eq: pg_theta_3_fdi}, and the last two rows correspond to \eqref{eq: f2_fdi_valid_range}.

\section{Details of the Attacker's Problem Under AC Power Flow Model}\label{appendix: ac_attacker_detail}
\begin{table}[tb]
\footnotesize
\renewcommand{\arraystretch}{1.3}
\caption{Notations for AC power flow} \label{tab:notation_ac_appendix}
\vspace{-1em}
\centering
\begin{tabular}{c|l}
  \hline
  Notation & Description  \\
  \hline
 {$\vp/\vq \in \mathbb{C}^{|V|}$} & {Active/reactive power injection} \\
 \hline
 {$\Vec{v}_u = v_u e^{j\cdot \theta_u}$} & {node voltage} \\
 \hline
 {$\tilde{\vY}_{bus} = \tilde{\vG}_{bus} + j\tilde{\vB}_{bus}$} & {Bus admittance matrix} \\
 \hline
 {$\tilde{\vY}_f/ \tilde{\vY}_t \in \mathbb{C}^{|E|\times|V|}$} & {\emph{From}/\emph{to} end admittance matrix} \\
 \hline
 {$\vC_f/ \vC_t \in \{0,1\}^{|E|\times|V|}$ } & {\emph{From}/\emph{to} end incidence matrix} \\
 \hline
 {$\vp_f/ \vp_t \in \mathbb{C}^{|E|}$} & {\emph{From}/\emph{to} end active power flow} \\
 \hline
 {$\vq_f/ \vq_t \in \mathbb{C}^{|E|}$} & {\emph{From}/\emph{to} end reactive power flow} \\
 \hline
 {$|\vI_f|^2/ |\vI_t|^2 \in \mathbb{C}^{|E|}$} & {Square of \emph{from}/\emph{to} end current magnitude} \\
 \hline
 {$\vI_{max} \in \mathbb{R}^{|E|}$} & {Limit on line current magnitude} \\
 \hline
 {$\hat{\vI}_{max} \in \mathbb{R}^{|E|}$} & {Threshold for line tripping}\\
 \hline
 {$\tilde{\vY}_c = \tilde{\vG}_c + j\tilde{\vB}_c \in \mathbb{C}^{|E|}$} & {line charging} \\
 \hline
 {$\tilde{\vZ} = \tilde{\vZ}_R + j\tilde{\vZ}_I \in \mathbb{C}^{|E|}$} & {line impedance} \\
 \hline
 {$\tilde{\vY}_L = \tilde{\vG}_L + j\tilde{\vB}_L \in \mathbb{C}^{|E|}$} & {line admittance} \\
 \hline
 {$\vV_{max}/\vV_{min} \in \mathbb{R}^{|V|}$} & {Limit on node voltage magnitude} \\
 \hline
 {$\vtheta_{max}/\vtheta_{min} \in \mathbb{R}^{|E|}$} & {Limit on phase angle difference for lines} \\
 \hline
 {$\hat{\vp}_{3}/\hat{\vq}_{3} \in \mathbb{R}^{|V|}$} & {approximated power injections at $t_3$} \\
 \hline
 {$\hat{\vp}_{f,3}/\hat{\vq}_{f,3} \in \mathbb{R}^{|E|}$} & {approximated line power flow at $t_3$} \\
 \hline
\end{tabular}
\vspace{-.5em}
\end{table}
\normalsize

For completeness, we summarize the necessary notations for presenting AC power flow model in Table~\ref{tab:notation_ac_appendix}. Specifically, we denote $\vC_f$ as the \emph{From} end incidence matrix, in which $C_{f,e,i} = 1$ if and only if we have $e=(i,k)\in E$. The \emph{To} end incidence matrix $\vC_t$ is defined similarly, where $C_{t,e,k} = 1$ if and only if we have $e=(i,k)\in E$.

We provide details about \eqref{eq: ac_formulate_overload}, where we adopt QC relaxation proposed in \cite{coffrin2015qc} for \eqref{ac: t3_acopf} and linearized approximation proposed in \cite{yang2018general} for \eqref{ac: pf_linearze_t3}. As for the constraint on false data injection to bypass BDD \eqref{ac: fdi_t2}, we follow \cite{chung2018local} to formulate QC relaxation-based constraints. 

To begin with, we demonstrate the basics on QC relaxation for AC power flow equations. Recall from Table~\ref{tab:notation_ac_appendix} that the complex voltage on node $i$ is $\Vec{v_i}:=v_i e^{j\cdot \theta_i}$. Then, we introduce auxiliary variables $c_{ii}, c_{ik}$ and $s_{ik}$ in the hope that
\begin{subequations}
\begin{align}
c_{ii} &= v_i^2, \\
c_{ik} &= v_i v_k \cos{\theta_{ik}} \\
s_{ik} &= v_i v_k \sin{\theta_{ik}},
\end{align}
\end{subequations}
where $\theta_{ik} = \theta_i - \theta_k$. As proposed in \cite{coffrin2015qc}, we first introduce the notation $\langle x\rangle^{\cdot}$, which indicates an auxiliary variable as well as the associated constraints with $x$ as input. Concretely, $\left\langle x^{2}\right\rangle^{T}$ indicates the auxiliary variable $\breve{x}$ together with the following constraints:
\begin{align}\label{eq: x_T}
\allowdisplaybreaks
\left\langle x^{2}\right\rangle^{T} \equiv\left\{\begin{array}{l}
\breve{x} \geqslant x^{2} \\
\breve{x} \leqslant\left(x_u+x_l\right) x-x_u x_l
\end{array}\right. ,
\end{align}
where $x\in [x_{l}, x_{u}]$ is pre-assigned bound. Similarly, we have
\begin{subequations}\label{eq: x_MSC_conv}
\allowdisplaybreaks
\begin{align}
&\langle x y\rangle^{M} := \left\{\begin{array}{l}
\breve{x y} \geqslant x_l y+y_l x-x_l y_l \\
\breve{x y} \geqslant x_u y+y_u x-x_u y_u \\
\breve{x y} \leqslant x_l y+y_u x-x_l y_u \\
\breve{x y} \leqslant x_u y+y_l x-x_u y_l
\end{array}\right.\\
&\langle\sin x\rangle^{S} :=\left\{\begin{array}{l}
\breve{s x} \leqslant \cos \left(\frac{x_u}{2}\right)\left(x-\frac{x_u}{2}\right)+\sin \left(\frac{x_u}{2}\right) \\
\breve{s x} \geqslant \cos \left(\frac{x_u}{2}\right)\left(x+\frac{x_u}{2}\right)-\sin \left(\frac{x_u}{2}\right)
\end{array}\right.\\
&\langle\cos x\rangle^{C} :=\left\{\begin{array}{l}
c x \leqslant 1-\frac{1-\cos \left(x_u\right)}{\left(x_u\right)^{2}} x^{2} \\
\breve{c x} \geqslant \cos \left(x_u\right)
\end{array}\right.
\end{align}
\end{subequations}
Equipped with \eqref{eq: x_T} and \eqref{eq: x_MSC_conv}, the QC relaxation-based constraints on $c_{ii}$ for each $i\in V$ can be written as $c_{i i} \in\left\langle v_{i}^{2}\right\rangle^{T}$, while the constraints on $c_{ik}$ and $s_{ik}$ for each $e=(i,k)\in E$ are
\begin{subequations}
\allowdisplaybreaks
\begin{align}
&c_{ik} = c_{ki}, \\
&s_{ik} = -s_{ki},\\
&c_{i k}^{2}+s_{i k}^{2} \leq c_{i i} c_{k k},\\
&c_{i k} \in\left\langle\left\langle v_i v_k \right\rangle^{M} \cdot\left\langle\cos{\theta_{ik}}\right\rangle^{C}\right\rangle^{M}, \label{appendix: cik_Mconv}\\
&s_{i k} \in\left\langle\left\langle v_i v_k \right\rangle^{M} \cdot\left\langle\sin{\theta_{ik}} \right\rangle^{S}\right\rangle^{M}. \label{appendix: sik_Mconv}
\end{align}
\end{subequations}

For simplicity, we will omit the auxiliary variables and the associated constraints for modeling \eqref{appendix: cik_Mconv} and \eqref{appendix: sik_Mconv}. We assume that \eqref{appendix: cik_Mconv} and \eqref{appendix: sik_Mconv} are imposed when QC relaxation is adopted. For \eqref{ac: fdi_t2}, the decision variables we focus are $\tilde{c}_{2,ii}, \forall i\in V, \tilde{c}_{2,ik}, \tilde{s}_{2,ik}, \forall e=(i,k)\in E, e= (k,i)\in E$, $\tilde{\vtheta}_{2}, \tilde{\vv}_2$ and $|\tilde{\vI}_{2,f}|^2,|\tilde{\vI}_{2,t}|^2$. Then, the constraints \eqref{ac: fdi_t2} can be written as
\begin{subequations}\label{eq: details_ac_t2_fdi}
\allowdisplaybreaks
\begin{align}
&\vLambda_g (\tilde{\vp}_2 - \vp_0) = 0, \vLambda_g (\tilde{\vq}_2 - \vq_2) = 0\label{eq: ac_no_attack_gen}\\
&\vLambda_g(\tilde{\vv}_2 - \vv_2) = \bm{0}, \\
&\tilde{c}_{2,ii} = v_{2,i}^2, \forall i\in V_g,\\
&-\vLambda_d |\tilde{\vp}_0| \le \alpha\vLambda_d(\tilde{\vp}_{2,i} - \tilde{\vp}_0) \le \alpha\vLambda_d |\tilde{\vp}_0|, \\
&-\vLambda_d |\tilde{\vq}_0| \le \alpha\vLambda_d(\tilde{\vq}_{2,i} - \tilde{\vq}_0) \le \alpha\vLambda_d |\tilde{\vq}_0|, \\
&(1-\eta)\vV_{min} \le \tilde{\vv}_2 \le (1+\eta) \vV_{max} \\
&(1-\eta)\theta_{min,e} \le \tilde{\theta}_{2,e} \le (1+\eta) \theta_{max,e}, \forall e\in E \\
&|\tilde{\vI}_{2,f}| \le \vI_{max}, |\tilde{\vI}_{2,t}| \le \vI_{max}\label{eq: ac_limit_I_t2_fdi} \\
&\tilde{p}_{2,i}=\sum_{k=1, \ldots, n} \tilde{G}_{i k} \tilde{c}_{2,ik}-\tilde{B}_{i k} \tilde{s}_{2,ik}, \label{eq: qc_p_appendix} \\
&\tilde{q}_{2,i}=\sum_{k=1, \ldots, n}-\tilde{B}_{i k} \tilde{c}_{2,ik}-\tilde{G}_{i k} \tilde{s}_{2,ik}, \label{eq: qc_q_appendix} \\
&\tilde{p}_{2,f,e} = \tilde{G}_{f,e,i} \tilde{c}_{2,ii} + \tilde{G}_{f,e,k}\tilde{c}_{2,ik} - \tilde{B}_{f,e,k}\tilde{s}_{2,ik}, \label{eq: p2fe_fdi}\\
&\tilde{q}_{2,f,e} = -\tilde{B}_{f,e,i}\tilde{c}_{2,ii} - \tilde{B}_{f,e,k}\tilde{c}_{2,ik} - \tilde{G}_{f,e,k}\tilde{s}_{2,ik},\\
&\tilde{p}_{2,t,e}=\tilde{G}_{t,e,k}^*\tilde{c}_{2,kk} + \tilde{G}_{t,e,i}\tilde{c}_{2,ik} + \tilde{B}_{t,e,i}\tilde{s}_{2,ik},\\
&\tilde{q}_{2,t,e}=-\tilde{B}_{t,e,k}^*\tilde{c}_{2,kk} - \tilde{B}_{t,e,i}\tilde{c}_{2,ik} + \tilde{G}_{t,e,i}\tilde{s}_{2,ik}, \label{eq: q2te_fdi}\\
&\tilde{p}_{2,i} =\vc_{f,i}^T \tilde{\vp}_{f,2} + \vc_{t,i}^T \tilde{\vp}_{t,2} + \mathcal{R}(Y_{sh,i} \tilde{c}_{2,ii})\\
&\tilde{q}_{2,i} =\vc_{f,i}^T \tilde{\vq}_{f,2} + \vc_{t,i}^T \tilde{\vq}_{t,2} - \mathcal{I}(Y_{sh,i} \tilde{c}_{2,ii}) \\
&\tilde{c}_{2,ii} = v_{2,i}^2, \tilde{v}_{2,i} = v_{2,i}, \tilde{\theta}_{2,i} = \theta_{2,i}, \forall i \text{ with } x_{N,i} = 1, \label{eq: ac_protection_cii}\\
&\tilde{c}_{2,ik} = v_{2,i} v_{2,k} \cos{\theta_{ik}}, \forall e=(i,k) \text{ with } x_{L,e} = 1,\\
&\tilde{p}_{2,e} = p_{2,e}, \tilde{q}_{2,e} = q_{2,e}, \forall e=(i,k) \text{ with } x_{L,e} = 1, \\
&\tilde{I}_{2,f,e} = I_{2,f,e}, \tilde{I}_{2,t,e} = I_{2,t,e}, \forall e=(i,k) \text{ with } x_{L,e} = 1,\label{eq: ac_protection_I}
\end{align}
\end{subequations}
where $\vp_0$ and $\vq_0$ indicates the ground-truth power injections at $t_0$, \eqref{eq: qc_p_appendix}-\eqref{eq: qc_q_appendix} are imposed for each node $i\in V$, \eqref{eq: p2fe_fdi}-\eqref{eq: q2te_fdi} are imposed for all $e=(i,k) \in E$, $\vc_{f,i}/\vc_{t,i}$ is the $i$-th column of $\vC_f/\vC_t$, $\vY_{sh}$ denotes the diagonal matrix of node shunt, $\mathcal{R}(x)/\mathcal{I}(x)$ denotes the real/imaginary part of $x$, \eqref{eq: ac_protection_cii}-\eqref{eq: ac_protection_I} indicates the protection effect of PMUs, and $\eta \in [0,1)$ is a manually assigned factor for $\tilde{v}_2$ and $\tilde{\theta}_2$ not to raise alarms in control center. Besides \eqref{eq: details_ac_t2_fdi}, we impose the following constraints according to \cite[Chapter~5]{coffrin2015distflow} for each $e=(i,k)\in E$ into \eqref{ac: fdi_t2} :
\begin{subequations}\label{eq: ac_details_equiv_I_2}
\allowdisplaybreaks
\begin{align}
&|\tilde{\vI}_{2,f,e}|^2= \frac{1}{|Z_e|^2} \left(\tilde{c}_{2,ii} + \tilde{c}_{2,kk} - 2\tilde{c}_{2,ik}\right)+2\tilde{G}_{c,e}\tilde{p}_{2,f,e} \nonumber \\
&\qquad \qquad - 2\tilde{B}_{c,e}\tilde{q}_{2,f,e} -\left|\boldsymbol{Y}_{c,e}\right|^{2} \tilde{c}_{2,ii},\\
&\tilde{p}_{2,f,e}+\tilde{q}_{2,f,e}=\tilde{Z}_{R,e}\big(|\tilde{\vI}_{2,f,e}|^2-2(\tilde{G}_{c,e}\tilde{p}_{2,f,e} -\tilde{B}_{c,e}\tilde{q}_{2,f,e})\nonumber\\  
&\qquad \quad  +\left|\boldsymbol{Y}_{c,e}\right|^{2} \tilde{c}_{2,ii}\big)+\tilde{G}_{c,e} (\tilde{c}_{2,ii} + \tilde{c}_{2,kk}), \\
&\tilde{p}_{2,f,e}+\tilde{q}_{2,f,e}=\tilde{Z}_{I,e}\big(|\tilde{\vI}_{2,f,e}|^2-2(\tilde{G}_{c,e}\tilde{p}_{2,f,e} - \tilde{B}_{c,e}\tilde{q}_{2,f,e}) \nonumber\\
&\qquad \qquad +\left|\boldsymbol{Y}_{c,e}\right|^{2} \tilde{c}_{2,ii}\big)-\tilde{B}_{c,e} (\tilde{c}_{2,ii} + \tilde{c}_{2,kk}), \\
&\lrbrackets{ 1+2 \tilde{Z}_{R,e}\tilde{G}_{c,e} - 2 \tilde{Z}_{I,e}\tilde{B}_{c,e} } \tilde{c}_{2,ii} - \tilde{c}_{2,kk} = 2(\tilde{Z}_{R,e}\tilde{p}_{2,f,e}   \nonumber\\
&+\tilde{Z}_{I,e}\tilde{q}_{2,f,e})-|\tilde{Z}_e|^2\big(|\tilde{\vI}_{2,f,e}|^2-2(\tilde{G}_{c,e}\tilde{p}_{2,f,e} - \tilde{B}_{c,e}\tilde{q}_{2,f,e})  \nonumber \\ &\qquad \qquad \qquad +\left|\tilde{Y}_{c,e}\right|^{2} \tilde{c}_{2,ii}\big)
\end{align}
\end{subequations}
All equations in \eqref{eq: ac_details_equiv_I_2} should hold simultaneously.

Similarly, the decision variables we will focus on in \eqref{ac: t3_acopf} are $\tilde{c}_{3,ii}, \forall i\in V, \tilde{3}_{2,ik}, \tilde{s}_{3,ik}, \forall e=(i,k)\in E, e= (k,i)\in E$, $\tilde{\vtheta}_{3}, \tilde{\vv}_3$ and $|\tilde{\vI}_{3,f}|^2,|\tilde{\vI}_{3,t}|^2$. Then, the constraints \eqref{ac: t3_acopf} are similar to \eqref{eq: details_ac_t2_fdi} and \eqref{eq: ac_details_equiv_I_2}, with \eqref{eq: ac_no_attack_gen}-\eqref{eq: ac_limit_I_t2_fdi} changed into
\begin{subequations}
\allowdisplaybreaks
\begin{align}
&\vp_{g,min} \le  \vLambda_g \tilde{\vp}_3 \le \vp_{g,max}, \vq_{g,min} \le  \vLambda_g \tilde{\vq}_3 \le \vq_{g,max}, \\
&\vLambda_d(\tilde{\vp}_{3,i} - \tilde{\vp}_{2,i}) = 0, \quad \vLambda_d(\tilde{\vq}_{3,i} - \tilde{\vq}_{2,i}) = 0, \\
&\vLambda_g(\tilde{\vp}_{3,i} - \tilde{\vp}_{2,i}) = 0\\
&\vV_{min} \le \tilde{\vv}_3 \le \vV_{max},\theta_{min,e} \le \tilde{\theta}_{3,e} \le  \theta_{max,e}, \forall e\in E, \\
&|\tilde{\vI}_{3,f}| \le \vI_{max}, |\tilde{\vI}_{3,t}| \le \vI_{max}.
\end{align}
\end{subequations}

Following \cite{yang2018general}, the decision variables in \eqref{ac: pf_linearze_t3} are $\hat{v}_{3,i}^2, \hat{\theta}_{3,i}, \hat{p}_{3,i}, \hat{q}_{3,i}, \forall i\in V$, $\hat{\vp}_{f,3}\in \mathbb{R}^{|E|}, \hat{\vq}_{f,3}\in \mathbb{R}^{|E|}$ and $|\hat{\vI}_3|^2\in \mathbb{R}^{|E|}$. Next, we define $p_{f,3,e}^{L}$ and $q_{f,3,e}^{L}$ for $e=(i,k)\in E$ with $a_{p,e} = 0$ as follows:
\begin{subequations}
\begin{align}
p_{f,3,e}^{L} &= \tilde{G}_{L,e}\bigg( \hat{\theta}_{ik,0}\hat{\theta}_{3,ik} - \frac{\hat{\theta}_{ik,0}^2}{2} + \frac{\hat{v}_{i,0}- \hat{v}_{k,0}}{\hat{v}_{i,0}+ \hat{v}_{k,0}}(\hat{v}_{3,i}^2- \hat{v}_{3,k}^2) \nonumber \\
&\qquad \qquad \quad- \frac{(\hat{v}_{i,0}- \hat{v}_{k,0})^2}{2} \bigg) + \mathcal{R}(\tilde{Y}_{c,e}) \hat{v}_{3,i}^2\\
q_{f,3,e}^{L} &= -\tilde{B}_{L,e}\bigg( \hat{\theta}_{ik,0}\hat{\theta}_{3,ik} - \frac{\hat{\theta}_{ik,0}^2}{2} + \frac{\hat{v}_{i,0}- \hat{v}_{k,0}}{\hat{v}_{i,0}+ \hat{v}_{k,0}}(\hat{v}_{3,i}^2- \hat{v}_{3,k}^2) \nonumber \\
&\qquad \qquad \quad- \frac{(\hat{v}_{i,0}- \hat{v}_{k,0})^2}{2} \bigg) - \mathcal{I}(\tilde{Y}_{c,e}) \hat{v}_{3,i}^2,
\end{align}
\end{subequations}
where $\hat{v}_{ik,0}$ and $\hat{\theta}_{ik,0}$ are obtained from any base case system operating condition. In our work, we set it as $\hat{v}_{ik,0} = v_{2,ik}$ and $\hat{\theta}_{ik,0}=\theta_{2,ik}$ for each given $(\va_p,e_t)$. Then, we have three types of constraints in \eqref{ac: pf_linearze_t3}. Specifically, by appropriately setting $\eta_{3,p,i}$ and $\eta_{3,q,i}$ (see proof of Theorem~\ref{theorem: ac_thr_set_linear_guarantee} for details) to tolerate the approximation error, for each $i\in V$, we have
\begin{align}
-\eta_{3,p,i} \le \vD_{i}\hat{\vp}_{3,f} + \hat{v}_{3,i}^2 \sum_{k=1}^{|V|} \tilde{G}_{ik} - p_{0,i} \le \eta_{3,p,i}. \label{eq: ac_t3_meet_load_p}
\end{align}
For each $i\in V_d$, we have
\begin{align}
-\eta_{3,q,i} \le \vD_{i}\hat{\vq}_{3,f} - \hat{v}_{3,i}^2 \sum_{k=1}^{|V|} \tilde{B}_{ik} - \tilde{q}_{3,i} \le \eta_{3,q,i}, \label{eq: ac_t3_meet_load_q}.
\end{align}
For each $e=(i,k)\in E$ with $a_{p,e} = 0$, we have
\begin{subequations}
\allowdisplaybreaks
\begin{align}
&p_{f,3,e}=\tilde{G}_{L,e} \frac{\hat{v}_{3,i}^{2}-\hat{v}_{3,k}^{2}}{2}-\tilde{B}_{L,e} \hat{\theta}_{ik}+p_{f,3,e}^{L},\label{eq: ac_t3_linear_pf}\\
&q_{f,3,e}=-\tilde{B}_{L,e} \frac{\hat{v}_{3,i}^{2}-\hat{v}_{3,k}^{2}}{2}-\tilde{G}_{L,e} \hat{\theta}_{ik}+q_{f,3,e}^{L},\label{eq: ac_t3_linear_qf}\\
&\lrbrackets{ 1+2 \tilde{Z}_{R,e}\tilde{G}_{c,e} - 2 \tilde{Z}_{I,e}\tilde{B}_{c,e} } \hat{v}_{3,i}^2 - \hat{v}_{3,k}^2 = 2(\tilde{Z}_{R,e}\hat{p}_{3,f,e} \nonumber \\
&\qquad \qquad +\tilde{Z}_{I,e}\hat{q}_{3,f,e}) -|\tilde{Z}_e|^2\big(|\hat{\vI}_{3,f,e}|^2-2(\tilde{G}_{c,e}\hat{p}_{3,f,e} \nonumber \\
&\qquad \qquad \qquad- \tilde{B}_{c,e}\hat{q}_{3,f,e})+\left|\tilde{Y}_{c,e}\right|^{2} \hat{v}_{3,i}^2 \big)
\end{align}\label{eq: ac_t3_linear_If}
\end{subequations}

\section{Details of PMU Locations Obtained in PPOP}\label{appendix: pmu_locations}
{
Here, we present the location of PMUs obtained in the proposed PPOP. First, in Table~\ref{tab: loc_pmu_arbitrary}, we give the PMU locations according to the best proposed solution $\Omega_{\mbox{\tiny PPOP}}$ to PPOP, which is consistent with Table~\ref{tab: res_matpower_pmu}.
\begin{table}[htbp]
\small
\vspace{-1em}
\centering
\caption{PMU Locations of PPOP under DC Model}
\begin{tabular}{|c| c |}
\hline
                   & Location of PMUs \\
                   \hline
IEEE 30-bus system        & 15, 23 \\
\hline
IEEE 57-bus system &   12,13,25 \\
\hline
IEEE 118-bus system  &   17,34,37,42,49,72,85,100,118 \\
\hline
IEEE 300-bus system & \makecell[c]{8,20,22,34,38,43,44,48,49,54,64,68, \\74,77,79,89,90,94,99,109,119,132, \\138,152,185,190,203,216,221,270,271}  \\
\hline
\end{tabular}
\vspace{-.5em}
\label{tab: loc_pmu_arbitrary}
\end{table}

Then, in Table~\ref{tab: ac_pmu_constrained}, we present the PMU locations of the solution that can pass the test of Alg.~\ref{alg: check_AC} under AC power flow model, obtained by Alg.~\ref{alg: ac_augment_pmu_dc}. 
\begin{table}[htbp]
\small
\vspace{-1em}
\centering
\caption{PMU Locations of PPOP under AC Model}
\begin{tabular}{|c| c |}
\hline
                   & Location of PMUs \\
                   \hline
IEEE 30-bus system        & 5,15,23 \\
\hline
IEEE 57-bus system &   12,13,25 \\
\hline
IEEE 118-bus system  &  17,34,37,42,49,62,72,85,100,118 \\
\hline
IEEE 300-bus system & \makecell[c]{8,20,22,34,38,43,44,48,49,54,64,68,\\
74,77,79,81,89,90,94,99,109,119,132,\\
138,152,175,185,190,197,203,216,\\
221,270,271}  \\
\hline
\end{tabular}
\vspace{-.5em}
\label{tab: ac_pmu_constrained}
\end{table}


}

\section{More Discussion of Related Works}\label{append: related_works_more}
{
\textbf{Different defense techniques against CCPA/FDI: }Following \cite{dibaji2019systems}, we classify defense techniques against CCPAs into the following categories:
\begin{enumerate}
    \item \textbf{Prevention:} Due to the requirements of network information and measurements, prevention methods defend against CCPAs by reducing or postponing the information leakage. Moving target defense (MTD) approach \cite{lakshminarayana2021moving} is a typical technique in this category. Specifically, MTD methods will strategically impose random change to network components (such as line admittance) to mislead the attacker. The CCPAs with falsified network parameters have a higher chance to be detected. Another typical method in this category is dynamic watermarking \cite{porter2020detecting}, which shares a similar spirit of MTD. 
    
    \item \textbf{Detection:} The methods in this category manage to detect the existence of attacks under some assumption on information exposure and attack capability. Traditional BDD is one of the approaches in this category. There are some advanced detection techniques, such as low rank-based detection \cite{liu2014detecting}. Securing measurements or deploying PMUs can also be used for detection. Specifically, an attack that tries to alter the measurements secured by PMUs will be detected by the control center. However, to achieve full detection, full observability by PMUs is required.
    
    \item \textbf{Resilience:} It is critical to keep the system stable when there exist CCPAs that can bypass the detection. In other words, resilience approaches aim at limiting the impact of the attacks. Game-theoretic methods can be regarded as typical ones, such as the budget-constrained formulations in \cite{wu2016efficient, xiang2017game, yao2007trilevel, yuan2016robust, tian2019multilevel}. Our solution lies in this category.
\end{enumerate}
}

\section{Additional Proofs}\label{appendix: additional_proof}
\begin{proof}[Theorem~\ref{theo: NP-hardness}]
We will reduce the dominating set problem to PPOP. Given a graph $\mathcal{N}=(V,E)$, the dominating set problem aims to find a minimum set of vertices $V_1\in V$ such that $\forall u\in V\setminus V_1$, $u$ has at least one neighbor in $V_1$. The dominating set problem is known to be NP-hard. Notice that given the grid $\mathcal{N} = (V,E)$ the parameters for the proposed problem \eqref{eq:PMU_placement}-\eqref{eq:reformulate_attacker} are $\vp_0, \vGamma, \xi_p, \xi_c, \valpha$ and $\{\gamma_e\}_{e\in E}$. We will prove for any given connected grid and the associated dominating set problem, there exists a parameter setting for the proposed problem such that $V_1$ is a minimal dominating set if and only if $V_1$ is an optimal solution to \eqref{eq:PMU_placement}, i.e., $\forall u \in V, x_{N,u} = 1$.

Given any $\vp_0$, suppose that $\vtheta_0$ 
is the associated phase angle without attack, i.e., $\vp_0 = \tilde{\vB}\vtheta_0$, and $\hat{\vtheta}_0$ is the the solution to \eqref{eq:reformulate_OPF}, i.e., $\hat{\vtheta}_0 = \psi_s(\vp_0,\tilde{\vD})$, which gives $\hat{\vf}_0 := \vGamma\tilde{\vD}^T\hat{\vtheta}_0$.

Then, we set $\vp_0 = \bm{0}$, $\xi_p = 0$, $\xi_c = \infty$, $\alpha = \infty$ and $\vGamma$ as identity matrix, which results in $\vtheta_0 = \hat{\vtheta}_0 = \bm{0}$ and $\hat{\vf}_0 = \bm{0}$. In addition, we set $\gamma_e = 0, \forall e\in E$, which transform \eqref{eq: attack_overflow} to
\begin{align}
    |\Gamma_e\vd_e^T\vtheta_3| = 0 \leftrightarrow \pi_e = 0.
\end{align}
Next, we show by contradiction that $|\Gamma_e\vd_e^T\vtheta_3| = 0$ holds for all $e\in E$ only if $\tilde{\vtheta}_2 = \bm{0} = \vtheta_0$. Suppose $\tilde{\vtheta}_2 \ne \bm{0}$, we must have $\tilde{\vB}\tilde{\vtheta}_2 \ne \bm{0}$, which leads to $\bm{0}\ne \tilde{\vtheta}_3 = \psi_s(\tilde{\vB}\tilde{\vtheta}_2,\tilde{\vD})$ and thus $\vLambda_g\tilde{\vB}\tilde{\vtheta}_3 \ne \bm{0}$ due to the constraint \eqref{eq: zero_ref_theta}. The non-zero $\vLambda_g\tilde{\vB}\tilde{\vtheta}_3$ implies that $\exists e\in E$ such that $\Gamma_e\vd_e^T\vtheta_3 \ne 0$. That is to say, the constraint \eqref{eq: PMU_no_overflow} holds only when $\tilde{\vtheta}_2 = \vtheta_0 = \bm{0}$, which indicates{ that the defender has to place PMUs to guarantee that the only feasible solution to \eqref{eq:reformulate_attacker} is $\va_c = \bm{0}$. In another word, $\vbeta$ needs to satisfy $\forall u \in V, x_{N,u} = 1$, which completes the proof.}
\end{proof}

\begin{proof}[Theorem~\ref{thm:Optimality of alternating optimization}]
First, we introduce some definitions:  $\mathcal{B} := \{ \vbeta| \psi_a(\vbeta) = 0\}$ denotes the set of feasible solutions, $\mathcal{B}^c := \{ \vbeta| \psi_a(\vbeta) \ge 1 \}$ the infeasible solutions, $\mathcal{M}(\mathcal{B}^c) := \{ \vbeta| (\vbeta, \vbeta' \in \mathcal{B}^c) \wedge (\vbeta'\ge \vbeta) \rightarrow (\vbeta'=\vbeta) \}$ the maximal infeasible solutions, and $\mathcal{\vP}:= \{ \check{\vbeta} \in [0,1]^{|V|}| \forall \vbeta\in \mathcal{M}(\mathcal{B}^c):\sum_{u:\beta_u=0}\check{\beta}_u \ge 1\}$ the polytope excluding all the maximal infeasible solutions. 

Then, based on the results in \cite{vielma2015mixed}, we have the following characterization:
\begin{lemma}\label{lem: facet}
The following statements hold: (i) $\mathcal{\vP}\cap \{0,1\}^{|V|} = \mathcal{B}$; (ii) $\forall \vbeta'\in \mathcal{M}(\mathcal{B}^c)$, $\sum_{u:\beta_u'=0}\beta_u \ge 1$ defines a facet of $\mathcal{\vP}$. 
\end{lemma}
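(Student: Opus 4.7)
The plan is to prove the two parts in turn, with Lemma~\ref{lem: ob1} (the downward closure of $\mathcal{B}^c$ in the componentwise partial order on $\{0,1\}^{|V|}$) as the main lever throughout.

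For part~(i), I would argue both inclusions directly from Lemma~\ref{lem: ob1}. If $\vbeta\in\mathcal{B}$ violated the no-good inequality induced by some $\vbeta'\in\mathcal{M}(\mathcal{B}^c)$, i.e., $\sum_{u:\beta'_u=0}\beta_u=0$, then $\Omega(\vbeta)\subseteq\Omega(\vbeta')$ and Lemma~\ref{lem: ob1} would force $\vbeta\in\mathcal{B}^c$, a contradiction; hence $\mathcal{B}\subseteq\mathcal{P}\cap\{0,1\}^{|V|}$. Conversely, given an infeasible binary $\vbeta\in\mathcal{B}^c$, I greedily augment $\Omega(\vbeta)$ one node at a time as long as the result remains in $\mathcal{B}^c$; by finiteness of $\{0,1\}^{|V|}$ this terminates at some $\vbeta'\in\mathcal{M}(\mathcal{B}^c)$ with $\vbeta'\ge\vbeta$, and the no-good cut induced by this $\vbeta'$ is violated at $\vbeta$, so $\vbeta\notin\mathcal{P}$.

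For part~(ii), I would first verify that $\mathcal{P}$ is full-dimensional, using that $\bm{1}\in\mathcal{B}$ (placing a PMU on every node is enough to prevent every attack, which is consistent with the full-observability-based defense the paper compares against), so a small inward shrink of $\bm{1}$ serves as a relative-interior point. To show that $\sum_{u\in U}\beta_u\ge 1$ with $U:=\{u:\beta'_u=0\}$ is facet-defining, I then exhibit $|V|$ affinely independent points of $\mathcal{P}$ on its face $F=\{\vbeta\in\mathcal{P}:\sum_{u\in U}\beta_u=1\}$. The canonical binary ones are $\vbeta'+\ve_u$ for each $u\in U$, which by the maximality of $\vbeta'$ lie in $\mathcal{B}$, hence in $\mathcal{P}$ by part~(i), and are tight on the cut; this already supplies $|U|$ affinely independent points on $F$. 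The remaining $|V\setminus U|$ independent directions I would get by perturbing $\vbeta'+\ve_{u_0}$ for a fixed $u_0\in U$ along $-\ve_v$ for each $v\in V\setminus U$, using the key observation that by the maximality of $\vbeta'$ together with Lemma~\ref{lem: ob1}, $\vbeta'$ is not dominated by any other element of $\mathcal{M}(\mathcal{B}^c)$, so its slack at every other no-good cut is integer-valued and at least one, leaving room for a sufficiently small perturbation to remain in $\mathcal{P}$ while staying on the hyperplane $\sum_{u\in U}\beta_u=1$ and inside $[0,1]^{|V|}$.

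The main obstacle is the last construction in part~(ii): the slack-based argument breaks if some $\vbeta''\in\mathcal{M}(\mathcal{B}^c)$ has zero-set equal to a single node $\{v\}\subseteq V\setminus U$, because the associated cut $\beta_v\ge 1$ together with $\beta_v\le 1$ forces $\beta_v=1$ throughout $\mathcal{P}$ and rules out any $-\ve_v$ perturbation. In that regime the affine hull of $\mathcal{P}$ itself already contains equalities of the form $\beta_v=1$, and the count of free coordinates drops accordingly, so ``facet'' must be understood relative to $\dim(\mathcal{P})$. I plan to handle this by first identifying and quotienting out the coordinates that are pinned by singleton-type cuts, then running the affine-independence argument in the reduced ambient space; alternatively, I would invoke the standard covering-polyhedron facet characterization from \cite{vielma2015mixed} after verifying that $\mathcal{B}^c$ is a down-monotone subset of the Boolean cube, which is precisely the content of Lemma~\ref{lem: ob1}.
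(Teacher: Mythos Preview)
Your proof of part~(i) is essentially the paper's: both inclusions follow directly from Lemma~\ref{lem: ob1} together with the fact that every infeasible binary vector lies below some maximal one.

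For part~(ii) the paper takes a much shorter route. Rather than exhibiting affinely independent points on the face, it simply shows that each no-good cut is \emph{non-redundant}: if the cut induced by $\vbeta'\in\mathcal{M}(\mathcal{B}^c)$ is removed, then $\vbeta'$ itself enters the relaxed polytope, because every other $\vbeta''\in\mathcal{M}(\mathcal{B}^c)$ satisfies $\sum_{u:\beta''_u=0}\beta'_u\ge 1$ by incomparability of distinct maximal elements. The paper then leans on the covering-polytope theory of \cite{vielma2015mixed} (as flagged just before the lemma statement) to pass from non-redundancy to ``facet''.

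Your affine-independence construction is more self-contained in spirit, but the perturbation step has a genuine gap. The key claim---that $\vbeta'$ has slack at least one at every other no-good cut---is false: incomparability only gives $\sum_{u:\beta''_u=0}\beta'_u\ge 1$, i.e.\ slack $\ge 0$, and slack exactly zero does occur. Concretely, with $\mathcal{M}(\mathcal{B}^c)=\{(0,0,1),(1,0,0),(0,1,0)\}$ and $\vbeta'=(0,0,1)$, neither $\vbeta'+\ve_1-\epsilon\ve_3=(1,0,1-\epsilon)$ nor $\vbeta'+\ve_2-\epsilon\ve_3=(0,1,1-\epsilon)$ lies in $\mathcal{P}$ for any $\epsilon>0$ (each violates one of the other two cuts), so no choice of $u_0\in U$ makes the $-\ve_3$ perturbation work. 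The face $\beta_1+\beta_2=1$ \emph{is} a facet here---e.g.\ $(1,0,1)$, $(0,1,1)$, $(0.5,0.5,0.5)$ are three affinely independent points on it---but they are not of the form you propose. This is a different obstruction from the singleton-cut dimensionality issue you flagged. Your stated fallback of invoking the covering-polyhedron facet characterization from \cite{vielma2015mixed}, after noting that $\mathcal{B}^c$ is down-closed by Lemma~\ref{lem: ob1}, does close the gap, and at that point you are essentially doing what the paper does.
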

\begin{proof}
To prove statement (i), we first prove that $\mathcal{B}\subseteq (\mathcal{P}\cap \{0,1\}^{|V|})$ by contradiction. Suppose $\exists \vbeta_1\in \mathcal{B}$ but $\vbeta_1\notin \mathcal{P}$. Then by definition of $\mathcal{P}$, there must exist $\vbeta_1'\in \mathcal{B}^c$ such that $\sum_{u:\beta_{1,u}'=0}\beta_{1,u} = 0$, which implies $\Omega(\vbeta_1) \subseteq \Omega(\vbeta_1')$. By Lemma~\ref{lem: ob1}, we must have $\vbeta_1\in \mathcal{B}^c$, which contradicts with the assumption that $\vbeta_1\in \mathcal{B}$. Thus, $\mathcal{B}\subseteq (\mathcal{P}\cap \{0,1\}^{|V|})$. Then, we prove $(\mathcal{P}\cap \{0,1\}^{|V|}) \subseteq \mathcal{B}$ by contradiction. Suppose there exists $\check{\vbeta} \in (\mathcal{P}\cap \{0,1\}^{|V|})$ but $\check{\vbeta} \notin \mathcal{B}$, which implies that $\check{\vbeta} \in \mathcal{B}^c$. That is to say, $\exists \check{\vbeta}' \ge \check{\vbeta}$ such that $\check{\vbeta}'\in \mathcal{M}(\mathcal{B}^c)$. Then by definition of $\mathcal{P}$, we have $\sum_{u:\check{\beta}'_{u}=0} \check{\beta}_u \ge 1$. However, since $\check{\vbeta}'\ge \check{\vbeta}$, $\forall u:\check{\beta}_u = 0$, we must have $\check{\beta}_u = 0$ and leads to $\sum_{u:\check{\beta}'_{u}=0} \check{\beta}_u = 0$, which introduces contradiction. In summary, $\mathcal{\vP}\cap \{0,1\}^{|V|} = \mathcal{B}$.

We then prove statement (ii) by contradiction, i.e., $\exists \check{\vbeta}' \in \mathcal{M}(\mathcal{B}^c)$ such that when we remove the inequality $\sum_{u:\check{\beta}'_{u} = 0} \beta_u \ge 1$ from $\mathcal{P}$, we still have $\mathcal{P}$. By definition of $\mathcal{M}(\mathcal{B}^c)$, we must have $\check{\vbeta}' \in \mathcal{B}^c$, which implies $\sum_{u:\check{\beta}'_{u} = 0} \check{\beta}_u' = 0$, i.e., $\check{\vbeta}' \notin \mathcal{P}$. That is to say, there exists some inequality to cut $\check{\vbeta}'$ out from $\mathcal{P}$, i.e., $\exists\check{\vbeta}'' \in \mathcal{M}(\mathcal{B}^c)$ and $\check{\vbeta}'' \neq \check{\vbeta}'$ such that $\sum_{u:\check{\beta}''_{u} = 0} \check{\beta}_u' = 0$. Notice that $\forall u: (\check{\beta}_u'' =0) \rightarrow (\check{\beta}_u' =0)$, which implies $\Omega(\check{\vbeta}')\subseteq \Omega(\check{\vbeta}'')$. By definition of $\mathcal{M}(\mathcal{B}^c)$, we must have $\check{\beta}_u'' = \check{\beta}_u'$, which contradicts with $\check{\beta}_u'' \neq \check{\beta}_u'$ and completes the proof.
\end{proof}


We now prove Theorem~\ref{thm:Optimality of alternating optimization} based on Lemma~\ref{lem: facet}. 
%
First notice that each $\hat{\vbeta} \in \mathcal{B}^c$ will be enumerated at most once in Alg.~\ref{alg: alter_opt_v1} due to the ``no-good'' constraints, and hence the algorithm will converge in finite time. 
Then, consider an arbitrary $\hat{\vbeta}'$ obtained through \eqref{eq: opt_max_nogood_but}. The generated ``no-good'' constraint $\sum_{i:\hat{\beta}'_i=0}\beta_i \ge 1$ must be satisfied by all the feasible solutions in $\mathcal{B}$, as any PMU placement violating this constraint must be infeasible according to Lemma~\ref{lem: ob1}. Finally, for any $\vbeta_1,\vbeta_2\in\mathcal{B}$ with $\lp{\vbeta_1}_0 < \lp{\vbeta_2}_0$, $\vbeta_1$ will be found by Alg.~\ref{alg: alter_opt_v1} before $\vbeta_2$, since each guess of PMU placement is obtained by minimizing $\|\vbeta\|_0$ in \eqref{eq:PMU_placement}, which completes the proof.
\end{proof}

\begin{proof}[Theorem~\ref{thm:Optimality of AODC}]
As Alg.~\ref{alg: alter_opt_v1} always returns a feasible solution that defends against all attack pairs, we only need to prove that the solution $\vbeta_1$ returned by AODC requires the minimum number of PMUs. 
We will prove this by contradiction. 
Suppose that there exists $\vbeta_2$ such that $\lp{\vbeta_2}_0 < \lp{\vbeta_1}_0$ and $\psi_a(\vbeta_2) = 0$. Then $\vbeta_2$ must be feasible to 
the instance of \eqref{eq: vcg_formulation} constructed based on the attack pairs $\{(\va_p^{(k)},e^{(k)})\}_{k=1}^K$ and the maximal infeasible solutions $\{\hat{\vbeta}^{'(k)}\}_{k=1}^K$ found by AODC as it defends against all attacks. This contradicts with the fact that $\vbeta_1$ is optimal to \eqref{eq: vcg_formulation}.\looseness=-1
\end{proof}

\begin{proof}[Lemma~\ref{lem: lp_relax_attack_denial}]
We first observe that $\vx_N$ and $\vx_L$ are unique under the constraints \eqref{eq: const_beta_x_N}-\eqref{eq: const_beta_x_L}. Thus, we will use $\vx_N(\vbeta)$ and $\vx_L(\vbeta)$ to denote the values of $\vx_N$ and $\vx_L$ satisfying \eqref{eq: const_beta_x_N}-\eqref{eq: const_beta_x_L} for a given $\vbeta \in \{0,1\}^{|V|}$.

For a given attack pair $(\va_p,e)$, $(\check{\vq}_1, \check{\vq}_2, \check{\vbeta})$ can be feasible to \eqref{eq: attack_denial_lp} in two different cases. The first case is that 
\begin{align}
\sum_{a_{p,e} = 1} x_{L,e}(\ceil{\check{\vbeta}}) \ge 1,
\end{align}
which makes $(\vq_1= \bm{0}, \vq_2= \bm{0}, \ceil{\check{\vbeta}}, \vx_N(\ceil{\check{\vbeta}}), \vx_L(\ceil{\check{\vbeta}}))$ feasible for \eqref{eq: attack_denial_constr} with $w_{a} = 1$.

The second case is that $x_{L,e}(\ceil{\check{\vbeta}}) = 0$ for all $e$ with $a_{p,e} = 1$, in which case we must have $(\check{\vq}_1, \check{\vq}_2, \ceil{\check{\vbeta}}, \vx_N(\ceil{\check{\vbeta}}), \vx_L(\ceil{\check{\vbeta}}))$ feasible to \eqref{eq: attack_denial_constr} with $w_a = 0$. To prove this, we only need to show that 
\begin{align}
\lrbrackets{\vF_3\vx_N(\ceil{\check{\vbeta}})}^T \check{\vq}_2 \le \vF_3 \check{\vq}_2.
\end{align}
According to \eqref{eq: expand_primal_ineq}, $F_{3,i,u}$ is either $0$ or $-M_{\theta}$, which together with the fact that $\vx_{N,u}(\ceil{\check{\vbeta}}) \ge 0$ and $\check{q}_{2,i} \ge 0$ implies that 
\begin{align}
\lrbrackets{\vF_3\vx_N(\ceil{\check{\vbeta}})}^T \check{\vq}_2 &= \sum_{u\in V} \vx_{N,u}(\ceil{\check{\vbeta}}) \left( \sum_{i=1}^{m_2} F_{3,i,u} \check{q}_{2,i} \right) \label{eq: appendix_51}\\
& \le \sum_{u\in V} \bm{1} \left( \sum_{i=1}^{m_2} F_{3,i,u} \check{q}_{2,i} \right) = \vF_3 \check{\vq}_2,
\end{align}
which completes the proof. 
\end{proof}

\begin{proof}[Theorem~\ref{theo: polynomial_time_heuristic}]
Under the assumption of $\xi_p = \mathcal{O}(1)$, the number of possible attack pairs is $|E|\left(\sum_{i=1}^{\xi_p}\binom{|E|}{i}\right) \le \xi_p |E|^{\xi_p+1} = \mathcal{O}\lrbrackets{|E|^{\xi_p+1}}$. Therefore, the time complexity of solving \eqref{eq:reformulate_attacker} for a given $\vbeta$ is polynomial in $|E|$ and $|V|$, since in the worst case \eqref{eq:reformulate_attacker} can be solved by checking the feasibility of \eqref{eq: LP_given_pair} for all the $\mathcal{O}\lrbrackets{|E|^{\xi_p+1}}$ attack pairs.  \looseness=-1

We first characterize the complexity of Alg.~\ref{alg: primitive_find_candidate}. {Since each candidate placement $\Omega_i$ either has one more node or can defend against all attack pairs in $\mathcal{A}$ after one iteration of the while loop, Alg.~\ref{alg: primitive_find_candidate} converges within $|V|$ iterations. Each iteration of Alg.~\ref{alg: primitive_find_candidate} is dominated by solving \eqref{eq: vcg_LP_relax} (Line~\ref{line: alg3_sol_LP}) for at most $K_c$ times. Since the numbers of variables and constraints of \eqref{eq: vcg_LP_relax} are both $\mathcal{O}((|E|+|V|)|\mathcal{A}|)$ and $|\mathcal{A}|=\mathcal{O}\lrbrackets{|E|^{\xi_p+1}}$, the complexity of solving \eqref{eq: vcg_LP_relax} is polynomial \footnote{The exact order depends on the specific algorithm used to solve LP \cite{terlaky2013interior}.} in $|V|$ and $|E|$. In summary, the complexity of Alg.~\ref{alg: primitive_find_candidate} is polynomial in $|V|,\: |E|$, and $K_c$ since it solves a polynomial-sized LP for at most $K_c|V|$ times.} It is worth noting that the effect of $K_A$ and $K_L$ in Alg.~\ref{alg: primitive_find_candidate}'s complexity is dominated by $|V|$ and $|E|$. To see this, we note that $K_L$ only appears in Line~\ref{line: alg3_candidate_phy_attack} of Alg.~\ref{alg: primitive_find_candidate}, in which we must have $K_L \le |E|$. Then, $K_A$ only appears in Line~\ref{line: alg3_up_Q} of Alg.~\ref{alg: primitive_find_candidate}, in which we must have $K_A \le |V|$. Thus, we do not consider the effect of $K_A$ and $K_L$ in Alg.~\ref{alg: primitive_find_candidate}'s computational complexity.

The complexity of Alg.~\ref{alg: heuristic_3_phase} comes from: (\romannumeral1) solving \eqref{eq:reformulate_attacker} $\mathcal{O}(|E|^{\xi_p+1})$ times (Line~\ref{Line: ph1_atttacker_v1} and Line~\ref{Line: attack_generate_phase3}); (\romannumeral2) solving \eqref{eq: vcg_LP_relax} for $|\mathcal{A}_0|=\mathcal{O}(|E|^{\xi_p+1})$ times (Line~\ref{Line: ph1_vcg_lp}), {each of which deals with an LP containing $\mathcal{O}((|E|+|V|)|\mathcal{A}_0|)$ variables and constraints and thus takes polynomial time}; (\romannumeral3) calling Alg.~\ref{alg: primitive_find_candidate} at Line~\ref{Line: init_candidates_lp} for $1$ time and at Line~\ref{Line: uc_A3} for $\mathcal{O}(|E|^{\xi_p+1})$ times, whose complexity is polynomial in $|V|,\: |E|$, and $K_c$. In summary, Alg.~\ref{alg: heuristic_3_phase} is a polynomial-time algorithm in terms of $|V|$, $|E|$, and $K_c$.
\end{proof}

{
\begin{proof}[Theorem~\ref{theorem: ac_thr_set_linear_guarantee}]
According to \cite{yang2018general, coffrin2015distflow} and \eqref{eq: ac_t3_linear_If}, we have
\begin{align}\label{appendix: I_hat_3}
|\hat{I}_{3,f,e}|^2 &= \frac{1}{|Z_e|^2}\big( 2(\tilde{Z}_{R,e}\hat{p}_{3,f,e}+\tilde{Z}_{I,e}\hat{q}_{3,f,e}) + \hat{v}_{k}^2 - \nonumber\\
&(1+2\tilde{Z}_{R,e}\tilde{G}_{c,e} - 2\tilde{Z}_{I,e}\tilde{B}_{c,e})\hat{v}_{i}^2\big) + 2(\tilde{G}_{c,e}\hat{p}_{3,f,e}- \nonumber \\
&\tilde{B}_{c,e}\hat{q}_{3,f,e}) - |\tilde{Y}_{c,e}|^2\hat{v}_{i}^2
\end{align}
for each $e = (i,k)\in E$ with $a_{p,e} = 0$. Based on \eqref{appendix: I_hat_3} and the assumption on $\vepsilon_{\theta} = (\epsilon_{\theta,u})_{u\in V}$, $\vepsilon_{\vv} = (\epsilon_{v,u})_{u\in V}$, $\vepsilon_{\vp} = (\epsilon_{p,e})_{e\in E}$ and $\vepsilon_{\vq} = (\epsilon_{q,e})_{e\in E}$, we can easily derive an upperbound $\epsilon_{I,e}\ge ||\hat{I}_{3,e}| - |I_{3,e}| |, \forall e\in E$. Specifically, we can set
\begin{align}
\epsilon_{I,e}&:=\frac{1}{|Z_e|^2}\big( 2(|\tilde{Z}_{R,e}|\epsilon_{p,e}+|\tilde{Z}_{I,e}|\epsilon_{q,e}) + \epsilon_{v,i}^2 + \nonumber\\
&|1+2\tilde{Z}_{R,e}\tilde{G}_{c,e} + 2\tilde{Z}_{I,e}\tilde{B}_{c,e}|\epsilon_{v,i}^2\big) + 2(|\tilde{G}_{c,e}|\epsilon_{p,e}+ \nonumber \\
&|\tilde{B}_{c,e}|\epsilon_{q,e}) + |\tilde{Y}_{c,e}|^2\epsilon_{v,i}^2.
\end{align}

If there exists an successful attack pair $(\va_p,e)$ that cannot be found by Alg.~\ref{alg: check_AC} for a given PMU placement, we must have one of the following cases:
\begin{enumerate}
    \item There exists $\tilde{\vv}_2, \tilde{\vtheta}_2$ such that $|I_{3,e}| > \gamma_e I_{max,e}$. In the meantime, at least one of \eqref{eq: ac_t3_meet_load_p} and \eqref{eq: ac_t3_meet_load_q} are violated.
    \item Let $|\hat{I}_{3,f,e}^*|$ be the optimal solution of \eqref{eq: ac_formulate_overload}. There exists $\tilde{\vv}_2^{(1)}, \tilde{\vtheta}_2^{(1)}, \tilde{\vv}_3^{(1)}, \tilde{\vtheta}_3^{(1)}$ such that $|I_{3,e}^{(1)}| > \gamma_e I_{max,e}$. Let $|\hat{I}_{3,f,e}^{(1)}|$ be the corresponding approximated solution for $\tilde{\vv}_2^{(1)}, \tilde{\vtheta}_2^{(1)}, \tilde{\vv}_3^{(1)}, \tilde{\vtheta}_3^{(1)}$. Then we must have $\hat{I}_{max,e} \ge |\hat{I}_{3,f,e}^*|\ge |\hat{I}_{3,f,e}^{(1)}|$.
\end{enumerate}
We first show that the case one can be avoided if we properly set $\eta_{3,p,i}$ in \eqref{eq: ac_t3_meet_load_p} and $\eta_{3,q,i}$ in \eqref{eq: ac_t3_meet_load_q}. Specifically, according to \eqref{eq: ac_t3_meet_load_p}, we must have 
\begin{align}
\vD_{i}\hat{\vp}_{3,f} + \hat{v}_{3,i}^2 \sum_{k=1}^{|V|} \tilde{G}_{ik} - p_{0,i} \le \eta_{3,p,i} 
\end{align}
if we set 
\begin{align}
\eta_{3,p,i} \ge (\Delta_{ii}-1)\epsilon_{p,i} + |\sum_{k=1}^{|V|} \tilde{G}_{ik}| \epsilon_{v,i},
\end{align}
where $(\Delta_{ii}-1)$ denotes the number of neighbors of node $i$ as defined in \eqref{eq: const_beta_x_N}. Similarly, we can define $\eta_{3,q,i}$ to avoid the first case. Then, we will show how to set $\hat{I}_{max,e}$ so that the second case will not happen. In case two, we must have 
\begin{align}\label{appendix: derive_I_hat_max}
\hat{I}_{max,e} \ge |\hat{I}_{3,f,e}^*|\ge |\hat{I}_{3,f,e}^{(1)}| \ge |I_{3,e}^{(1)}|  - \epsilon_{I,e} > \gamma_e I_{max,e} -\epsilon_{I,e}
\end{align}
Thus, if we set $\hat{I}_{max,e} \le \gamma_e I_{max,e} -\epsilon_{I,e}$, \eqref{appendix: derive_I_hat_max} cannot hold, which rules out the possibility of case two. In summary, by properly setting $\eta_{3,p,i}, \eta_{3,q,i}$ and set $\hat{I}_{max,e} \le \gamma_e I_{max,e} -\epsilon_{I,e}$, a PMU placement that can pass the test of Alg.~\ref{alg: check_AC} will achieve our defense goal. 
\end{proof}
}

\end{appendices}
\fi

\end{document}